\newcommand{\R}{\mathbb{R}}
\newcommand{\N}{\mathbb{N}}
\newcommand{\Z}{\mathbb{Z}}
\newcommand{\bc}{\mathbf{c}}
\newcommand{\U}{\mathbf{U}}
\newcommand{\bLambda}{\mathbf{\Lambda}}
\newcommand{\rinitf}[1]{r_{{#1},M}}
\newcommand{\gammaij}{\gamma_{g_i,g_j}}
\newcommand{\gammaji}{\gamma_{g_j,g_i}}
\newcommand{\mE}[2]{F_{#1,#2}}
\newcommand{\m}{\hat{\mu}}
\newcommand{\Mld}[1]{\mathscr{M}_{\infty}({#1})\cap\ell_2(\mathbb{Z})}
\newcommand{\dubar}{ \underline{\delta}}
\newcommand{\vertiii}[1]{{\left\vert\kern-0.25ex\left\vert\kern-0.25ex\left\vert #1\right\vert\kern-0.25ex\right\vert\kern-0.25ex\right\vert}}
\newtheorem{lem}{Lemma}
\newtheorem{thm}{Theorem}
\newtheorem{prop}{Proposition}
\newtheorem{cor}{Corollary}
\newif\ifshow 
\newif\iftwofiles 
\newcommand{\blind}{0}
\begin{document}

\def\spacingset#1{\renewcommand{\baselinestretch}%
{#1}\small\normalsize} \spacingset{1}

\iftwofiles\begin{bibunit}\fi
\if0\blind
{
  \title{\bf Multivariate moment least-squares variance estimators for reversible Markov chains}
  \author{Hyebin Song\thanks{Both authors contributed equally. The authors gratefully acknowledge NSF support DMS-2311141.} \, and Stephen Berg\footnotemark[1]  \thanks{Corresponding author: sqb6128@psu.edu} \\
Department of Statistics, Pennsylvania State University
}
  \maketitle
} \fi

\if1\blind
{
  \bigskip
  \bigskip
  \bigskip
  \begin{center}
    {\LARGE\bf Title}
\end{center}
  \medskip
} \fi

\bigskip
\begin{abstract}
Markov chain Monte Carlo (MCMC) is a commonly used method for approximating expectations with respect to probability distributions. Uncertainty assessment for MCMC estimators is essential in practical applications. Moreover, for multivariate functions of a Markov chain, it is important to estimate not only the auto-correlation for each component but also to estimate cross-correlations, in order to better assess sample quality, improve estimates of effective sample size, and use more effective stopping rules. \citet{berg2022efficient} introduced the moment least squares (momentLS) estimator, a shape-constrained estimator for the autocovariance sequence from a reversible Markov chain, for univariate functions of the Markov chain. Based on this sequence estimator, they proposed an estimator of the asymptotic variance of the sample mean from MCMC samples. In this study, we propose novel autocovariance sequence and asymptotic variance estimators for Markov
chain functions with multiple components, based on the univariate momentLS estimators from \citet{berg2022efficient}. We demonstrate strong consistency of the proposed auto(cross)-covariance
sequence and asymptotic variance matrix estimators. We conduct empirical comparisons of our method with other state-of-the-art approaches on simulated and real-data examples, using popular samplers including the random-walk Metropolis sampler and
the No-U-Turn sampler from STAN. Supplemental materials for this article are available online.
\end{abstract}

\noindent%
{\it Keywords:}  Markov chain Monte Carlo, shape-constrained inference, cross-covariance sequence estimation, asymptotic variance
\vfill

\newpage
\spacingset{1.5} 
\section{Introduction}
In modern statistical applications, it is often necessary to analyze intractable probability distributions, for which computing key summary statistics such as expectations or quantiles analytically is challenging. Markov chain Monte Carlo (MCMC) methods offer a practical approach for drawing samples from these intractable distributions. An MCMC experiment involves generating a sequence of random variables $X = (X_0, X_1,\dots)$, such that the distribution of $X_t$ converges to the intractable target distribution $\pi$. This simulated chain $X$ can then be used to estimate the quantity of interest, such as the expectation with regard to the target distribution.  Suppose we have a function $g: \mathsf{X} \to \R^d$ where $d\ge 1$, for which we want to approximate $\mu_g = \int g(x) \pi(dx)$, where $\pi$ is intractable. One can generate a Markov chain $X$ of length $M$ with invariant distribution $\pi$, and use the MCMC estimator $\hat{\mu}_{gM} = M^{-1} \sum_{t=0}^{M-1} g(X_t)$ to estimate $\mu_g$. 

The accuracy of an MCMC estimator can be quantified by its asymptotic variance. Under mild conditions, a Markov chain central limit theorem holds~\citep[e.g.,][]{meyn2009markov}, yielding
\begin{align}\label{eq:Mult_CLT}
    \sqrt{M}(\hat{\mu}_{gM} - \mu_g) \overset{d}{\to} N(0, \Sigma)
\end{align} as $M\to\infty$, where $\Sigma\in\mathbb{R}^{d\times d}$ is the asymptotic variance of the empirical mean $\hat{\mu}_{gM}$.
Due to correlations in $(X_t)_{t\ge 0}$, $\Sigma$ usually differs from $\textrm{Var}_{\pi} (g(X))$. 
In fact, $\Sigma$ aggregates the covariances across all lags, i.e.,
\begin{align}\label{eq:avar_sum}
    \Sigma = \sum_{k=-\infty}^\infty \textrm{Cov}_\pi (g(X_0), g(X_{k})).
\end{align}
where we define $\textrm{Cov}_\pi (g(X_0), g(X_{k})) = \textrm{Cov}_\pi (g(X_{|k|}), g(X_{0}))$ for $k<0$.

Historically, the assessment of uncertainty in MCMC estimates has often been conducted on a component-wise basis. Let $\hat{\mu}_{gM,j}$ and $\mu_{gj}$ denote the $j$th components of $\hat{\mu}_{gM}$ and $\mu_{g}$, for $j=1,...,d$, and let $\Sigma_{ij}$ denote the $i,j$ element of $\Sigma$. Componentwise uncertainty assessments amounts to considering univariate CLT results $\sqrt{M}(\hat{\mu}_{gM,j} - \mu_{gj}) \overset{d}{\to}N(0, \Sigma_{jj})$ for $j=1,\dots,d$, and therefore focusing only on the diagonals of the asymptotic variance $\Sigma$. In particular, the quantity $\sqrt{\Sigma_{jj}/M}$  is called the MCMC standard error of $\hat{\mu}_{gM,j}$. Estimates of these MCMC standard errors have been used to determine effective sample sizes, construct confidence intervals, and implement stopping criteria to achieve a predefined level of precision for each component in the Markov chain \citep{heidelberger1981spectral,jones2006fixed}. 

However, by accounting for dependencies among the components of functions of a Markov chain, we can gain a more comprehensive understanding of the overall quality of the Markov chain estimates and formulate a more effective stopping criterion \citep{dai2017multivariate,vats2019multivariate}. For instance, using an estimate of the asymptotic variance matrix $\Sigma$, we can construct confidence regions for $\mu_g$ with smaller volumes compared to Bonferroni-corrected confidence intervals. Additionally, an effective sample size can be computed to summarize the overall inflation or deflation of sample size relative to an independent and identically distributed (iid) chain. This approach allows for the formulation of stopping rules based on an overall accuracy criterion and offers practical advantages over a component-wise criterion that terminates based on the worst-performing component, which might be overly conservative.

Moreover, the estimation of the entire asymptotic variance matrix $\Sigma$ for a multivariate chain, as opposed to only estimating its diagonals, is essential in quantifying uncertainties in linear combinations of $\hat{\mu}_{gM,j}, j=1,\dots,d$. This is especially useful in devising MCMC estimators with reduced variance using control variate methods~\citep[see, e.g.][]{dellaportas2012control,berg2019control,belomestny2020variance,south2022postprocessing}. 
Control variate methods employ estimators of the form  $\tilde{\mu}_{gM,C}=M^{-1}\sum_{t=0}^{M-1}\{g(X_t)-C^\top h(X_t)\}$, where $h:\mathsf{X}\to\mathbb{R}^{p}$ is a control variate function chosen to satisfy $\int h(x)\pi(dx)=0$, and $C\in\mathbb{R}^{p\times d}$. Since $h$ has $\pi$ expectation 0, the terms $C^\top h(X_t)$ do not lead to bias under stationarity. If $C$ is chosen such that the asymptotic variance of $\tilde{\mu}_{gM,C}$ is smaller than that of $\hat{\mu}_{gM}$, then it is preferable to use $\tilde{\mu}_{gM,C}$ over $\hat{\mu}_{gM}$ to estimate $\mu_g$. For example, \citet{belomestny2020variance} utilize a spectral variance estimator and minimize the estimated asymptotic variance of the control variate estimator, and find improvements over alternative control variate approaches that neglect the auto(cross)-covariance structure of the Markov chain.

In terms of estimating $\Sigma$, given \eqref{eq:avar_sum}, one might consider estimating each $\gamma_g(k) = \textrm{Cov}_\pi (g(X_0), g(X_{k}))$ first, and then using a plug-in estimator $ \sum_{k=-\infty}^{\infty} \hat{\gamma}_g(k)$ based on the estimates $\hat{\gamma}(k)$ to estimate $\Sigma$. A natural estimate for $\gamma_g(k)$ is the empirical covariance matrix $\rinitf{g}(k) \in \R^{d\times d}$, defined as
\begin{align}\label{eq:autocov}
    \rinitf{g}(k) = \begin{cases}
    \frac{1}{M}\sum_{t=0}^{M-1-|k|} \tilde{g}(X_t)\tilde{g}(X_{t+{|k|}})^\top & 0 \le k < M\\  
   \{\frac{1}{M}\sum_{t=0}^{M-1-|k|} \tilde{g}(X_t)\tilde{g}(X_{t+|k|})^\top\}^\top & -M<k<0\\
    0 & |k|\ge M.
    \end{cases} 
\end{align} 
where we define $\tilde{g}(X_t) =g(X_t)-\frac{1}{M}\sum_{t=0}^{M-1} g(X_t)$. Under mild conditions, a Strong Law of Large Numbers (SLLN) holds such that $\rinitf{g}(k)$ converges to $\gamma_g(k)$ for each fixed $k$. However, this convergence does not imply the convergence of the estimator \newline $\sum_{k = -\infty}^\infty \rinitf{g}(k)=\sum_{k=-(M-1)}^{M-1}\rinitf{g}(k)$ to $\Sigma = \sum_{k=-\infty}^\infty \gamma_g(k)$~\citep[e.g.,][]{anderson1971statistical}. As a result, alternative methods for estimating $\Sigma$ have been investigated in previous literature.

In univariate settings where $d=1$, methods for estimating $\Sigma \in \R_{+}$ with better statistical properties have been proposed. These methods include spectral variance estimators~\citep{anderson1971statistical}, estimators based on batch means~\citep{priestley1981spectral,flegal2010batch,chakraborty2022estimating}, and a class of methods for reversible Markov chains including initial sequence estimators~\citep{geyer1992practical} and the moment least squares estimator \citep{berg2022efficient}.
The spectral variance estimators utilize windowed autocovariance sequences as an input, where a window function is often chosen to downweight the contribution from $\rinitf{g}(k)$ for larger $k$ to the asymptotic variance estimate. The batch means and overlapping batch means methods utilize scaled sample variances calculated from the sample means of either non-overlapping or overlapping batches.  The batch means and overlapping batch means estimators turn out to be closely related to the spectral variance estimators~\citep{Damerdji1991-oj, flegal2010batch}. The batch means and spectral variance estimators have known consistency properties. In particular, with an appropriate choice of batch or window size, they are a.s. consistent for $\Sigma$ \citep{flegal2010batch}. 

\citet{geyer1992practical}, on the other hand, proposed initial sequence estimators for estimating $\Sigma$ when $d=1$ by imposing various shape constraints on the estimates of certain sums of autocovariances. More specifically, these estimators exploit positivity, monotonicity, and convexity constraints satisfied for reversible Markov chains by the sequence $\Gamma_g=\{\Gamma_g(k)\}_{k=0}^{\infty}$ defined by $\Gamma_g(k) := \gamma_g(2k)+\gamma_g(2k+1)$ for $k=0,1,2,\dots$. Initial sequence estimators have very strong empirical performance and do not require the choice of a tuning parameter value, making them very useful in practice. For example, the widely used Stan software \citep{stan2019} employs the initial sequence estimators to estimate the effective sample size of Markov chain simulations. However, the consistency of these initial sequence estimators of $\Sigma$ is still unknown. To our knowledge, the only consistency results for the initial sequence estimators are that the initial sequence estimates asymptotically do not underestimate $\Sigma$.

\cite{berg2022efficient} further explored the idea of estimating the autocovariance $\gamma_g$ by enforcing shape constraints and proposed a moment least squares (momentLS) estimator for estimating $\gamma_g$ from a reversible and geometrically ergodic Markov chain. Using the observation that $\gamma_g$ from a reversible Markov chain has a moment representation (see Section~\ref{sec:momentLS_background}) and thus satisfies certain shape constraints, they considered the projection of an input autocovariance sequence estimate onto the set of $[-1+\delta,1-\delta]$ moment sequences, which they called a momentLS sequence estimator. Based on this autocovariance sequence estimator, they also proposed an asymptotic variance estimator to estimate the asymptotic variance of the sample mean of univariate functions of the Markov chain iterates. They showed that both the autocovariance sequence and asymptotic variance estimator are strongly consistent, provided that the hyperparameter $\delta$ is chosen sufficiently small.

Estimating the asymptotic variance matrix in multivariate settings where $d>1$ has also been considered. Many of the estimators can be considered as multivariate generalizations of univariate estimators. For example, multivariate counterparts to the spectral variance estimator and batch means estimator were considered and their theoretical properties were studied in \citet{vats2018strong, vats2019multivariate}. A strong consistency property for the multivariate spectral variance estimator and the strong consistency of multivariate batch means with an appropriate choice of window function and batch size were established in \citet{vats2018strong} and \citet{vats2019multivariate}, respectively.

Various generalizations of the initial sequence estimators in \cite{geyer1992practical} have been proposed in previous literature. For example, \citet{kosorok2000error} constructed an initial positive (matrix) sequence estimator by truncating the cumulative autocovariance matrix at the first time point where the smallest eigenvalue becomes negative. \citet{dai2017multivariate} proposed a different truncation point where they suggested truncating when the generalized variance associated with the current cumulative sum fails to increase. Similar to initial sequence estimators in the univariate setting, no consistency results have been established for the multivariate initial sequence estimators of the asymptotic variance, although they are known to be asymptotically conservative.

In this work, we propose a multivariate sequence estimator for an auto- or cross-covariance sequence associated with a multivariate function $g:\mathsf{X}\to\mathbb{R}^d$ with $d \ge 1$ of a Markov chain based on the work of \cite{berg2022efficient}. We show that our proposed multivariate sequence estimator is strongly consistent and that the resulting multivariate matrix estimator for the asymptotic variance matrix is also strongly consistent, provided an appropriate choice of the hyperparameter $\delta$.
The paper is organized as follows: In Section \ref{sec:momentLS_background}, we describe our problem set-up and provide some background on univariate momentLS estimators. In Section \ref{sec:mtv_momentLS}, we introduce the proposed sequence and asymptotic variance matrix estimators for multivariate functions of Markov chain iterates. The statistical guarantees of the proposed estimators are studied in Section \ref{sec:statistical_guarantees}. In Section \ref{sec:emp}, we compare the proposed estimator with competing methods using simulated and real data examples. 

\section{Problem setup and univariate MomentLS estimator}\label{sec:momentLS_background}

We consider a Markov chain $X=(X_0,X_1,\dots)$, which is a sequence of $\mathsf{X}$-valued random variables. We assume that the state space $\mathsf{X}$ is a complete separable metric space and let $\mathscr{X}$ be the associated Borel $\sigma$-algebra. We let $Q:\mathsf{X}\times\mathscr{X}\to [0,1]$ be the transition kernel for $X$. An initial measure $\nu$ on $\mathscr{X}$ and a transition kernel $Q$ define a Markov chain probability measure $P_\nu$ for $X=(X_0,X_1,X_2,\dots)$ on the canonical sequence space $(\Omega, \mathcal{F})$. We write $E_\nu$ to denote expectation with respect to $P_{\nu}$.

For a probability measure $\pi$ on $(\mathsf{X},\mathscr{X})$, a probability kernel $Q$ is said to be $\pi$-stationary if $\pi (A)=\int Q(x,A)\pi(dx)$ for all $A\in\mathscr{X}$. In addition, we say $Q$ satisfies the reversibility property with respect to $\pi$ if
$\int_A \pi(dx) Q(x, B)=\int_B \pi(dy) Q(y, A)$
for all $A, B \in \mathscr{X}$. We note that reversibility with respect to $\pi$ is a sufficient condition for $\pi$-stationarity of $Q$~\citep[see, e.g.,][Section 6.5.3]{robert2004monte}.
For a function $f:\mathsf{X}\to\mathbb{C}$ and a transition kernel $Q$, we define the Markov operator $Q$ by 
\begin{align}
    Qf(x)=\int Q(x,dy) f(y).\label{eq:Qf}
\end{align}We define $Q^0f(x)=f(x)$, $Q^1f(x)=Q f(x)$, and $Q^tf(x)=Q(Q^{t-1}f)(x)$ for $t>1$, and we define $Q^t(x,A)=Q^t1_A(x)$, where $1_A(\cdot)$ is the indicator function for the set $A$.

We let $L^2(\pi)$ be the space of functions which are square integrable with respect to $\pi$, i.e., $L^2(\pi) = \{f:\mathsf{X} \to \mathbb{C}; \int |f(x)|^2 \pi(dx)< \infty\}$. 
For functions $f, g\in L^2(\pi)$, we define an inner product $\braket{f,g}_\pi = \int f(x) \overline{g(x)} \pi(dx),$ where $\overline{g(x)}$ is the conjugate of $g(x)$.
We note that $L^2(\pi)$ is a Hilbert space equipped with the inner product $\langle\cdot,\cdot\rangle_\pi$. For $f \in L^2(\pi)$, we define $\|f\|_{L^2(\pi)} = \sqrt{\braket{f,f}_\pi}$. Also, for an operator $T$ on $L^2(\pi)$, we define $\vertiii{T}_{L^2(\pi)} = \sup_{f; \|f\|_{L^2(\pi)} \le 1} \|Tf\|_{L^2(\pi)}$ and we say $T$ is bounded if $\vertiii{T}_{L^2(\pi)}<\infty$. We say $T^*$ is the adjoint of $T$ if $\langle Tf, g\rangle_\pi = \langle f,T^*g\rangle_\pi$ for any $f,g \in L^2(\pi)$, and $T$ is self-adjoint if $T=T^*$.

For a Markov kernel $Q$, we note that $Q$ is a linear operator since $Q\{af+bg\}(x) = \int Q(x,dy) \{af+bg\}(y) = a\int Q(x,dy)f(y) + b\int Q(x,dy)g(y)= aQf(x) + bQg(x)$ for any $a,b \in \mathbb{C}$ and $f,g \in L^2(\pi)$. Since $Q$ is a contraction map, $\vertiii{Q}_{L^2(\pi)} \le 1$ and $Q$ is bounded. If the transition kernel $Q$ is reversible with respect to $\pi$, the Markov operator $Q$ is self-adjoint since $\langle Q1_A, 1_B \rangle_\pi = \langle 1_A, Q1_B \rangle_\pi$ for any $A,B \in \mathscr{X}$, and simple functions are dense in $L^2(\pi)$.

The spectrum of the operator $Q$ plays a key role in determining the mixing properties of a Markov chain with transition kernel $Q$. For an operator $T$ on the Hilbert space $L^2(\pi)$, the spectrum of $T$ is defined as $\sigma(T) = \{\lambda \in \mathbb{C}; (T-\lambda I)^{-1} \mbox{does not exist or is unbounded}\}.$ For Markov operators $Q$, we define the spectral gap $\delta(Q)$ of $Q$ as $\delta(Q) = 1- \sup\{ |\lambda| ; \lambda \in \sigma(Q_0) \}$ where $Q_0$ is defined as $Q_0f = Qf - E_\pi[f(X_0)] f_0$ and $f_0\in L^2(\pi)$ is the constant function such that $f_0(x) = 1$ for all $x\in \mathsf{X}$. It is easy to check that $Q_0$ is also self-adjoint when $Q$ is self-adjoint.
If $Q$ is reversible, $Q$ has a positive spectral gap $(\delta(Q)>0)$ if and only if the chain is geometrically ergodic \citep{Roberts1997-qk, kontoyiannis2012geometric}. 

We let the function of interest $g:\mathsf{X}\to\R^d$ be an $\R^d$-valued function such that $g(x) = [g_1(x),\dots,g_d(x)]^\top$ for each $x \in \mathsf{X}$ and each $g_j$ is square integrable with respect to $\pi$. For any real-valued $f_1,f_2:\mathsf{X} \to \R$, we define the cross-covariance sequence $\gamma_{f_1f_2}(k)$
by
\begin{align}\label{def:gamma_g}
\gamma_{f_1f_2}(k)&= {\rm Cov}_\pi\{f_1(X_0), f_2(X_{k})\}  \quad\mbox{for } k\in\mathbb{Z}.
\end{align}
For $f:X\to\mathbb{R}$, $\gamma_{ff}(k)$ is the autocovariance sequence for $\{f(X_t)\}_{t\ge 0}$ and we simply write it as $\gamma_f(k)$. We use $\gamma_{f_1f_2} = \{\gamma_{f_1f_2}(k)\}_{k\in\Z}$ to denote the auto(cross)-covariance sequence on $\Z$.

In the remainder, we consider a discrete time Markov chain $X=\{X_t\}_{t=0}^{\infty}$ with stationary distribution $\pi$ and $\pi$-reversible transition kernel $Q$ with a positive spectral gap. We summarize our assumptions on the Markov chain $X$ as follows for future reference:
\begin{enumerate}[label=(A.\arabic*)]
\setlength\itemsep{0em}
\item \label{cond:harris_ergodicity}(Harris ergodicity) $X$ is $\psi$-irreducible, aperiodic, and positive Harris recurrent.
\item \label{cond:piReversible}(Reversibility) $Q$ is $\pi$-reversible for a probability measure $\pi$ on $(\mathsf{X},\mathscr{X})$.
\item \label{cond:geometric_ergodicity}(Geometric ergodicity) There exists a real number $\rho<1$ and a non-negative function $M$ on the state space $\mathsf{X}$ such that
    $\|Q^n(x,\cdot) - \pi(\cdot)\|_{\rm TV} \le M(x)\rho^n, \mbox{  for all } x\in \mathsf{X},$
where $\|\cdot\|_{\rm TV}$ is the total variation norm.
\end{enumerate}
Throughout the paper, we assume that each component of $g:\mathsf{X}\to\mathbb{R}^d$ is in $L^2(\pi)$, i.e,
\begin{enumerate}[label=(B.\arabic*)]
	\item \label{cond:integrability} (Square integrability)$\int g_j(x)^2\pi(dx)<\infty$, for $j=1,\dots,d$.
\end{enumerate}
For the definitions of $\psi$-irreducibility, aperiodicity, and positive Harris recurrence, see e.g., \citet{meyn2009markov}. Reversibility and geometric ergodicity are the two key assumptions for the class of Markov chains we consider in this paper. While reversibility is not necessary to obtain a $\pi$-stationary chain, many popular Markov chains possess this property, including Metropolis-Hastings and random scan Gibbs samplers. Geometric ergodicity implies exponential convergence of the Markov chain $X$ to its target distribution $\pi$. When the state space $\mathsf{X}$ is finite, all irreducible and aperiodic Markov chains are geometrically ergodic. While geometric ergodicity is not guaranteed for infinite state spaces, geometric ergodicity is still of practical and theoretical importance. For instance, geometric ergodicity is desirable for Markov chain samplers because non-geometrically ergodic chains often exhibit slow convergence \citep{roberts1998markov}. In addition, geometric ergodicity provides one of the simplest sufficient conditions under which a Markov chain CLT holds. Specifically, for a reversible and geometrically ergodic chain, the existence of a finite second moment of the function of interest $g$ is sufficient to establish a CLT
\begin{align}\label{eq:clt}
    \sqrt{M}(\hat{\mu}_{gM} - \mu_g) \overset{d}{\to} N(0, \Sigma)
\end{align} as $M\to\infty$, 
where $\hat{\mu}_{gM} = \frac{1}{M}\sum_{t=0}^{M-1} g(X_t)$ and $\mu_g = \int g(x) \pi(dx)$~\citep[e.g.,][]{jones2004central}.

\citet{berg2022efficient} proposed shape-constrained estimators for both autocovariance sequence and asymptotic variance estimation for the univariate case $g:\mathsf{X}\to\mathbb{R}^d$ for $d=1$. These estimators are based on the fact that the autocovariance sequence $\gamma_{g}$ from a reversible Markov chain is a $[-1,1]$-\textit{moment} sequence, i.e., there exists a positive measure $F_{g}$ supported on $[-1,1]$ such that 
\begin{align}
    \gamma_g(k)=\int \alpha^{|k|}\,F_g(d\alpha),\;\;\;\;\;\;k\in \mathbb{Z}.\label{eq:moments}
\end{align}
In other words, each $\gamma_g(k)$ is the $|k|$th moment of $F_g$. The moment representation~\eqref{eq:moments} of $\gamma_g$ imposes specific shape restrictions on $\gamma_g$, and to exploit this, \citet{berg2022efficient} proposed to project a point-wise consistent initial autocovariance sequence estimator $\rinitf{g}$ onto the set of moment sequences satisfying a representation as in~\eqref{eq:moments}. 

To be more concrete, let us first introduce some definitions and notation. For $a,b \in\R$ such that $-\infty< a \le b < \infty$, we say that a sequence $m$ is an \textit{$[a,b]$-moment sequence} if there exists a positive measure $\mu$ supported on $[a,b]$ such that the equation
\begin{align}\label{eq:momentRep}
    m(k) = \int x^{|k|} \mu(dx)
\end{align}
holds for any $k \in \Z$. The positive measure $\mu$ is called a \textit{representing measure} for the sequence $m$. We also define the moment space $\mathscr{M}_\infty ([a,b])$ as the set of $[a,b]$-moment sequences. When $b>0$ and $a = -b$, we simply write $\mathscr{M}_\infty (b)= \mathscr{M}_\infty ([-b,b])$. We use $\ell_2(\Z)$ to denote the set of square-summable real-valued sequences on $\Z$, i.e., $\ell_2(\Z) = \{m \in \R^\Z; \|m\|_2^2 <\infty\}$ where we denote $\|m\|_2^2 = \sum_{k\in\Z} m(k)^2$.

The moment least squares sequence (momentLS) estimator $\Pi_\delta(\rinitf{g})$ resulting from an initial autocovariance sequence estimator $\rinitf{g}\in\ell_2(\mathbb{Z})$ is defined as the projection of $\rinitf{g}$ onto the $[-1+\delta,1-\delta]$ moment sequence space, i.e.,
 \begin{align}\label{eq:momentLS}
    \Pi_\delta(\rinitf{g}) =\underset{m\in \Mld{\delta}}{\arg\min}\sum_{k\in\mathbb{Z}}\{\rinitf{g}(k)-m(k)\}^2.
\end{align}
Here $\delta>0$ is a hyperparameter for the momentLS estimator $\Pi_\delta(\rinitf{g})$. We require the hyperparameter $\delta$ to be chosen sufficiently small such that the true autocovariance $\gamma_g$ is a feasible point for the optimization problem 
\eqref{eq:momentLS}. That is, we require ${\rm Supp}(F_g) \subseteq [-1+\delta, 1-\delta]$, where $F_g$ is the representing measure for $\gamma_g$. For a measure $F$ on supported on $[-1,1]$, with the definition 
\begin{align}\label{def:delta_gamma}
    \Delta(F) = 1-\max\{|x|; x\in {\rm Supp}(F)\},
\end{align} the condition ${\rm Supp}(F_g) \subseteq [-1+\delta, 1-\delta]$ on $\delta$ is equivalent to $0<\delta \le \Delta(F_g)$. For a reversible, geometrically ergodic Markov chain, $\Delta(F_g) \ge \delta(Q) >0$, and therefore such a choice of $\delta$ exists.
Once the autocovariance sequence $\gamma_g$ is estimated by $\Pi_\delta(\rinitf{g})$, \citet{berg2022efficient} proposed to estimate the asymptotic variance $\Sigma = \sum_{k\in\Z} \gamma_g(k)$ in \eqref{eq:clt} by $\hat{\Sigma} = \sum_{k\in \Z} \Pi_\delta(\rinitf{g})(k)$. 
Although the definition of this estimator involves an infinite sum, it is still computable. As $\Pi_\delta(r_{g,M})$ is the projection onto the $[-1+\delta,1-\delta]$ moment sequence space, for each $k \in \Z$, $\Pi_\delta(r_{g,M})(k) = \int \alpha^{|k|}\hat{\mu}_\delta (d\alpha)$ for a measure supported on $[-1+\delta,1-\delta]$. Moreover, it is shown that the measure $\hat{\mu}_{\delta}$ is discrete with at most $M+1$ points of support~\citep{berg2022efficient}. Thus, $\hat{\Sigma} = \sum_{k\in \Z} \Pi_\delta(\rinitf{g})(k)$ can be computed by 
\begin{align}
    \sum_{k\in\mathbb{Z}} \int \alpha^{|k|}\,\hat{\mu}_{\delta}(d\alpha)=\int \sum_{k\in\mathbb{Z}}\alpha^{|k|}\,\hat{\mu}_{\delta}(d\alpha)=\int \frac{1+\alpha}{1-\alpha}\,\hat{\mu}_{\delta}(d\alpha)=\sum_{\alpha:\hat{\mu}_{\delta}(\{\alpha\})>0}\,\frac{1+\alpha}{1-\alpha}\hat{\mu}_{\delta}(\alpha)\label{eq:infiniteSum}
\end{align} where the last expression involves a sum over the finite, discrete support of $\hat{\mu}_{\delta}$.

\citet{berg2022efficient} showed for univariate $g:\mathsf{X}\to\mathbb{R}$ that if a valid initial autocovariance sequence estimator $\rinitf{g}$ is used, in the sense that the initial estimator $\rinitf{g}$ satisfies the following conditions
\begin{enumerate}[label=(R.\arabic*)]
\setlength\itemsep{0em}
    \item \label{cond:R1}(a.s. elementwise convergence) $\rinitf{g}(k) \underset{M\to\infty}{\to} \gamma_g(k)$ for each $k\in \Z$, $P_x$-almost surely, for any initial condition $x \in \mathsf{X}$,
    \item \label{cond:R2}(finite support) $\rinitf{g}(k) =0 $ for $k\ge n(M)$ for some $n(M)<\infty$, and
    \item \label{cond:R3}(even function with a peak at 0) $\rinitf{g}(k) = \rinitf{g}(-k)$ and $\rinitf{g}(0) \ge |\rinitf{g} (k)|$ for each $k\in \Z$,
\end{enumerate}
and a valid $\delta$ is used, in the sense that $0<\delta \le \Delta(F_g)$, then the momentLS sequence and asymptotic variance estimators are strongly consistent for $\gamma_g$ and $\Sigma$, respectively. For future reference, we summarize the consistency results for univariate momentLS estimators as follows:
\begin{thm}\label{thm:univariate_conv} Suppose $X_{0},X_1,...,$ is a Markov chain with transition kernel $Q$ satisfying~~\ref{cond:harris_ergodicity}-\ref{cond:geometric_ergodicity}, and suppose $g:\mathsf{X}\to\mathbb{R}$ satisfies~\ref{cond:integrability}. Let $\gamma_g$ denote the autocovariance sequence as defined in \eqref{def:gamma_g}, and let $F_g$ denote the representing measure for $\gamma_g$. Suppose $\delta$ is chosen so that $0<\delta \le \Delta(F_g)$. Let $\rinitf{g}$ be an initial autocovariance sequence estimator for $\gamma_g$ satisfying conditions \ref{cond:R1} - \ref{cond:R3}. 
Then for each initial condition $x\in \mathsf{X}$, 
\begin{enumerate}
\setlength\itemsep{0em}
    \item $\sum_{k \in \Z} \{\gamma_g(k)-\Pi_\delta(\rinitf{g})(k)\}^2 \underset{M\to\infty}{\to} 0, \,\,\,P_x\mbox{-a.s.}$
    \item $P_x(\m_{g,M} \to F_g \mbox{ vaguely, as }M\to\infty)=1$, where $\m_{g,M}$ and $F_g$ are the representing measures for $\Pi_\delta(\rinitf{g})$ and $\gamma_g$, and
    \item $\sigma^2(\Pi_\delta(\rinitf{g})) \to \sigma^2(\gamma_g)\,\,\,P_x\mbox{-a.s.}$
\end{enumerate}
where we define $\sigma^2(m) = \sum_{k\in\Z}m(k)$ for a sequence $m$ on $\Z$, and we recall the definition of vague convergence of measures as follows: a sequence of measures $\{\nu_n\}_{n\in \mathbb{N}}$ and $\nu$ on $\R$, $\nu_n$ is said to \textit{converge vaguely} to $\nu$ if and only if $\int f d\nu_n \to \int f d\nu$ for all  $f \in C_0(\R)$ [e.g., \citealp{folland1999real}], where $C_0(\R)$ is the space of continuous functions that vanish at infinity.
\end{thm}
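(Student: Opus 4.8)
The plan is to recast the $\ell_2$-projection \eqref{eq:momentLS} as a convex program over positive measures on the fixed compact interval $I_\delta:=[-1+\delta,1-\delta]$, and then run an argmin-convergence argument. Every feasible sequence $m\in\Ml$ can be written as $m(k)=\int\alpha^{|k|}\mu(d\alpha)$ for a positive measure $\mu$ on $I_\delta$, and this correspondence is a bijection because a finite measure on a compact interval is determined by its moments. Expanding the objective in \eqref{eq:momentLS} in this parametrization and dropping the $\mu$-free term $\sum_k\rinitf{g}(k)^2$, one sees that $\m_{g,M}$ (the representing measure of $\Pi_\delta(\rinitf{g})$, which exists and is finitely supported as recalled above) minimizes
\[
  \Phi_M(\mu):=\int\!\!\int\frac{1+\alpha\beta}{1-\alpha\beta}\,\mu(d\alpha)\,\mu(d\beta)\;-\;2\int h_M(\alpha)\,\mu(d\alpha)
\]
over positive measures $\mu$ on $I_\delta$, where $h_M(\alpha):=\sum_{k\in\Z}\rinitf{g}(k)\alpha^{|k|}$ is a polynomial by \ref{cond:R2}. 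Replacing $h_M$ by $h_\infty(\alpha):=\sum_k\gamma_g(k)\alpha^{|k|}=\int\frac{1+\alpha\beta}{1-\alpha\beta}F_g(d\beta)$ yields the limiting functional $\Phi_\infty$; writing $m_\mu(k):=\int\alpha^{|k|}\mu(d\alpha)$ one has $\Phi_\infty(\mu)=\|m_\mu-\gamma_g\|_2^2-\|\gamma_g\|_2^2$, and since $F_g$ is a positive measure on $I_\delta$ exactly because $0<\delta\le\Delta(F_g)$, the functional $\Phi_\infty$ has the unique minimizer $F_g$.

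Two facts, both holding on the probability-one event on which $\rinitf{g}(k)\to\gamma_g(k)$ for every $k$ simultaneously (guaranteed by \ref{cond:R1}), drive the argument: (i) $h_M\to h_\infty$ \emph{uniformly} on $I_\delta$, and (ii) $\sup_M\m_{g,M}(I_\delta)<\infty$. For (i) I would split $\sup_{\alpha\in I_\delta}|h_M(\alpha)-h_\infty(\alpha)|$ at a cutoff $K$: the low-lag block is at most $\sum_{|k|\le K}|\rinitf{g}(k)-\gamma_g(k)|\to 0$ for fixed $K$, while the high-lag block is at most $(\rinitf{g}(0)+\gamma_g(0))\sum_{|k|>K}(1-\delta)^{|k|}$, which is small uniformly in $M$ because $\rinitf{g}(0)\to\gamma_g(0)$ and, crucially, $|\rinitf{g}(k)|\le\rinitf{g}(0)$ by \ref{cond:R3}. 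For (ii), $\Phi_M(\m_{g,M})\le\Phi_M(0)=0$ forces $\m_{g,M}(I_\delta)^2\le\int\!\!\int\frac{1+\alpha\beta}{1-\alpha\beta}\,d\m_{g,M}\,d\m_{g,M}\le 2\|h_M\|_{\infty,I_\delta}\,\m_{g,M}(I_\delta)$, hence $\m_{g,M}(I_\delta)\le 2\|h_M\|_{\infty,I_\delta}\le 2\rinitf{g}(0)\tfrac{2-\delta}{\delta}$, again by \ref{cond:R3}, which is a.s.\ bounded.

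With (i)--(ii) in hand I would prove part 2 by a subsequence argument: given any subsequence of $\{\m_{g,M}\}$, weak-$*$ compactness of bounded positive measures on the compact set $I_\delta$ yields a further subsequence with $\m_{g,M'}\rightharpoonup\m^*$. Since the kernels $(\alpha,\beta)\mapsto\frac{1+\alpha\beta}{1-\alpha\beta}$ and $\alpha\mapsto\frac{1+\alpha}{1-\alpha}$ are bounded and continuous on $I_\delta$ (this is where $\delta>0$ is essential), weak convergence together with (i) gives $\Phi_{M'}(\m_{g,M'})\to\Phi_\infty(\m^*)$; combined with $\Phi_{M'}(\m_{g,M'})\le\Phi_{M'}(F_g)\to\Phi_\infty(F_g)=\min\Phi_\infty$, this forces $\m^*$ to minimize $\Phi_\infty$, so $\m^*=F_g$ by uniqueness. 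As every subsequence has a sub-subsequence converging weakly to the same limit $F_g$, we get $\m_{g,M}\rightharpoonup F_g$, and since all supports lie in the fixed compact $I_\delta$ this is vague convergence on $\R$. Part 1 follows: $\Pi_\delta(\rinitf{g})(k)=\int\alpha^{|k|}\m_{g,M}(d\alpha)\to\gamma_g(k)$ for each $k$, the finite low-lag block of $\sum_k\{\Pi_\delta(\rinitf{g})(k)-\gamma_g(k)\}^2$ vanishes as $M\to\infty$, and the tail is dominated uniformly in $M$ by a constant times $\sum_{|k|>K}(1-\delta)^{2|k|}$ (using $|\Pi_\delta(\rinitf{g})(k)|\le\Pi_\delta(\rinitf{g})(0)(1-\delta)^{|k|}$ together with (ii)), which is small as $K\to\infty$. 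Part 3 is then immediate from vague convergence against the bounded continuous function $\alpha\mapsto\frac{1+\alpha}{1-\alpha}$ on $I_\delta$, via the representation \eqref{eq:infiniteSum}: $\sigma^2(\Pi_\delta(\rinitf{g}))=\int\frac{1+\alpha}{1-\alpha}\m_{g,M}(d\alpha)\to\int\frac{1+\alpha}{1-\alpha}F_g(d\alpha)=\sigma^2(\gamma_g)$.

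I expect the main obstacle to be precisely steps (i)--(ii): upgrading the merely \emph{pointwise} a.s.\ convergence $\rinitf{g}(k)\to\gamma_g(k)$ of \ref{cond:R1} into uniform control of the data functional $h_M$ and into a.s.\ boundedness of the optimal masses. This is where \ref{cond:R3} does the real work. The tempting shortcut ``projections onto closed convex sets are nonexpansive, so $\|\Pi_\delta(\rinitf{g})-\gamma_g\|_2=\|\Pi_\delta(\rinitf{g})-\Pi_\delta(\gamma_g)\|_2\le\|\rinitf{g}-\gamma_g\|_2$'' is useless here, because $\rinitf{g}$ need not converge to $\gamma_g$ in $\ell_2$ --- the large-lag empirical autocovariances carry non-vanishing $\ell_2$ mass --- and only after passing to the measure/polynomial side does the relevant quantity become uniformly controllable. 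A lighter secondary ingredient is moment-determinacy of finite measures on a compact interval (Stone--Weierstrass), which pins down the limiting minimizer and makes the $\mu\leftrightarrow m_\mu$ parametrization a bijection.
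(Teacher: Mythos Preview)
Your proposal is correct, and it rests on the same two technical pillars that the original argument uses: your step (i), the uniform convergence $\sup_{\alpha\in I_\delta}|h_M(\alpha)-h_\infty(\alpha)|\to 0$, is exactly the content of Proposition~8 in \citet{berg2022efficient} (note $h_M(\alpha)=\langle x_\alpha,\rinitf{g}\rangle$ in the paper's notation), and your step (ii), the a.s.\ mass bound on $\m_{g,M}(I_\delta)$, is essentially Proposition~9 there, though you derive it from the inequality $\Phi_M(\m_{g,M})\le\Phi_M(0)$ rather than from the first-order KKT condition.

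Where your route differs is in how these ingredients are assembled. The paper (i.e.\ \citet{berg2022efficient}, as Theorem~\ref{thm:univariate_conv} is merely quoted here) starts from a variational inequality of the form
\[
  \|\Pi_\delta(\rinitf{g})-\gamma_g\|_2^2 \;\le\; \sup_{\alpha\in I_\delta}\bigl|\langle x_\alpha,\rinitf{g}-\gamma_g\rangle\bigr|\,\bigl\{F_g(I_\delta)+\m_{g,M}(I_\delta)\bigr\},
\]
which gives the $\ell_2$ convergence (part~1) directly and quantitatively from (i)--(ii); vague convergence of the representing measures and convergence of $\sigma^2$ are then deduced from the $\ell_2$ convergence. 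You instead run a compactness/argmin argument to get vague convergence (part~2) first, and deduce parts~1 and~3 from that. Your route is more ``soft'' and conceptually clean (it makes transparent that moment-determinacy on a compact interval is the uniqueness ingredient), while the paper's KKT bound is shorter and, being an explicit inequality, is the one that would extend to rates. One minor technical point worth making explicit in your write-up: for the quadratic term in $\Phi_{M'}(\m_{g,M'})\to\Phi_\infty(\m^*)$ you need $\m_{g,M'}\otimes\m_{g,M'}\rightharpoonup\m^*\otimes\m^*$ on $I_\delta\times I_\delta$, which is standard but should be stated.
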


\section{Multivariate MomentLS estimator}\label{sec:mtv_momentLS}
We now introduce the multivariate momentLS estimator for the cross-covariance sequence $\gamma_{g_i,g_j}$ and asymptotic variance $\Sigma$ in the Markov chain central limit theorem \eqref{eq:clt}. Recall that for a Markov chain $X=(X_0,X_1,\dots)$ and an $\R^d$-valued function $g:\mathsf{X}\to \R^d$, we defined the cross (auto)-covariance sequences $\gamma_{g_i,g_j}$ for $i,j= 1,\dots,d$ as
 \begin{align*}
     \gamma_{g_i,g_j}(k) = {\rm Cov}_\pi(g_i(X_0), g_j(X_k))
 \end{align*}
 for $i=1,\dots,d$ and $k\in \Z$, where we define ${\rm Cov}_\pi\{g_i(X_0), g_j(X_{k})\} = {\rm Cov}_\pi\{g_i(X_{|k|}), g_j(X_{0})\}$ for $k<0$, and we write $\gamma_{g_i} =  \gamma_{g_i,g_i}$.

We first note that $\gammaij(k) = \gammaji(-k)$ for $k \in \Z$ by definition, as for $k\in\N$, $\gammaji(-k) = {\rm Cov}_\pi(g_j(X_{|k|}), g_i(X_0)) ={\rm Cov}_\pi( g_i(X_0),g_j(X_{k}),) = \gammaij(k)$. In addition, for a reversible Markov chain, $\gammaij(k) = \gammaji(k)$ for $k \in \Z$. This is because for $k \ge 0$, we have from self-adjointness of $Q$ that
    $\gammaij(k)= {\rm Cov}_\pi(g_i(X_0), g_j(X_k))  = \langle \check{g}_{i}, Q^k \check{g}_{j}\rangle_\pi =\langle Q^k\check{g}_{i},  \check{g}_{j}\rangle_\pi = {\rm Cov}_\pi(g_i(X_k), g_j(X_0)) = \gammaji(k),$
where we define the true-mean centered version $\check{g}_i$ of $g_i$ by $\check{g}_i(x)=g_i(x)-E_\pi[g_i(X_0)]$. For $k <0$, the result follows from $\gammaij(k) = \gammaji(-k)$. 

When $i=j$, i.e., $\gamma_{g_i,g_i}$ is an autocovariance sequence, $\gamma_{g_i,g_i}$ admits the following moment representation, 
\[
\gamma_{g_i,g_i}(k) = \int\lambda^{|k|} F_{g_i,g_i}(d\lambda)
\]
for any $k\in \Z$, where $F_{g_i,g_i}$ is a positive measure with ${\rm Supp}(F_{g_i,g_i}) \subseteq [-1+\delta(Q), 1-\delta(Q)]$, where we recall that $\delta(Q)$ is the spectral gap of $Q$~\citep[see, e.g.,][]{geyer1992practical}. In other words, $\gamma_{g_i,g_i}$ is a $[-1+\delta(Q), 1-\delta(Q)]$ moment sequence. 
In fact, $\gammaij$ when $i\ne j$ admits a similar mixture representation, although the corresponding representing measure $F_{g_i,g_j}$ may not necessarily be positive. Moreover, all of these (signed) measures, denoted as $F_{g_i,g_j}$, arise from a \textit{spectral resolution} of the Markov kernel $Q$. A notable implication is that all $F_{g_i,g_j}$ are supported on the same interval $[-1+\delta(Q), 1-\delta(Q)]$.

To be more concrete, let us first define a spectral resolution for a Hilbert space $\mathcal{H}$ equipped with an inner product $\langle \cdot, \cdot \rangle_{\mathcal{H}}$. Let $\|\cdot\|_{\mathcal{H}}$ be the norm induced by the inner product. A function $E: \R \to \mathscr{B}(\mathcal{H})$ is called a spectral resolution on $\mathcal{H}$ if the function $E$ satisfies the following \citep[see e.g., Chapter 6 in][]{stein2009real}: 1. $\|E(\lambda) f\|_{\mathcal{H}}$ is an increasing function of $\lambda$ for every $f \in \mathcal{H}$, 2. there is an interval $[a, b]$ such that $E(\lambda)=0$ if $\lambda<a$, and $E(\lambda)=I$ if $\lambda \geq b$ where $I$ denotes the identity operator on $\mathcal{H}$, and 3. $\lim _{\mu \downarrow \lambda} E(\mu) f=E(\lambda) f $ for any $f \in \mathcal{H}$, where $\mathscr{B}(\mathcal{H})$ denotes the space of bounded linear operators on $\mathcal{H}$. It can be shown that the property 1 is equivalent to the condition that for any $\lambda<\mu$, $E(\mu)-E(\lambda)$ is an orthogonal projection operator.

For a reversible chain with a Markov kernel $Q$, by the spectral representation theorem for self-adjoint bounded operators in Hilbert space \citep[e.g., Theorem 6.1 in][]{stein2009real}, we can obtain a spectral resolution $E$ such that for any $f,g \in L_2(\pi)$ 
\begin{align}\label{eq:spect_resolution}
    \langle Q_0 f,g \rangle_\pi = \int \alpha\, \mE{f}{g} (d\alpha)
\end{align}
where we denote by $\mE{f}{g}$ the complex-valued measure on $\mathscr{B}_\R$ uniquely defined by the complex-valued function $\Psi_{f,g}:\alpha \to \langle E(\alpha)f,g\rangle_\pi$ through $\mE{f}{g}((-\infty,\alpha]) = \Psi_{f,g}(\alpha)$. When $f=g$, we let $F_{f} = F_{f,f}$ for notational simplicity. The following Proposition \ref{prop:rep_gamma_ij} summarizes the properties of the mixing measure $F_{g_i,g_j}$ for the cross-covariance sequence $\gammaij$.

\begin{prop}\label{prop:rep_gamma_ij}
Suppose $X$ satisfies \ref{cond:piReversible} and $g:\mathsf{X} \to \R^d$ satisfies \ref{cond:integrability}. There exists a finite signed measure $F_{g_i,g_j}$ on $(\R, \mathscr{B}_\R)$ with ${\rm Supp}(F_{g_i,g_j}) \subseteq [-1+\delta(Q), 1-\delta(Q)]$ such that 
\begin{align}\label{prop:eq:gamma_ij}
    \gammaij(k) = \int\lambda^{|k|} \, F_{g_i,g_j}(d\lambda)
\end{align}
for each $k \in \Z$ and $i,j \in \{1,\dots,d\}$.
Moreover, the mixing measure $F_{g_i,g_j}$ in \eqref{prop:eq:gamma_ij} is unique, and is given by $F_{g_i,g_j}((-\infty,\alpha]) = \langle E(\alpha)g_i, g_j\rangle_\pi$ for $\alpha \in \R$ where $E$ is the spectral resolution in \eqref{eq:spect_resolution}.
Finally, when $i=j$, the measure $F_{g_i} = F_{g_i,g_i}$ is positive. 
\end{prop}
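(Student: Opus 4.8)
The plan is to read off everything from the spectral resolution $E$ of the self-adjoint operator $Q_0$ on $L^2(\pi)$ provided by the spectral representation theorem (Theorem 6.1 in \citet{stein2009real}), and then translate the three claims — moment representation, uniqueness of the mixing measure, and positivity in the diagonal case — into statements about the complex measures $F_{g_i,g_j}$ defined through $\Psi_{g_i,g_j}(\alpha) = \langle E(\alpha)g_i,g_j\rangle_\pi$.

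\textbf{Step 1 (Setup and the support interval).} First I would recall that for $k\ge 0$, $\gammaij(k) = \langle \check g_i, Q^k\check g_j\rangle_\pi = \langle Q_0^k\check g_i,\check g_j\rangle_\pi$, using that $Q_0$ agrees with $Q$ on the mean-zero subspace containing $\check g_i,\check g_j$ and that $Q_0$ is self-adjoint. By the spectral theorem applied to $Q_0$, $\langle Q_0^k f,g\rangle_\pi = \int \lambda^k\, F_{f,g}(d\lambda)$ for all $k\ge 0$, where $F_{f,g}$ is the complex measure with distribution function $\alpha\mapsto\langle E(\alpha)f,g\rangle_\pi$; this is the standard polynomial functional-calculus identity, obtained from \eqref{eq:spect_resolution} (the $k=1$ case) by iterating, or directly from $E(\mu)-E(\lambda)$ being an orthogonal projection. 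Taking $f=\check g_i$, $g=\check g_j$ gives \eqref{prop:eq:gamma_ij} for $k\ge 0$, and the $k<0$ case follows since the right side depends only on $|k|$ and $\gammaij(k)=\gammaji(-k)$ together with $\gammaij(k)=\gammaji(k)$ under reversibility (both shown in the text just above the proposition). For the support claim, property 2 of a spectral resolution gives $E(\lambda)=0$ for $\lambda<a$ and $E(\lambda)=I$ for $\lambda\ge b$ on some interval $[a,b]$; the sharp choice is $[a,b]=[\inf\sigma(Q_0),\sup\sigma(Q_0)]$, and by definition of the spectral gap $\sup\{|\lambda|:\lambda\in\sigma(Q_0)\} = 1-\delta(Q)$, so $\sigma(Q_0)\subseteq[-1+\delta(Q),1-\delta(Q)]$ and hence ${\rm Supp}(F_{g_i,g_j})\subseteq[-1+\delta(Q),1-\delta(Q)]$. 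That $F_{g_i,g_j}$ is finite is immediate from $|F_{g_i,g_j}|(\R)\le\|g_i\|_{L^2(\pi)}\|g_j\|_{L^2(\pi)}$ (or simply $F_{g_i,g_j}(\R)=\gammaij(0)<\infty$ in the diagonal case).

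\textbf{Step 2 (Uniqueness).} Here I would argue that a finite signed measure on a compact interval is determined by its moments: if $\nu_1,\nu_2$ are finite signed measures supported on $[-1+\delta(Q),1-\delta(Q)]$ with $\int\lambda^k\,\nu_1(d\lambda)=\int\lambda^k\,\nu_2(d\lambda)$ for all $k\ge 0$, then $\nu_1=\nu_2$, since polynomials are dense in $C([-1+\delta(Q),1-\delta(Q)])$ by Stone--Weierstrass and a finite signed (regular Borel) measure on a compact metric space is determined by its action on continuous functions (Riesz representation). Applying this with $\nu_1=F_{g_i,g_j}$ and $\nu_2$ any other representing measure satisfying the conclusion of the proposition yields uniqueness.

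\textbf{Step 3 (Positivity when $i=j$).} For the diagonal case, I would use that property 1 of the spectral resolution says $E(\mu)-E(\lambda)$ is an orthogonal projection for $\lambda<\mu$; hence $\langle (E(\mu)-E(\lambda))g_i,g_i\rangle_\pi = \|(E(\mu)-E(\lambda))g_i\|_{L^2(\pi)}^2 \ge 0$. Therefore $\alpha\mapsto\Psi_{g_i,g_i}(\alpha)=\langle E(\alpha)g_i,g_i\rangle_\pi$ is real-valued, nondecreasing, and bounded, so the associated measure $F_{g_i}$ is a (finite) positive measure. This also re-derives the known moment representation for autocovariance sequences of reversible chains quoted in the text.

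\textbf{Main obstacle.} The only genuinely substantive point is Step 1: extracting the polynomial functional calculus $\langle Q_0^k f,g\rangle_\pi=\int\lambda^k F_{f,g}(d\lambda)$ for all $k\ge 0$ from the single-power identity \eqref{eq:spect_resolution}, i.e. verifying that the complex measures $F_{f,g}$ behave correctly under composition (equivalently, that $E$ is multiplicative: $E(\lambda)E(\mu)=E(\min(\lambda,\mu))$, which is what makes $E(\mu)-E(\lambda)$ a projection and underlies the functional calculus). This is all contained in the cited spectral representation theorem, so the work is bookkeeping rather than analysis; the remaining steps (support via the spectral gap, uniqueness via Stone--Weierstrass, positivity via orthogonal projections) are short and routine.
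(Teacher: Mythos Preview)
Your proposal is correct and follows essentially the same route as the paper's proof: spectral resolution of $Q_0$, polynomial functional calculus for the moment representation, support via the spectral gap, uniqueness via Weierstrass/Stone--Weierstrass density of polynomials on the compact support, and positivity from the projection property of $E(\mu)-E(\lambda)$. The one point you skip that the paper makes explicit is verifying that $F_{g_i,g_j}$ is a \emph{real} (signed) rather than merely complex measure---the paper does this by showing $E(\lambda)g_i$ is real-valued for real $g_i$, but it also falls out of your own uniqueness argument applied to the imaginary part.
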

The proof is deferred to Supplementary Material\ifshow~\ref{sec:proof_of_prop1}.\else~S1.1. \fi
We note in general $\gammaij(k)$ is not a moment sequence, as $F_{g_i,g_j}$ is not necessarily a positive measure when $i\ne j$. However, we can represent $F_{g_i,g_j}$ as a difference of positive measures, and consequently represent $\gammaij(k)$ as a difference of moment sequences.

Note that for any $f \in \mathcal{H}$, $\Psi_{f,f}: \alpha \to \langle E(\alpha)f,f\rangle_\pi$ is a positive, increasing function of $\alpha$, as $\Psi_{f,f}(\alpha) = \|E(\alpha)f\|_{L^2(\pi)}^2$ since each $E(\alpha)$ is a projection operator and $\alpha\to \|E(\alpha)f\|_{L^2(\pi)}$ is an increasing function of $\alpha$. By the polarization identity and using the fact that $\langle E(\alpha) g_i, g_j\rangle_\pi$ is real due to Proposition \ref{prop:rep_gamma_ij}, for any $a,b \ne 0$,
\begin{align}\label{eq:polarization_identity}
    \Psi_{g_i,g_j}(\alpha) 
    &= \langle E(\alpha)g_i, E(\alpha)g_j\rangle_\pi \nonumber\\
    &=\frac{1}{4ab}\{\|E(\alpha)(ag_i + bg_j)\|_\pi^2 - \|E(\alpha)(ag_i - bg_j)\|_\pi^2\} \nonumber\\
    &=\frac{1}{4ab}\{\Psi_{ag_i+bg_j,ag_i+bg_j}(\alpha) - \Psi_{ag_i-bg_j,ag_i - bg_j}(\alpha)\},
\end{align}
and therefore, the measure $F_{g_i,g_j}$ can be written as the difference of two positive measures $F_{g_i,g_j}=\frac{1}{4ab}\{F_{ag_i+bg_j}-F_{ag_i-bg_j}\}$ where $F_{ag_i+bg_j}$ and $F_{ag_i - bg_j}$ are positive measures uniquely defined by positive, increasing functions $\Psi_{ag_i+bg_j,ag_i+bg_j}$ and $\Psi_{ag_i-bg_j,ag_i - bg_j}$ respectively. Consequently, $\gammaij = \{\gammaij(k)\}_{k \in \Z}$ can also be represented as the difference of two moment sequences, since
\begin{align}
    \gammaij(k) 
    &=\frac{1}{4ab} \int \alpha^{|k|} F_{ag_i+bg_j}(d\alpha)-\frac{1}{4ab} \int \alpha^{|k|} F_{ag_i-bg_j}(d\alpha)\label{eq:gamma_ij_mseq}
\end{align}
and both $\gamma_{ag_i+bg_j}(k) = \int \alpha^{|k|} F_{ag_i+bg_j}(d\alpha)$ and $\gamma_{ag_i+bg_j}(k) = \int \alpha^{|k|} F_{ag_i-bg_j}(d\alpha)$ are moment sequences belonging to $\gamma_{ag_i + bg_j} \in \Mld{\Delta(F_{ag_i + bg_j})}$ and $\gamma_{ag_i - bg_j} \in \Mld{\Delta(F_{ag_i - bg_j})}$.

In light of this observation, we can estimate the $\gammaij$ sequence by employing the momentLS sequence estimators for the autocovariance sequences of the two univariate processes $\{a g_i (X_t) + bg_j (X_t)\}_{t\ge 0}$ and $\{a g_i (X_t) - bg_j (X_t)\}_{t\ge 0}$. Note that the choice of $a$ and $b$ determines the weighting applied to $g_i$ and $g_j$ when constructing these two univariate processes. By Theorem \ref{thm:univariate_conv}, as long as we use $\delta$ such that $0< \delta \le \min \{ \Delta(F_{ag_i + bg_j}), \Delta(F_{ag_i - bg_j})\}$, both $\Pi_\delta (\rinitf{ag_i +bg_j})$ and $\Pi_\delta (\rinitf{ag_i -bg_j})$ will be strongly consistent for $\gamma_{ag_i +bg_j}$ and $\gamma_{ag_i - bg_j}$, and therefore $\hat{\gamma}_{M,ij} = (4ab)^{-1} \{\Pi_\delta (\rinitf{ag_i +bg_j}) -\Pi_\delta (\rinitf{ag_i - bg_j})\}$ will be strongly consistent for $\gammaij$. Furthermore, the following Proposition shows that $\min \{ \Delta(F_{ag_i + bg_j}), \Delta(F_{ag_i - bg_j})\}$ can be further lower-bounded by $\min\{\Delta(F_{g_i}),\Delta(F_{g_j})\}$, therefore the permissible range of values for $\delta$ for estimating $\gamma_{ag_i+ bg_j}$ and $\gamma_{ag_i - bg_j}$ can be given based on $\Delta(F_{g_i})$ and $\Delta (F_{g_j})$ for any values of $a$ and $b$.

\begin{prop}\label{prop:delta_bound}
Let $g:\mathsf{X} \to \R^d$ be an $\R^d$-valued function such that each $g_j \in L^2(\pi)$, $j=1,\dots,d$. For any $\bc \in \R^d$, 
    ${\rm Supp} (F_{\bc^\top g} ) \subseteq \cup_{i; \bc_i \ne 0} {\rm Supp}(F_{g_i})$
\end{prop}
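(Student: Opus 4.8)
The plan is to work directly with the spectral resolution $E$ appearing in \eqref{eq:spect_resolution} and to exploit two facts: the linearity of each operator $E(\alpha)$, and the fact (recorded in the excerpt) that for $a<b$ the operator $E(b)-E(a)$ is an orthogonal projection. The first step is to record the identity that, for any $f\in L^2(\pi)$ and any $a<b$,
\[
F_{f}\big((a,b]\big) \;=\; \langle (E(b)-E(a))f,\, f\rangle_\pi \;=\; \big\|(E(b)-E(a))f\big\|_{L^2(\pi)}^2 ,
\]
which follows from $F_f((-\infty,\alpha]) = \langle E(\alpha)f,f\rangle_\pi$ (Proposition~\ref{prop:rep_gamma_ij} with $i=j$), the defining property $F_f((a,b]) = F_f((-\infty,b]) - F_f((-\infty,a])$, and $(E(b)-E(a))^2 = E(b)-E(a) = (E(b)-E(a))^*$. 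In particular $F_f((a,b]) = 0$ if and only if $(E(b)-E(a))f = 0$. Since each $E(\alpha)$ is a bounded linear operator and $\bc^\top g = \sum_{i:\bc_i\neq 0}\bc_i g_i$ as an element of $L^2(\pi)$, we have $(E(b)-E(a))(\bc^\top g) = \sum_{i:\bc_i\neq 0}\bc_i(E(b)-E(a))g_i$.

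Next I would argue by contraposition. Set $S = \bigcup_{i:\bc_i\neq 0}{\rm Supp}(F_{g_i})$, which is closed as a finite union of closed sets, and fix $\lambda\in\R\setminus S$. Since $\R\setminus S$ is open, choose a bounded open interval $I\ni\lambda$ with $I\subseteq\R\setminus S$. For each $i$ with $\bc_i\neq 0$, the interval $I$ is disjoint from ${\rm Supp}(F_{g_i})$, and since $F_{g_i}$ is a finite Borel measure on $\R$ it assigns zero mass to the complement of its support, so $F_{g_i}(I) = 0$. By monotonicity, $F_{g_i}((a,b]) = 0$, and hence $(E(b)-E(a))g_i = 0$, for every half-open interval $(a,b]\subseteq I$. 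Combining this with the decomposition of $(E(b)-E(a))(\bc^\top g)$ from the previous paragraph gives $(E(b)-E(a))(\bc^\top g) = 0$, i.e., $F_{\bc^\top g}((a,b]) = 0$, for every $(a,b]\subseteq I$.

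Finally, writing $I$ as an increasing union of half-open intervals contained in $I$ and using continuity from below of the finite positive measure $F_{\bc^\top g}$ (positivity and finiteness being guaranteed by Proposition~\ref{prop:rep_gamma_ij} and $F_{\bc^\top g}(\R) = \|\bc^\top g\|_{L^2(\pi)}^2 < \infty$) yields $F_{\bc^\top g}(I) = 0$. Since $I$ is an open neighborhood of $\lambda$, this shows $\lambda\notin{\rm Supp}(F_{\bc^\top g})$, which is exactly the claimed inclusion ${\rm Supp}(F_{\bc^\top g})\subseteq S$; the case $\bc = 0$ is trivial since then both sides are empty. The only point requiring care is the bookkeeping between the open neighborhoods that appear in the definition of the support and the half-open intervals on which the spectral distribution function is naturally expressed; once the identity $F_f((a,b]) = \|(E(b)-E(a))f\|_{L^2(\pi)}^2$ is established, everything else is routine measure theory and linearity.
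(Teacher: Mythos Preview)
Your proof is correct and follows essentially the same route as the paper: both arguments exploit the linearity of the spectral projections and the identity $F_f((a,b])=\|(E(b)-E(a))f\|_{L^2(\pi)}^2$ to show that vanishing of $F_{g_i}$ on an interval forces vanishing of $F_{\bc^\top g}$ there, then conclude the support inclusion by a contraposition/contradiction on an open neighborhood of a point outside $\cup_{i:\bc_i\neq 0}{\rm Supp}(F_{g_i})$. The one minor difference is that the paper passes through a Cauchy--Schwarz bound $F_{\bc^\top g}((a,b])\le\big(\sum_i|\bc_i|\sqrt{F_{g_i}((a,b])}\big)^2$, whereas you argue more directly that $(E(b)-E(a))g_i=0$ for each relevant $i$ forces $(E(b)-E(a))(\bc^\top g)=0$; your version is slightly cleaner but the underlying mechanism is the same.
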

The proof is deferred to Supplementary Material\ifshow~\ref{sec:proof_of_prop2}. \else~S1.2. \fi
Note that if $\delta_i$ and $\delta_j$ are valid in the sense that $0<\delta_i \le \Delta(F_{g_i})$ and $0<\delta_j \le \Delta(F_{g_j})$, then $\delta_{ij} = \min\{\delta_i,\delta_j\}$ is also a valid input for estimating $\gamma_{ag_i + b g_j}$ and $\gamma_{ag_i - b g_j}$. This is because by Proposition \ref{prop:delta_bound},
\begin{align*}
\operatorname{Supp}(F_{ag_i + bg_j}) &\subseteq \operatorname{Supp}(F_{g_i}) \cup \operatorname{Supp}(F_{g_j})  \\&\subseteq [-1+\min\{\Delta(F_{g_i}),\Delta(F_{g_j})\},1-\min\{\Delta(F_{g_i}),\Delta(F_{g_j})\}]    \\
&\subseteq [-1+\delta_{ij}, 1-\delta_{ij}].
\end{align*}
and similarly $\operatorname{Supp}(F_{ag_i - bg_j}) \subseteq [-1+\delta_{ij}, 1-\delta_{ij}]$ as well.

We now propose the following momentLS sequence estimator  for $\gammaij$ on $\Z$. For a given vector $\boldsymbol{\delta} = [\delta_1, \dots, \delta_d]$ and a weight vector $\mathbf{w}_M = [w_{M1},\dots,w_{Md}]$, we define the momentLS estimator for $\gammaij$, $i \le j$, as
\begin{align}
    &\hat{\gamma}^{(\boldsymbol{\delta})}_{M,ij}(\cdot; \mathbf{w}_M) =\begin{cases}
    \Pi_{\delta_{i}} (r_{M,g_i})(\cdot) & i=j\\
        \frac{1}{4w_{Mi}w_{Mj}} \{ \Pi_{\delta_{ij}}(r_{M,w_{Mi} g_i+w_{Mj} g_j})(\cdot) - \Pi_{\delta_{ij}}(r_{M,w_{Mi} g_i-w_{Mj} g_j})(\cdot) \} & i< j
    \end{cases} \label{eq:gammaHatDef}
\end{align}
where $\delta_{ij} = \min\{\delta_i, \delta_j\}$. When $i> j$, we simply let  $\hat{\gamma}^{(\boldsymbol{\delta})}_{M,ij}(\cdot; \mathbf{w}_M) = \hat{\gamma}^{(\boldsymbol{\delta})}_{M,ji}(\cdot; \mathbf{w}_M)$.

The weight vector $\mathbf{w}_M$ determines relative weights for each $g_i$. We allow $\mathbf{w}_M$ to be random as long as each element $w_{Mi}$ converges to a non-zero number $w_i \ne 0$ almost surely. In the next section, we will show that the momentLS estimator with random weights $\mathbf{w}_M$ will converge to the same quantity as the momentLS estimator with non-random weights $\mathbf{w}$ if $\mathbf{w}_M$ converges to $\mathbf{w}$ elementwise almost surely. In terms of the actual choice of weights, we propose to scale each $g_i$ based on their size, specifically using the norm of each centered $g_i$, i.e., $\|g_i-E_\pi[g_i(X_0)]\|_{L^2(\pi)} = \{\int (g_i(x)-E_\pi[g_i(X_0)])^2 \pi(dx)\}^{1/2} = \sqrt{\gamma_{g_i}(0)}$. The lag-0 autocovariance $\gamma_{g_i}(0)$ is the variance of $g_i(X)$ for $X\sim\pi$, and thus $\sqrt{\gamma_{g_i}(0)}>0$ provided $g_i$ is not $\pi$-a.e. constant. We propose to use $\mathbf{w}_{M}=\mathbf{\tilde{w}}_{M}$ where the $i$th entry $\tilde{w}_{Mi}$ of $\mathbf{\tilde{w}}_{M}$ is the inverse of the estimate of $\sqrt{\gamma_{g_i}(0)}$ based on the lag-0 empirical autocovariance, i.e., 
\begin{align}
\tilde{w}_{Mi} =1/\sqrt{\rinitf{g_i}(0) },\,\,\,i=1,\dots,d. \label{def:w_M}    
\end{align} 
When we use the choice $\mathbf{w}_{M}=\mathbf{\tilde{w}}_{M}$, we will drop the dependence on $\mathbf{w}_M$ in $\hat{\gamma}^{(\boldsymbol{\delta})}_{M,ij}(\cdot; \mathbf{w}_M)$ and define $ \hat{\gamma}^{(\boldsymbol{\delta})}_{M,ij}(\cdot) = \hat{\gamma}^{(\boldsymbol{\delta})}_{M,ij}(\cdot; \mathbf{\tilde{w}}_M)$. Note for $\tilde{w}_{Mi}$ in~\eqref{def:w_M}, we have $\tilde{w}_{Mi} \to 1/\sqrt{\gamma_{g_i}(0)}$ almost surely as $M\to \infty$ for $i=1,\dots,d$.

We also define the momentLS estimator for the asymptotic variance of $\Sigma$ in the multivariate Markov chain CLT.
We first propose a plug-in estimator that essentially estimates each $\Sigma_{ij}$ element-wise. We define $\hat{\Sigma}_{pw}^{(\boldsymbol{\delta})}\in\mathbb{R}^{d\times d}$ by $\hat{\Sigma}_{pw,ij}^{(\boldsymbol{\delta})} = \sum_{k\in\mathbb{Z}} \hat{\gamma}^{(\boldsymbol{\delta})}_{M,ij}(k;\mathbf{\tilde{w}}_M)$ based on the momentLS estimates of $\gammaij$, for each $1\le i, j \le d$. More concretely, from the definition of $\hat{\gamma}^{(\boldsymbol{\delta})}_{M,ij}(\cdot; \mathbf{w}_M) $ in~\eqref{eq:gammaHatDef} and $\mathbf{\tilde{w}}_M$, we have \begin{align}
\hat{\Sigma}_{pw,ij}^{(\boldsymbol{\delta})}=
\begin{cases}
\sum_{k\in \Z}\Pi_{\delta_{i}} (r_{M,g_i})(k) & i=j\\
\frac{s_is_j}{4} \{ \sum_{k\in\Z}\Pi_{\delta_{ij}}(r_{M,g_i/s_i+g_j/s_j})(k) - \sum_{k\in\Z}\Pi_{\delta_{ij}}(r_{M,g_i/s_i-g_j/s_j})(k)\} & i< j
\end{cases}\label{eq:Sigma_pwDefII}
\end{align} and $\hat{\Sigma}_{pw,ij}:=\hat{\Sigma}_{pw,ji}$ for $i>j$. We note that even though the definition of $\hat{\Sigma}_{pw,ij}^{(\boldsymbol{\delta})}$ in \eqref{eq:Sigma_pwDefII} involves infinite sums, these are still computable quantities, as discussed in \eqref{eq:infiniteSum}.

One drawback of the proposed estimator is that the resulting asymptotic variance estimator $\hat{\Sigma}^{(\boldsymbol{\delta})}_{pw}$ is not necessarily positive semi-definite. In such cases, we propose a principled refinement of the first estimator which results in a valid covariance matrix. The key idea is to re-estimate the eigenvalues of the asymptotic variance matrix using momentLS estimators based on the eigenvectors from the first estimator $\hat{\Sigma}^{(\boldsymbol{\delta})}_{pw}$. To illustrate the idea, suppose $\lambda_1\ge \lambda_2\ge \dots \ge \lambda_d$ are eigenvalues of $\Sigma$ and let $\U = [\U_1,\dots,\U_d]$ where $U_1,...,U_d$ denote the corresponding orthogonal eigenvectors. For expositional simplicity, suppose for now that all eigenvalues are distinct. If we choose $\bc = \U_j$, the asymptotic variance $\sigma^2(\gamma_{\bc^\top g})$ will be $\lambda_j$, since $\sigma^2(\gamma_{\bc^\top g}) = \sum_{k\in \Z} \bc^\top \operatorname{Cov}_\pi (g(X_0), g(X_k)) \bc = \bc^\top \Sigma \bc = \lambda_j$.  
Then, for each $j=1,\dots,d$, if we have a consistent (up to sign) estimator $\hat{\U}_{Mj}$ for $\U_j$, we expect that we can consistently estimate the $j$th eigenvalue $\lambda_j$ using the momentLS estimator $\Pi_{\delta}(\rinitf{\hat{\U}_{Mj}^\top g})$, provided we choose a sufficiently small $\delta>0$. Moreover, considering Proposition \ref{prop:delta_bound}, $\delta = \min\{\delta_1,\dots,\delta_d\}$ will be a valid choice of $\delta$ if each $\delta_i$ is chosen to be valid for estimating $\gamma_{g_i}$ for $i=1,\dots,d$. 

Now we introduce a positive semi-definite refinement estimator $\hat{\Sigma}^{(\boldsymbol{\delta})}_{psd}$ from the plug-in estimator $\hat{\Sigma}^{(\boldsymbol{\delta})}_{pw}$ and the momentLS asymptotic variance estimator $\hat{\Sigma}^{(\boldsymbol{\delta})}$ as a combination of $\hat{\Sigma}^{(\boldsymbol{\delta})}_{pw}$ and $\hat{\Sigma}^{(\boldsymbol{\delta})}_{psd}$. 
Since $\hat{\Sigma}^{(\boldsymbol{\delta})}_{pw}$ is a symmetric matrix by construction, there exist real eigenvalues $\hat{\lambda}_1^{pw}\ge \dots \ge \hat{\lambda}_d^{pw}$ and eigenvectors $[\hat{\U}_{M1},\dots,\hat{\U}_{Md}]$ such that $\hat{\Sigma}^{(\boldsymbol{\delta})}_{pw} \hat{\U}_{Mj} = \hat{\lambda}_j^{pw} \hat{\U}_{Mj}$, for $j=1,\dots,d$. We then re-estimate each $\lambda_j$ by $\sigma^2(\Pi_{
\underline{\delta}}(\rinitf{\hat{\U}_{Mj}^\top g}))$, the momentLS asymptotic variance estimator for $\{\hat{\U}_M^\top g(X_t)\}_{t \ge 0}$, where $\underline{\delta} = \min_{1 \le i \le d} \delta_i$. We define $\hat{\Sigma}^{(\boldsymbol{\delta})}_{psd}$ by 
\begin{align}\label{def:multi-LSE2}
    \hat{\Sigma}^{(\boldsymbol{\delta})}_{psd}= \hat{\mathbf{U}}_M \hat{\bLambda}_{\boldsymbol{\delta}} \hat{\mathbf{U}}_M^\top
\end{align}
where $\hat{\mathbf{U}}_M=[\hat{\U}_{M1},\dots,\hat{\U}_{Md}] \in \R^{d\times d}$ and $\hat{\bLambda}_{\boldsymbol{\delta}}$ is a $d$ by $d$ diagonal matrix with $j$th diagonal entry $\sigma^2(\Pi_{
\underline{\delta}}(\rinitf{\hat{\U}_{Mj}^\top g}))$.
Finally, we define our momentLS estimator $\hat{\Sigma}^{(\boldsymbol{\delta})}$ for the asymptotic variance $\Sigma$ as
\begin{equation}
    \hat{\Sigma}^{(\boldsymbol{\delta})} = 
    \begin{cases}
        \hat{\Sigma}^{(\boldsymbol{\delta})}_{pw} & \lambda_{\rm min} (\hat{\Sigma}_{pw}^{(\boldsymbol{\delta})}) \ge 0 \\
        \hat{\Sigma}^{(\boldsymbol{\delta})}_{psd} &  \lambda_{\rm min} (\hat{\Sigma}_{pw}^{(\boldsymbol{\delta})}) < 0.
    \end{cases}\label{eq:momentLSavar}
    \end{equation}

In the next section, we will show that both the plug-in estimator and the refined positive semi-definite matrix estimator are strongly consistent for $\Sigma$, provided that the input vector $\boldsymbol{\delta}$ is valid in the sense that $0<\delta_i \le \Delta(F_{g_i})$ for all $i=1,\dots,d$. This ensures the strong consistency of the moment LS estimator $\hat{\Sigma}^{(\boldsymbol{\delta})}$ for $\Sigma$.

\section{Statistical guarantees}\label{sec:statistical_guarantees}
First, we have the following Lemma which shows that the momentLS sequence and asymptotic variance estimators based on \textit{random} weights $\bc_M$ are still strongly consistent for $\gamma_{\bc^\top g}$, if $\bc = \lim_{M\to \infty} \bc_M$ $P_x$-almost surely for any initial condition $x \in \mathsf{X}$. 

\begin{lem}\label{lem:stochastic_g_conv}
    Let $\mathbf{c}_M = [c_{M1},\dots,c_{Md}]$ denote a sequence of random vectors satisfying $\lim_{M\to \infty}\mathbf{c}_M = \bc$, $P_x$-almost surely for any initial condition $x \in \mathsf{X}$.
    Suppose the Markov chain $X$ satisfies \ref{cond:harris_ergodicity} - \ref{cond:geometric_ergodicity}. Let $g:\mathsf{X} \to \R^d$ be a function satisfying \ref{cond:integrability}.
    Suppose $\delta$ is chosen so that $0<\delta\le\Delta(F_{\bc^\top g})$.
    We have
        1. $\|\Pi_\delta(\rinitf{\mathbf{c}_M^\top g}) - \gamma_{\bc^\top g}\|_2 \to 0$, and
        2. $|\sigma^2(\Pi_\delta(\rinitf{\mathbf{c}_M^\top g}))- \sigma^2(\gamma_{\bc^\top g})|\to 0$, 
    $P_x$-almost surely, for any initial condition $x \in \mathsf{X}$.
\end{lem}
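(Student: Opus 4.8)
The plan is to recognize $\rinitf{\bc_M^\top g}$ --- the empirical autocovariance sequence of the single univariate process $\{\bc_M^\top g(X_t)\}_{t\ge 0}$ --- as a valid initial autocovariance sequence estimator for $\gamma_{\bc^\top g}$, i.e.\ one satisfying \ref{cond:R1}--\ref{cond:R3} with target $\gamma_{\bc^\top g}$, so that both conclusions follow at once from Theorem~\ref{thm:univariate_conv} applied with the \emph{fixed} univariate function $\bc^\top g\in L^2(\pi)$ in place of $g$ and initial estimator $\rinitf{\bc_M^\top g}$.

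First I would record the bilinear identity $\rinitf{\bc_M^\top g}(k)=\bc_M^\top\,\rinitf{g}(k)\,\bc_M$ for all $k\in\Z$, where $\rinitf{g}(k)\in\R^{d\times d}$ is the multivariate empirical autocovariance matrix of \eqref{eq:autocov}; this holds for every realization because subtracting the sample mean commutes with the linear map $v\mapsto\bc_M^\top v$, so the centered univariate samples equal $\bc_M^\top\tilde g(X_t)$. From this, conditions \ref{cond:R2} (finite support, $\rinitf{\bc_M^\top g}(k)=0$ for $|k|\ge M$) and \ref{cond:R3} (evenness, and $\rinitf{\bc_M^\top g}(0)=\tfrac1M\sum_{t=0}^{M-1}(\bc_M^\top\tilde g(X_t))^2\ge|\rinitf{\bc_M^\top g}(k)|$ by Cauchy--Schwarz) are immediate, and $\bc^\top g\in L^2(\pi)$ since it is a finite linear combination of the $g_i\in L^2(\pi)$, so \ref{cond:integrability} holds for it.

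The substantive step is \ref{cond:R1}: for each fixed $k$, $\rinitf{\bc_M^\top g}(k)\to\gamma_{\bc^\top g}(k)$, $P_x$-a.s.\ for every initial condition $x$. For this I would invoke the multivariate SLLN noted after \eqref{eq:autocov}, which under the standing assumptions --- in particular positive Harris recurrence, \ref{cond:harris_ergodicity} --- gives $\rinitf{g}(k)\to\gamma_g(k)$ entrywise, $P_x$-a.s.\ from every $x$; the integrability required is $E_\pi|g_i(X_0)g_j(X_{|k|})|<\infty$, which holds by \ref{cond:integrability} and Cauchy--Schwarz. Since $\bc_M\to\bc$ $P_x$-a.s.\ by hypothesis, the product satisfies $\rinitf{\bc_M^\top g}(k)=\bc_M^\top\rinitf{g}(k)\bc_M\to\bc^\top\gamma_g(k)\bc=\gamma_{\bc^\top g}(k)$ $P_x$-a.s., the last equality because $\bc^\top\gamma_g(k)\bc=\mathrm{Cov}_\pi(\bc^\top g(X_0),\bc^\top g(X_{|k|}))=\gamma_{\bc^\top g}(k)$ by bilinearity of covariance and evenness of the univariate autocovariance. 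Intersecting the relevant $P_x$-full events over the countable index set $k\in\Z$ shows \ref{cond:R1} holds on a single $P_x$-full event, for every $x$.

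Having verified \ref{cond:R1}--\ref{cond:R3} for $\rinitf{\bc_M^\top g}$ (with target $\gamma_{\bc^\top g}$), and given \ref{cond:harris_ergodicity}--\ref{cond:geometric_ergodicity}, \ref{cond:integrability} for $\bc^\top g$, and the hypothesis $0<\delta\le\Delta(F_{\bc^\top g})$, I would apply Theorem~\ref{thm:univariate_conv} with $g\leftarrow\bc^\top g$ and $\rinitf{g}\leftarrow\rinitf{\bc_M^\top g}$ --- the theorem's hypotheses are exactly these four items. Its conclusions 1 and 3 give $\sum_{k\in\Z}\{\gamma_{\bc^\top g}(k)-\Pi_\delta(\rinitf{\bc_M^\top g})(k)\}^2\to 0$ and $\sigma^2(\Pi_\delta(\rinitf{\bc_M^\top g}))\to\sigma^2(\gamma_{\bc^\top g})$, $P_x$-a.s.\ for every $x$, which are precisely parts 1 and 2 of the present Lemma. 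I expect the only point needing care to be conceptual rather than computational: one should feed the randomly reweighted empirical autocovariance straight into Theorem~\ref{thm:univariate_conv} as ``an'' initial estimator, rather than attempting a perturbation bound comparing $\Pi_\delta(\rinitf{\bc_M^\top g})$ to $\Pi_\delta(\rinitf{\bc^\top g})$ --- the latter route would require an $\ell_2(\Z)$ bound on $\rinitf{\bc_M^\top g}-\rinitf{\bc^\top g}$, hence uniform-in-$M$ control of raw empirical autocovariance sequences, which the chosen approach avoids entirely.
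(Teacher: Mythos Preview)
Your proposal is correct and follows essentially the same approach as the paper: verify that $\rinitf{\bc_M^\top g}$ satisfies \ref{cond:R1}--\ref{cond:R3} for the target $\gamma_{\bc^\top g}$, note $\bc^\top g\in L^2(\pi)$, and then invoke Theorem~\ref{thm:univariate_conv}. The only minor difference is in how \ref{cond:R1} is argued: you appeal directly to the entrywise convergence $\rinitf{g}(k)\to\gamma_g(k)$ and the bilinear identity $\rinitf{\bc_M^\top g}(k)=\bc_M^\top\rinitf{g}(k)\bc_M$, whereas the paper symmetrizes the cross term and reduces to the \emph{univariate} SLLN (Lemma~8 of \citet{berg2022efficient}) applied to $g_i$, $g_j$, and $g_i+g_j$, extracting the cross contribution via $r_{g_i+g_j,M}-r_{g_i,M}-r_{g_j,M}$; both routes yield the same conclusion and neither is materially shorter or more general.
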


The proof of Lemma~\ref{lem:stochastic_g_conv} is in Supplementary Material\ifshow~\ref{sec:proof_of_lemma1}. \else~S2.1. \fi Now we present the following Theorem which guarantees the almost sure convergence of the plug-in sequence and asymptotic variance estimators. 

\begin{thm}\label{thm:est1-conv}
Suppose the Markov chain $X$ satisfies \ref{cond:harris_ergodicity} - \ref{cond:geometric_ergodicity}. Let $g:\mathsf{X} \to \R^d$ be a function satisfying \ref{cond:integrability}. Let $\mathbf{w}_M \in \R^d$ be a random vector such that $\mathbf{w}_M \to \mathbf{w}=[w_1,\dots,w_d]$ $P_x$-almost surely for any $x \in \mathsf{X}$ and $w_i \ne 0$ for all $1\le i\le d$. Suppose $\boldsymbol{\delta}$ is given so that $0<\delta_{i}\le \Delta(F_{g_i})$. For any $1\le i,j\le d$, we have
\begin{enumerate}
\setlength\itemsep{0em}
    \item $\sum_{k\in\Z} \{\hat{\gamma}^{(\boldsymbol{\delta})}_{M,ij}(k;\mathbf{w}_M) - \gammaij(k)\}^2 \to 0 $
    \item  $ |\sum_{k\in\Z} \hat{\gamma}^{(\boldsymbol{\delta})}_{M,ij}(k;\mathbf{w}_M) -\sum_{k\in\Z} \gammaij(k)| \to 0$ 
\end{enumerate}
$P_x$-almost surely, as $M\to \infty$, for any initial condition $x \in \mathsf{X}$.
\end{thm}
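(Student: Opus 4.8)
The plan is to reduce the entire statement to the univariate consistency result Lemma~\ref{lem:stochastic_g_conv}, handling the diagonal and off-diagonal entries separately and using the decomposition~\eqref{eq:gamma_ij_mseq} of $\gammaij$ together with Proposition~\ref{prop:delta_bound} to verify that a single hyperparameter is simultaneously valid for the two shifted univariate processes built from $g_i$ and $g_j$. For the diagonal entries $i=j$ we have $\hat{\gamma}^{(\boldsymbol{\delta})}_{M,ii}(\cdot;\mathbf{w}_M)=\Pi_{\delta_i}(r_{M,g_i})$, which is the special case $\mathbf{c}_M\equiv e_i$ (the $i$th standard basis vector), $\delta=\delta_i$ of Lemma~\ref{lem:stochastic_g_conv}: here $F_{e_i^\top g}=F_{g_i}$ and $0<\delta_i\le\Delta(F_{g_i})$ by hypothesis, so the lemma delivers parts~1 and~2 directly.

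For the off-diagonal entries $i<j$, I would introduce the random vectors $\mathbf{c}^{(\pm)}_M\in\R^d$ having $i$th coordinate $w_{Mi}$, $j$th coordinate $\pm w_{Mj}$, and all other coordinates zero, so that $\mathbf{c}^{(\pm)}_M\to\mathbf{c}^{(\pm)}$ $P_x$-a.s.\ (the limits having $w_i$ and $\pm w_j$ in coordinates $i,j$ and zero elsewhere), $\mathbf{c}^{(\pm)\top}_Mg=w_{Mi}g_i\pm w_{Mj}g_j$, and, by~\eqref{eq:gamma_ij_mseq} with $(a,b)=(w_i,w_j)$,
\[
\gammaij=\frac{1}{4w_iw_j}\bigl(\gamma_{\mathbf{c}^{(+)\top}g}-\gamma_{\mathbf{c}^{(-)\top}g}\bigr).
\]
Proposition~\ref{prop:delta_bound} gives $\operatorname{Supp}(F_{\mathbf{c}^{(\pm)\top}g})\subseteq\operatorname{Supp}(F_{g_i})\cup\operatorname{Supp}(F_{g_j})$, hence $\Delta(F_{\mathbf{c}^{(\pm)\top}g})\ge\min\{\Delta(F_{g_i}),\Delta(F_{g_j})\}\ge\min\{\delta_i,\delta_j\}=\delta_{ij}>0$, so $\delta_{ij}$ is a valid input for Lemma~\ref{lem:stochastic_g_conv} applied to each of $\mathbf{c}^{(\pm)}_M$. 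Writing $A_M=\Pi_{\delta_{ij}}(r_{M,\mathbf{c}^{(+)\top}_Mg})$, $B_M=\Pi_{\delta_{ij}}(r_{M,\mathbf{c}^{(-)\top}_Mg})$, $A=\gamma_{\mathbf{c}^{(+)\top}g}$, $B=\gamma_{\mathbf{c}^{(-)\top}g}$, the lemma yields $\|A_M-A\|_2\to0$, $\|B_M-B\|_2\to0$, $|\sigma^2(A_M)-\sigma^2(A)|\to0$, $|\sigma^2(B_M)-\sigma^2(B)|\to0$, all $P_x$-a.s.

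It remains to combine these. With $\kappa_M=(4w_{Mi}w_{Mj})^{-1}$ and $\kappa=(4w_iw_j)^{-1}$ we have $\kappa_M\to\kappa$ $P_x$-a.s.\ and $\kappa$ finite since $w_i,w_j\ne0$, while $\hat{\gamma}^{(\boldsymbol{\delta})}_{M,ij}(\cdot;\mathbf{w}_M)=\kappa_M(A_M-B_M)$ and $\gammaij=\kappa(A-B)$. From the identity $\kappa_M(A_M-B_M)-\kappa(A-B)=\kappa_M[(A_M-A)-(B_M-B)]+(\kappa_M-\kappa)(A-B)$, the triangle inequality, and the fact that $|\kappa_M|$ is eventually bounded on the almost-sure event, part~1 reduces to $\|A-B\|_2<\infty$; using linearity of $\sigma^2$ and absolute convergence of the sums involved, part~2 reduces likewise to $|\sigma^2(A)-\sigma^2(B)|<\infty$. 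Both finiteness facts hold because, by Proposition~\ref{prop:rep_gamma_ij}, the representing measures of $A$ and $B$ are supported in $[-1+\delta(Q),1-\delta(Q)]$, so their entries decay geometrically. Intersecting the finitely many probability-one events proves parts~1 and~2 for $i<j$; the case $i>j$ is immediate because $\hat{\gamma}^{(\boldsymbol{\delta})}_{M,ij}(\cdot;\mathbf{w}_M)=\hat{\gamma}^{(\boldsymbol{\delta})}_{M,ji}(\cdot;\mathbf{w}_M)$ by definition and $\gammaij(k)=\gammaji(k)$ for all $k\in\Z$ by reversibility.

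Nothing here is deep beyond Lemma~\ref{lem:stochastic_g_conv}; the two spots that need care are checking that the single hyperparameter $\delta_{ij}$ is valid for both $w_ig_i+w_jg_j$ and $w_ig_i-w_jg_j$ (exactly what Proposition~\ref{prop:delta_bound} supplies), and ensuring the random normalizer $\kappa_M$ does not spoil the $\ell_2$-convergence, which the decomposition above handles once one records that the limiting sequences lie in $\ell_2$ with finite $\sigma^2$.
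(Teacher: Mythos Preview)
Your proposal is correct and follows essentially the same route as the paper: reduce the diagonal case to Theorem~\ref{thm:univariate_conv}/Lemma~\ref{lem:stochastic_g_conv}, handle $i\ne j$ via the polarization identity~\eqref{eq:gamma_ij_mseq}, invoke Proposition~\ref{prop:delta_bound} to certify that $\delta_{ij}$ is admissible for both $w_ig_i\pm w_jg_j$, apply Lemma~\ref{lem:stochastic_g_conv} with the random vectors $\mathbf{c}_M^{(\pm)}$, and absorb the random normalizer $(4w_{Mi}w_{Mj})^{-1}$ via an $a_Mb_M-ab$ splitting. The only cosmetic difference is that the paper splits the estimate into two terms I and II (one for each sign) before applying the scalar--sequence inequality, whereas you first combine and then decompose; the content is identical.
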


The proof of Theorem~\ref{thm:est1-conv} is in Supplementary Material\ifshow~\ref{sec:proof_of_Theorem2}. \else~S2.2. \fi The second statement of Theorem~\ref{thm:est1-conv} implies as a Corollary the strong consistency of $\hat{\Sigma}_{pw}^{\boldsymbol{\delta}}$, based on the weights $\tilde{\mathbf{w}}_{M}$ in~\eqref{def:w_M}, for $\Sigma$:

\begin{cor}\label{cor:Sigma_pw_consistency}
Suppose the Markov chain $X$ satisfies \ref{cond:harris_ergodicity} - \ref{cond:geometric_ergodicity}. Let $g:\mathsf{X} \to \R^d$ be a function satisfying \ref{cond:integrability}. Assume that ${\rm Var}_\pi(g_j(X_0))>0$ for all $j=1,\dots,d$. Suppose $\boldsymbol{\delta}$ is given so that $0<\delta_{i}\le \Delta(F_{g_i})$ for $i=1,\dots,d$. Then the plug-in estimator $\hat{\Sigma}_{pw}^{(\boldsymbol{\delta})} \in \R^{d\times d}$ is $P_x$-strongly consistent for $\Sigma \in \R^{d\times d}$ for any $x\in \mathsf{X}$.
\end{cor}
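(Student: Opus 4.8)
The plan is to obtain the corollary as an essentially immediate consequence of the second statement of Theorem~\ref{thm:est1-conv}, applied to the data-driven weights $\tilde{\mathbf{w}}_M$ from~\eqref{def:w_M}. The only substantive points to check are (i) that $\tilde{\mathbf{w}}_M$ is an admissible weight sequence in the sense required by Theorem~\ref{thm:est1-conv}, (ii) that the componentwise estimator $\hat{\Sigma}_{pw,ij}^{(\boldsymbol{\delta})}$ coincides with $\sum_{k\in\Z}\hat{\gamma}^{(\boldsymbol{\delta})}_{M,ij}(k;\tilde{\mathbf{w}}_M)$, and (iii) that entrywise almost sure convergence upgrades to convergence of the matrix.

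First I would verify admissibility of the weights. The empirical lag-$0$ autocovariance satisfies $\rinitf{g_i}(0)\to\gamma_{g_i}(0)={\rm Var}_\pi(g_i(X_0))$ $P_x$-a.s.\ for every initial condition $x$, by the SLLN for the empirical autocovariance sequence~\eqref{eq:autocov} (condition~\ref{cond:R1} applied at lag $0$); under the hypothesis ${\rm Var}_\pi(g_i(X_0))>0$ this limit is strictly positive. Hence, on a set of $P_x$-probability one, $\rinitf{g_i}(0)>0$ for all large $M$, so that $\tilde{w}_{Mi}=1/\sqrt{\rinitf{g_i}(0)}$ is well defined, and $\tilde{w}_{Mi}\to w_i:=1/\sqrt{\gamma_{g_i}(0)}\ne 0$. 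Thus $\tilde{\mathbf{w}}_M\to\mathbf{w}=(w_1,\dots,w_d)$ $P_x$-a.s.\ with all $w_i\ne 0$, which is exactly the requirement on the weight vector in Theorem~\ref{thm:est1-conv}.

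Next I would note that, directly from the definitions, $\hat{\Sigma}_{pw,ij}^{(\boldsymbol{\delta})}=\sum_{k\in\Z}\hat{\gamma}^{(\boldsymbol{\delta})}_{M,ij}(k;\tilde{\mathbf{w}}_M)$ for all $1\le i,j\le d$: for $i=j$ this is immediate, for $i<j$ one applies $\sigma^2(\cdot)=\sum_{k\in\Z}(\cdot)(k)$ to the two-term combination in~\eqref{eq:gammaHatDef} and substitutes $\tilde{w}_{Mi}=1/s_i$ to recover~\eqref{eq:Sigma_pwDefII}, and the case $i>j$ follows from the symmetry $\hat{\Sigma}_{pw,ij}^{(\boldsymbol{\delta})}=\hat{\Sigma}_{pw,ji}^{(\boldsymbol{\delta})}$. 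With the given $\boldsymbol{\delta}$ satisfying $0<\delta_i\le\Delta(F_{g_i})$ — the same condition under which Theorem~\ref{thm:est1-conv} is stated (validity of the $\delta_{ij}$ for the linear combinations $w_{Mi}g_i\pm w_{Mj}g_j$ being ensured via Proposition~\ref{prop:delta_bound}) — statement~2 of Theorem~\ref{thm:est1-conv} applied with $\mathbf{w}_M=\tilde{\mathbf{w}}_M$ yields $\hat{\Sigma}_{pw,ij}^{(\boldsymbol{\delta})}\to\sum_{k\in\Z}\gammaij(k)=\Sigma_{ij}$ $P_x$-a.s., where the last identity is the $(i,j)$ entry of~\eqref{eq:avar_sum}. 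Finally, intersecting the finitely many $P_x$-probability-one events over $1\le i,j\le d$ gives a single almost sure event on which $\hat{\Sigma}_{pw}^{(\boldsymbol{\delta})}\to\Sigma$ entrywise, and by equivalence of norms on $\R^{d\times d}$ this is convergence of $\hat{\Sigma}_{pw}^{(\boldsymbol{\delta})}$ to $\Sigma$ in any matrix norm, i.e.\ $P_x$-strong consistency. I do not anticipate a real obstacle here: the content is entirely in Theorem~\ref{thm:est1-conv}, and the only mild care needed is the admissibility of the random weights $\tilde{\mathbf{w}}_M$, for which the positive-variance assumption is precisely what rules out a degenerate limit.
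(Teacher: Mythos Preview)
Your proposal is correct and follows exactly the route the paper takes: the paper states the corollary as an immediate consequence of statement~2 of Theorem~\ref{thm:est1-conv} applied with the weights $\tilde{\mathbf{w}}_M$ from~\eqref{def:w_M}, noting only that the positive-variance assumption guarantees $\tilde{w}_{Mi}\to 1/\sqrt{\gamma_{g_i}(0)}\ne 0$. Your write-up is actually more detailed than the paper's, which gives no formal proof beyond the one-line reduction.
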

In Corollary \ref{cor:Sigma_pw_consistency}, the condition ${\rm Var}_\pi(g_j(X_0))>0$ implies that the function $g_j$ is not $\pi$-a.e. constant, and is necessary in order to ensure $\tilde{w}_{Mi}$ in~\eqref{def:w_M} converge to the finite value $1/\text{Var}_{\pi}(g_i(X_0))$ for $i=1,...,d$.

We now move onto the convergence results for the refined positive semi-definite estimator $\hat{\Sigma}^{(\boldsymbol{\delta})}_{psd}$. We need the following Lemma, which ensures that the convergence of the momentLS estimator for $\sigma^2(\gamma_{\mathbf{c}^\top g})$ can be controlled uniformly over $\bc$. In particular, recall that $\hat{\Sigma}^{(\boldsymbol{\delta})}_{psd}$ estimator relies on estimating the asymptotic variance based on $\hat{\U}_{Mj}^\top g(X_t)$ for $j=1,\dots,d$. Lemma \ref{lem:sup_c_conv} allows us to control the distance $|\sigma^2(\Pi_\delta(\rinitf{\hat{\U}_{Mj}^\top g}))- \sigma^2(\gamma_{\hat{\U}_{Mj}^\top g})|$ based on the convergence in the worst-case direction.

\begin{lem}\label{lem:sup_c_conv}
    Suppose the Markov chain $X$ satisfies \ref{cond:harris_ergodicity} - \ref{cond:geometric_ergodicity}. Let $g:\mathsf{X} \to \R^d$ be a function which satisfies \ref{cond:integrability}.
    Suppose $\delta$ is chosen so that $0<\delta\leq\min_{1\le i \le d} \Delta(F_{g_i})$.
    We have $\sup_{\bc; \|\bc\|_2 =1}|\sigma^2(\Pi_\delta(\rinitf{\mathbf{c}^\top g}))- \sigma^2(\gamma_{\bc^\top g})|\to 0$ $P_x$-almost surely, for any $x\in \mathsf{X}$.
\end{lem}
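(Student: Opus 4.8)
The plan is to reduce the uniform statement over $\bc$ to a finite problem by exploiting the fact that $\sigma^2(\gamma_{\bc^\top g})$ and its momentLS estimator both depend on $\bc$ quadratically. First I would observe that for $\bc \in \R^d$,
\[
\sigma^2(\gamma_{\bc^\top g}) = \bc^\top \Sigma \bc = \sum_{i,j} c_i c_j \, \sigma^2(\gamma_{g_i,g_j}),
\]
so the true target is a homogeneous degree-2 polynomial in $\bc$ with coefficients $\Sigma_{ij} = \sigma^2(\gammaij)$. I would then expand the momentLS estimator analogously. Using the polarization-type identity
\[
\rinitf{\bc^\top g} = \sum_{i} c_i^2 \, \rinitf{g_i} + \sum_{i<j} c_i c_j \,\{\rinitf{g_i+g_j} - \rinitf{g_i} - \rinitf{g_j}\}
\]
together with linearity of the map $m \mapsto \rinitf{m}$ in the underlying function — actually it is cleaner to use the bilinearity of the empirical cross-covariance and the identity $\rinitf{g_i+g_j}(k) = \rinitf{g_i}(k) + \rinitf{g_j}(k) + \rinitf{g_i,g_j}(k) + \rinitf{g_j,g_i}(k)$ — I would like to write $\sigma^2(\Pi_\delta(\rinitf{\bc^\top g}))$ as a quadratic form in $\bc$. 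The catch is that the projection $\Pi_\delta$ is \emph{not} linear, so $\Pi_\delta(\rinitf{\bc^\top g})$ is not a quadratic function of $\bc$ in general. This is the main obstacle.

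To get around the nonlinearity, I would not try to make $\hat\sigma^2(\bc):=\sigma^2(\Pi_\delta(\rinitf{\bc^\top g}))$ exactly quadratic, but instead sandwich it between quadratic quantities that both converge uniformly. The cleanest route is a compactness plus equicontinuity argument: (i) the unit sphere $S^{d-1} = \{\bc : \|\bc\|_2 = 1\}$ is compact; (ii) for each fixed $\bc \in S^{d-1}$, Lemma~\ref{lem:stochastic_g_conv} (with the constant sequence $\bc_M \equiv \bc$, valid since $0 < \delta \le \min_i \Delta(F_{g_i}) \le \Delta(F_{\bc^\top g})$ by Proposition~\ref{prop:delta_bound}) gives $|\hat\sigma^2(\bc) - \sigma^2(\gamma_{\bc^\top g})| \to 0$ $P_x$-a.s.; (iii) I would establish a Lipschitz-in-$\bc$ bound, uniform in $M$ on a set of full probability, for both $\bc \mapsto \hat\sigma^2(\bc)$ and $\bc \mapsto \sigma^2(\gamma_{\bc^\top g})$. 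Given (i)--(iii), a standard argument — cover $S^{d-1}$ by finitely many balls of radius $\eta$, apply pointwise convergence at the centers, control the oscillation within each ball by the Lipschitz bound — upgrades pointwise to uniform a.s.\ convergence.

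The work therefore concentrates on step (iii), the equi-Lipschitz bound. For the true quantity this is immediate: $\sigma^2(\gamma_{\bc^\top g}) = \bc^\top\Sigma\bc$ is smooth with gradient bounded by $2\|\Sigma\|$ on $S^{d-1}$. For the estimator, I would use that $\Pi_\delta$ is a projection onto a closed convex set in $\ell_2(\Z)$, hence $1$-Lipschitz in $\|\cdot\|_2$, so
\[
\|\Pi_\delta(\rinitf{\bc^\top g}) - \Pi_\delta(\rinitf{\tilde\bc^\top g})\|_2 \le \|\rinitf{\bc^\top g} - \rinitf{\tilde\bc^\top g}\|_2,
\]
and the right side, by the bilinear expansion above, is bounded by a constant (depending on the finitely many quantities $\|\rinitf{g_i}\|_2$ and $\|\rinitf{g_i,g_j}\|_2$, all of which converge a.s.\ hence are a.s.\ bounded in $M$) times $\|\bc - \tilde\bc\|_2$. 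To pass from $\|\cdot\|_2$-closeness of the sequences to closeness of their $\sigma^2$ values, I would use that each $\Pi_\delta(\rinitf{\bc^\top g})$ has a representing measure supported in $[-1+\delta,1-\delta]$, so $\sigma^2(m) = \int \frac{1+\alpha}{1-\alpha}\,\mu(d\alpha)$ is a continuous linear functional of $m$ on the moment space with operator norm controlled by $\delta$ — concretely, $|\sigma^2(m) - \sigma^2(m')| \le C_\delta \|m - m'\|_2$ for moment sequences supported in $[-1+\delta, 1-\delta]$, a bound already implicit in the univariate theory of~\citet{berg2022efficient}. Chaining these gives the desired uniform Lipschitz bound for $\hat\sigma^2$, with constant that is a.s.\ bounded in $M$, completing the covering argument. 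I would finish by noting that the exceptional null set can be taken independent of $\bc$ since it is built from the countably many null sets for the $\rinitf{g_i}$, $\rinitf{g_i,g_j}$ convergences plus the null sets from Lemma~\ref{lem:stochastic_g_conv} applied at a countable dense subset of $S^{d-1}$.
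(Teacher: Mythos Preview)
Your compactness-plus-equicontinuity strategy is natural, but step (iii) contains a genuine gap: the inequality $|\sigma^2(m)-\sigma^2(m')|\le C_\delta\|m-m'\|_2$ for moment sequences with representing measures in $[-1+\delta,1-\delta]$ is \emph{false}. Since $\sigma^2$ is linear, such a bound would make the functional $m\mapsto\sigma^2(m)$ bounded on the $\ell_2$-span of $\{x_\alpha:|\alpha|\le 1-\delta\}$, hence representable as $\langle\phi,\cdot\rangle$ for some $\phi\in\ell_2(\Z)$. One would need $\langle\phi,x_\alpha\rangle=\frac{1+\alpha}{1-\alpha}$ for all $|\alpha|\le 1-\delta$; but $\alpha\mapsto\langle\phi,x_\alpha\rangle=\phi(0)+\sum_{k\ge 1}(\phi(k)+\phi(-k))\alpha^k$ is analytic on $(-1,1)$, so equality on an interval forces $\phi(k)+\phi(-k)=2$ for every $k\ge 1$, whence $\phi(k)^2+\phi(-k)^2\ge 2$ and $\phi\notin\ell_2$. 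Taking positive and negative parts of signed measures witnessing this unboundedness yields moment sequences $m_n,m_n'$ in the cone with $\|m_n-m_n'\|_2\to 0$ but $|\sigma^2(m_n)-\sigma^2(m_n')|=1$. A second, independent issue: your Lipschitz bound for $\bc\mapsto\rinitf{\bc^\top g}$ rests on $\|\rinitf{g_i}\|_2$ (and the cross terms) being a.s.\ bounded in $M$; under the paper's hypothesis $g_i\in L^2(\pi)$ this is neither established nor obvious, since pointwise convergence $\rinitf{g_i}(k)\to\gamma_{g_i}(k)$ does not control the $\ell_2$ norm of a sequence with $O(M)$ nonzero entries.

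The paper's proof avoids both obstacles by never passing through an $\ell_2$-Lipschitz estimate for $\sigma^2$. It bounds $\|\Pi_\delta(\rinitf{\bc^\top g})-\gamma_{\bc^\top g}\|_2^2$ directly via the momentLS optimality inequality, by $\sup_{\alpha\in[-1+\delta,1-\delta]}|\langle x_\alpha,\rinitf{\bc^\top g}-\gamma_{\bc^\top g}\rangle|$ times the total masses of the two representing measures. Expanding $\rinitf{\bc^\top g}(k)=\sum_{i,j}c_ic_j\,\tilde r_{ij,M}(k)$ and applying Cauchy--Schwarz in $(i,j)$ reduces $\sup_{\bc}\sup_\alpha$ to the finite collection $\max_{i,j}\sup_\alpha|\langle x_\alpha,\tilde r_{ij,M}-\gamma_{g_i,g_j}\rangle|$, each of which tends to zero by the univariate theory; because $\langle x_\alpha,\cdot\rangle$ has geometrically decaying weights, this step does not need $\|\tilde r_{ij,M}\|_2$ bounded. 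The uniform $\sigma^2$ convergence then follows from this uniform $\ell_2$ convergence together with a uniform bound on the representing-measure masses (Lemma~\ref{lem:rep_Fhat_bounded}), via the argument of Lemma~7 in \citet{berg2022efficient}, not through any Lipschitz control.
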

The proof is deferred to Supplementary Material\ifshow~\ref{sec:proof_of_Lemma2}. \else~S2.3. \fi
The proof follows similar lines as in the proof of the strong convergence of $\sigma^2( \Pi_\delta(r_{\bc^\top g}))$ for fixed $\bc$, but requires some additional arguments to control the cross-products terms $\gammaij(\cdot)$ by diagonal terms $\gamma_{g_i}(\cdot)$ and $\gamma_{g_j}(\cdot)$ and obtain a bound which is independent of $\bc$ using Holder's inequality. Finally, we present the following Theorem which shows that the refined estimator $\hat{\Sigma}^{(\boldsymbol{\delta})}_{psd}$ is also strongly consistent for the true asymptotic variance $\Sigma$. The proof is deferred to Supplementary Material\ifshow~\ref{sec:proof_of_theorem3}. \else~S2.4. \fi
\begin{thm}\label{thm:mtv2}
    Assume the same conditions as in Corollary \ref{cor:Sigma_pw_consistency}. 
    The refined estimator $\hat{\Sigma}^{(\boldsymbol{\delta})}_{psd}$ converges $P_x$-almost surely to the true asymptotic variance $\Sigma$ for any initial condition $x \in \mathsf{X}$.
\end{thm}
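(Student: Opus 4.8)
The plan is to prove $\hat{\Sigma}^{(\boldsymbol{\delta})}_{psd}\to\Sigma$ by inserting an auxiliary matrix between the two, and in doing so to avoid ever claiming convergence of the (possibly non-unique) eigenvectors $\hat{\mathbf{U}}_M$. Write $\|\cdot\|$ for the Frobenius norm, which is invariant under multiplication by orthogonal matrices. I work on the intersection of the probability-one event of Corollary~\ref{cor:Sigma_pw_consistency}, on which $\hat{\Sigma}^{(\boldsymbol{\delta})}_{pw}\to\Sigma$, with the probability-one event of Lemma~\ref{lem:sup_c_conv} taken at $\delta=\underline{\delta}=\min_{1\le i\le d}\delta_i$. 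This choice of $\delta$ is admissible in Lemma~\ref{lem:sup_c_conv} because, by Proposition~\ref{prop:delta_bound}, ${\rm Supp}(F_{\bc^\top g})\subseteq\cup_{i:\bc_i\ne0}{\rm Supp}(F_{g_i})\subseteq\ralphad{\underline{\delta}}$ for every $\bc$, so $\Delta(F_{\bc^\top g})\ge\underline{\delta}$. Fix a sample path in this event and let $\hat{\mathbf{U}}_M=[\hat{\U}_{M1},\dots,\hat{\U}_{Md}]$ be orthogonal with $\hat{\mathbf{U}}_M^\top\hat{\Sigma}^{(\boldsymbol{\delta})}_{pw}\hat{\mathbf{U}}_M=\hat{\bLambda}^{pw}_M:={\rm diag}(\hat\lambda^{pw}_1,\dots,\hat\lambda^{pw}_d)$, as in the construction of $\hat{\Sigma}^{(\boldsymbol{\delta})}_{psd}$.

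\emph{Step 1: the estimated eigenbasis nearly diagonalizes $\Sigma$.} Since $\hat{\mathbf{U}}_M$ is orthogonal, $\|\hat{\mathbf{U}}_M^\top\Sigma\hat{\mathbf{U}}_M-\hat{\bLambda}^{pw}_M\|=\|\hat{\mathbf{U}}_M^\top(\Sigma-\hat{\Sigma}^{(\boldsymbol{\delta})}_{pw})\hat{\mathbf{U}}_M\|=\|\Sigma-\hat{\Sigma}^{(\boldsymbol{\delta})}_{pw}\|\to0$. As $\hat{\bLambda}^{pw}_M$ is diagonal, this forces every off-diagonal entry of $\hat{\mathbf{U}}_M^\top\Sigma\hat{\mathbf{U}}_M$ to tend to $0$. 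Let $\tilde{\bLambda}_M$ be the diagonal matrix with $j$th entry $\hat{\U}_{Mj}^\top\Sigma\hat{\U}_{Mj}$ (which equals the $j$th diagonal entry of $\hat{\mathbf{U}}_M^\top\Sigma\hat{\mathbf{U}}_M$). Then $\hat{\mathbf{U}}_M^\top\Sigma\hat{\mathbf{U}}_M-\tilde{\bLambda}_M$ has zero diagonal and off-diagonal entries identical to those of $\hat{\mathbf{U}}_M^\top\Sigma\hat{\mathbf{U}}_M$, so $\|\hat{\mathbf{U}}_M^\top\Sigma\hat{\mathbf{U}}_M-\tilde{\bLambda}_M\|\to0$, and hence, multiplying by $\hat{\mathbf{U}}_M$ on the left and $\hat{\mathbf{U}}_M^\top$ on the right, $\|\Sigma-\hat{\mathbf{U}}_M\tilde{\bLambda}_M\hat{\mathbf{U}}_M^\top\|\to0$.

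\emph{Step 2: the re-estimated eigenvalues hit the right targets.} Recall $\sigma^2(\gamma_{\bc^\top g})=\bc^\top\Sigma\bc$, so the $j$th diagonal entry of $\tilde{\bLambda}_M$ is $\sigma^2(\gamma_{\hat{\U}_{Mj}^\top g})$, while the $j$th diagonal entry of $\hat{\bLambda}_{\boldsymbol{\delta}}$ in~\eqref{def:multi-LSE2} is $\sigma^2(\Pi_{\underline{\delta}}(\rinitf{\hat{\U}_{Mj}^\top g}))$. Since each $\hat{\U}_{Mj}$ is a unit vector, Lemma~\ref{lem:sup_c_conv} gives
\[
\max_{1\le j\le d}\bigl|\sigma^2(\Pi_{\underline{\delta}}(\rinitf{\hat{\U}_{Mj}^\top g}))-\hat{\U}_{Mj}^\top\Sigma\hat{\U}_{Mj}\bigr|\le\sup_{\bc:\|\bc\|_2=1}\bigl|\sigma^2(\Pi_{\underline{\delta}}(\rinitf{\bc^\top g}))-\sigma^2(\gamma_{\bc^\top g})\bigr|\to0,
\]
that is, $\|\hat{\bLambda}_{\boldsymbol{\delta}}-\tilde{\bLambda}_M\|\to0$. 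Using orthogonality of $\hat{\mathbf{U}}_M$ once more, $\|\hat{\Sigma}^{(\boldsymbol{\delta})}_{psd}-\hat{\mathbf{U}}_M\tilde{\bLambda}_M\hat{\mathbf{U}}_M^\top\|=\|\hat{\bLambda}_{\boldsymbol{\delta}}-\tilde{\bLambda}_M\|\to0$. Combining with Step 1 by the triangle inequality, $\|\hat{\Sigma}^{(\boldsymbol{\delta})}_{psd}-\Sigma\|\to0$ on this probability-one event, which is the assertion.

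The main obstacle is precisely that $\hat{\mathbf{U}}_M$ need not converge (and is not even uniquely defined) when $\Sigma$ has repeated eigenvalues, so the naive argument ``$\hat{\mathbf{U}}_M\to\mathbf{U}$ and $\hat{\bLambda}_{\boldsymbol{\delta}}\to\bLambda$, hence the product converges'' is unavailable. Comparing instead against $\hat{\mathbf{U}}_M\tilde{\bLambda}_M\hat{\mathbf{U}}_M^\top$, which differs from $\Sigma$ only by the (vanishing) off-diagonal part of $\hat{\mathbf{U}}_M^\top\Sigma\hat{\mathbf{U}}_M$, circumvents this. The price is that one needs the \emph{uniform} control of Lemma~\ref{lem:sup_c_conv}, not merely pointwise consistency of $\sigma^2(\Pi_{\underline{\delta}}(\rinitf{\bc^\top g}))$ for each fixed $\bc$, since the directions $\hat{\U}_{Mj}$ along which the eigenvalues are re-estimated are themselves data-dependent; the admissibility of the single hyperparameter $\underline{\delta}$ along all such directions is exactly what Proposition~\ref{prop:delta_bound} guarantees.
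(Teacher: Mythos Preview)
Your proof is correct and takes a genuinely different, more elementary route than the paper's. The paper decomposes $\|\hat{\Sigma}^{(\boldsymbol{\delta})}_{psd}-\Sigma\|_F$ into an eigenvalue part $\sum_j|\hat{\lambda}_j-\lambda_j|$ and an eigenprojection part $\sum_j\lambda_{(j)}\|\hat{\U}_{MS_j}\hat{\U}_{MS_j}^\top-\U_{S_j}\U_{S_j}^\top\|_F$, and then invokes the Davis--Kahan $\sin\Theta$ theorem (in the form of \cite{yu2015useful}) twice: once to control the eigenprojection difference, and once more to produce a rotation $\hat{\mathbf{O}}_{Mj}$ so that $|\hat{\U}_{Mi}^\top\Sigma\hat{\U}_{Mi}-\U_i^\top\Sigma\U_i|$ can be bounded. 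Your argument sidesteps all of this perturbation machinery by comparing $\hat{\Sigma}^{(\boldsymbol{\delta})}_{psd}$ not to $\mathbf{U}\bLambda\mathbf{U}^\top$ but to the data-dependent intermediary $\hat{\mathbf{U}}_M\tilde{\bLambda}_M\hat{\mathbf{U}}_M^\top$; the orthogonal invariance of the Frobenius norm then reduces everything to the off-diagonal of $\hat{\mathbf{U}}_M^\top\Sigma\hat{\mathbf{U}}_M$ vanishing (Step~1) and the uniform-in-direction consistency of Lemma~\ref{lem:sup_c_conv} (Step~2). This is shorter, avoids any case analysis over repeated eigenvalues, and uses no external eigenvector perturbation results. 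One cosmetic remark: your justification that $\underline{\delta}$ is admissible for Lemma~\ref{lem:sup_c_conv} via Proposition~\ref{prop:delta_bound} is correct but indirect; the hypothesis of Lemma~\ref{lem:sup_c_conv} is simply $0<\delta\le\min_i\Delta(F_{g_i})$, which follows immediately from $\underline{\delta}=\min_i\delta_i$ and the standing assumption $\delta_i\le\Delta(F_{g_i})$.
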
 
Finally, we summarize the strong consistency results for the momentLS auto(cross)-covariance sequence estimators $\hat{\gamma}^{(\boldsymbol{\delta})}_{M,ij}$ and asymptotic variance estimator $\hat{\Sigma}^{(\boldsymbol{\delta})}$ defined in \eqref{eq:gammaHatDef} and \eqref{eq:momentLSavar} respectively. This Corollary is a direct consequence of Theorem \ref{thm:est1-conv}, Corollary \ref{cor:Sigma_pw_consistency}, and Theorem \ref{thm:mtv2}.
\begin{cor}
     Assume the same conditions as in Corollary \ref{cor:Sigma_pw_consistency}. We have
     \begin{enumerate}
     \setlength\itemsep{0em}
         \item $\max_{1\le i,j \le d}\|\hat{\gamma}^{(\boldsymbol{\delta})}_{M,ij} - \gamma_{g_i,g_j}\|_2\to 0$, $P_x$-almost surely
         \item $\hat{\Sigma}^{(\boldsymbol{\delta})}$ is a positive semi-definite matrix, and $\|\hat{\Sigma}^{(\boldsymbol{\delta})} - \Sigma\|_F \to 0$, $P_x$-almost surely
     \end{enumerate}
     for each initial condition $x\in\mathsf{X}$.
\end{cor}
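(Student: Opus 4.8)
The plan is to obtain both assertions by assembling the three preceding results; essentially no new analysis is required beyond a brief check of positive semi-definiteness. First I would note that the hypotheses of Corollary~\ref{cor:Sigma_pw_consistency} are in force, so in particular ${\rm Var}_\pi(g_j(X_0))>0$ for every $j$, which — as remarked just after that corollary — guarantees that the data-driven weights $\tilde{\mathbf{w}}_M$ of~\eqref{def:w_M} converge $P_x$-a.s.\ to the finite nonzero limits $1/\sqrt{\gamma_{g_j}(0)}$. Hence Theorem~\ref{thm:est1-conv} applies with $\mathbf{w}_M=\tilde{\mathbf{w}}_M$, and the estimators $\hat{\gamma}^{(\boldsymbol{\delta})}_{M,ij}=\hat{\gamma}^{(\boldsymbol{\delta})}_{M,ij}(\cdot;\tilde{\mathbf{w}}_M)$ and $\hat{\Sigma}^{(\boldsymbol{\delta})}_{pw}$ appearing in the statement are precisely those covered by Theorem~\ref{thm:est1-conv}, Corollary~\ref{cor:Sigma_pw_consistency}, and Theorem~\ref{thm:mtv2}.

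For the first claim I would use $\|\hat{\gamma}^{(\boldsymbol{\delta})}_{M,ij}-\gammaij\|_2^2=\sum_{k\in\Z}\{\hat{\gamma}^{(\boldsymbol{\delta})}_{M,ij}(k;\tilde{\mathbf{w}}_M)-\gammaij(k)\}^2$, which tends to $0$ $P_x$-a.s.\ for each fixed pair $(i,j)$ by the first statement of Theorem~\ref{thm:est1-conv}. Since there are only $d^2$ pairs, intersecting the corresponding probability-one events yields a single event of probability one on which all $d^2$ sequences tend to $0$, hence so does their maximum.

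The second claim splits into positive semi-definiteness and consistency. \emph{Positive semi-definiteness.} If $\lambda_{\min}(\hat{\Sigma}^{(\boldsymbol{\delta})}_{pw})\ge 0$, then $\hat{\Sigma}^{(\boldsymbol{\delta})}=\hat{\Sigma}^{(\boldsymbol{\delta})}_{pw}$ is symmetric with nonnegative smallest eigenvalue, hence p.s.d. If $\lambda_{\min}(\hat{\Sigma}^{(\boldsymbol{\delta})}_{pw})<0$, then $\hat{\Sigma}^{(\boldsymbol{\delta})}=\hat{\Sigma}^{(\boldsymbol{\delta})}_{psd}=\hat{\mathbf{U}}_M\hat{\bLambda}_{\boldsymbol{\delta}}\hat{\mathbf{U}}_M^\top$ with $\hat{\mathbf{U}}_M$ orthogonal, so it suffices to show that each diagonal entry $\sigma^2(\Pi_{\underline{\delta}}(\rinitf{\hat{\U}_{Mj}^\top g}))$ of $\hat{\bLambda}_{\boldsymbol{\delta}}$ is nonnegative; this follows from~\eqref{eq:infiniteSum}, since $\Pi_{\underline{\delta}}(\rinitf{\hat{\U}_{Mj}^\top g})$ is a $[-1+\underline{\delta},1-\underline{\delta}]$ moment sequence and therefore has a \emph{positive} representing measure $\hat{\mu}_{\underline{\delta}}$ supported in $[-1+\underline{\delta},1-\underline{\delta}]\subset(-1,1)$, while $\alpha\mapsto (1+\alpha)/(1-\alpha)$ is strictly positive there, whence $\sigma^2(\Pi_{\underline{\delta}}(\rinitf{\hat{\U}_{Mj}^\top g}))=\int \tfrac{1+\alpha}{1-\alpha}\,\hat{\mu}_{\underline{\delta}}(d\alpha)\ge 0$. \emph{Consistency.} By Corollary~\ref{cor:Sigma_pw_consistency}, $\hat{\Sigma}^{(\boldsymbol{\delta})}_{pw}\to\Sigma$ $P_x$-a.s., and by Theorem~\ref{thm:mtv2}, $\hat{\Sigma}^{(\boldsymbol{\delta})}_{psd}\to\Sigma$ $P_x$-a.s.; on the intersection of these two probability-one events, whichever branch of~\eqref{eq:momentLSavar} is active at a given $M$ we have $\|\hat{\Sigma}^{(\boldsymbol{\delta})}-\Sigma\|_F\le\|\hat{\Sigma}^{(\boldsymbol{\delta})}_{pw}-\Sigma\|_F+\|\hat{\Sigma}^{(\boldsymbol{\delta})}_{psd}-\Sigma\|_F\to 0$.

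I expect no real obstacle, as the substantive work is carried by Theorem~\ref{thm:est1-conv}, Corollary~\ref{cor:Sigma_pw_consistency}, and Theorem~\ref{thm:mtv2}. The only points that deserve care are (i) the nonnegativity of the re-estimated eigenvalues in $\hat{\bLambda}_{\boldsymbol{\delta}}$, which underlies the p.s.d.\ assertion and is not entirely automatic, and (ii) the fact that the rule~\eqref{eq:momentLSavar} defining $\hat{\Sigma}^{(\boldsymbol{\delta})}$ is data-dependent and may switch branches as $M$ grows, which is why the convergence bound is phrased uniformly over the two candidate estimators rather than conditionally on the active branch.
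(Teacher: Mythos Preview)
Your proposal is correct and follows exactly the route the paper indicates: the corollary is stated there as ``a direct consequence of Theorem~\ref{thm:est1-conv}, Corollary~\ref{cor:Sigma_pw_consistency}, and Theorem~\ref{thm:mtv2}'' with no further argument given. Your write-up in fact supplies more detail than the paper does, in particular the verification that the re-estimated eigenvalues $\sigma^2(\Pi_{\underline{\delta}}(\rinitf{\hat{\U}_{Mj}^\top g}))$ are nonnegative via~\eqref{eq:infiniteSum}, which the paper leaves implicit.
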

\section{Empirical studies}\label{sec:emp}
In this section, we conduct an empirical comparison between the proposed momentLS estimator $\hat{\Sigma}^{(\mathbf{\delta})}$ defined in~\eqref{eq:momentLSavar} and other state-of-the-art methods for estimating the asymptotic variance $\Sigma$ in the multivariate CLT. We assess the relative error  $\|\Sigma^{-1/2}(\hat{\Sigma} - \Sigma)\Sigma^{-1/2}\|_{F}$, which quantifies the relative distance between the true asymptotic variance matrix $\Sigma$ and the estimated variance matrix $\hat{\Sigma}$ produced by various methods. Additionally, we evaluate the empirical coverage probabilities of the confidence regions $C_\alpha(X; \hat{\Sigma})$ based on different estimates of $\hat{\Sigma}$. Here, for a chain $X=(X_0,\dots,X_{M-1})$, $C_\alpha(X; \hat{\Sigma})$ is defined as follows:
$
C_\alpha(X; \hat{\Sigma})=\left\{\mu \in \mathbb{R}^d: M\left(\hat{\mu}_{gM}-\mu\right)^\top \hat{\Sigma}^{-1}\left(\hat{\mu}_{gM}-\mu\right)<\chi^2_{1-\alpha, d}\right\},
$
where $\hat{\mu}_{gM} = M^{-1} \sum_{t=0}^{M-1} g(X_t)$ is the estimated mean computed from the chain $X$.

We evaluate the performance of the different methods through simulated and real-data examples.
For the simulated examples, we consider two setups: multivariate functions of a Markov chain generated using a finite state Metropolis-Hastings sampler, and a Vector AR1 process. In both cases, we can analytically compute the true asymptotic variances $\Sigma$. 
For real-data examples, we investigate a Bayesian logistic regression model using the BUPA's liver disorders dataset from the UCI machine learning repository.  We considered three popular samplers: the random walk Metropolis sampler, the No U-Turn Sampler \citep{hoffman2014no} from STAN \citep{stan2019} which is a dynamic variant of Hamiltonian Monte Carlo, and the Polya-Gamma Gibbs sampling scheme for Bayesian logistic regression introduced by~\citet{polson2013Bayesian}.

\subsection{Settings for simulated chains}
\paragraph{Metropolis-Hastings chain}
We consider a Metropolis-Hastings chain on a finite discrete state space $(\mathsf{X},2^{\mathsf{X}})$ where $\mathsf{X}=\{1,2,...,s\}$.
We randomly generated the target distribution $\pi: 2^{\mathsf{X}} \to [0,1]$ and the proposal distribution $P: \mathsf{X} \times 2^{\mathsf{X}} \to [0,1]$, where we drew a length $s$ random vector $U=[U_1,U_2,...,U_{s}]^\top$ and let $\pi(\{i\})=U_{i}/(\sum_{i'=1}^{s}U_{i'})$, and for each $i \in \mathsf{X}$ we drew a length $s$ random vector $V_i=[V_{i1},\dots,V_{is}]^\top$ and let $P(i,\{j\})=V_{ij}/\sum_{j=1}^{s}V_{ij}$. Then we constructed the Metropolis-Hastings transition kernel $Q$ based on $P$ and $\pi$. Finally, we simulated $g:\mathsf{X}\to \R^d$ by $g(i) \sim N(0,\Sigma_g)$ i.i.d. for $i \in \{1,\dots,s\}$ where $\textrm{diag}(\Sigma_g)  = [1,\dots,d^2]$ and $\Sigma_{g,ij} = \sqrt{\Sigma_{g,ii} \Sigma_{g,jj}}\rho^{|i-j|}$. With this choice of $g$, $g_j(X_t), j=1,\dots,d$ will be correlated with each other, and have different magnitudes. In the simulation study, $\pi$, $g$, and the transition kernel $Q$ were generated once, and then used for all of the discrete state Metropolis-Hastings chains, so that the same $\pi$, $g$, and $Q$ were used for each simulated chain.

With a slight abuse of notation, let $Q \in \R^{s\times s}$ denote the transition matrix defined as $Q_{ij} = Q(i, \{j\})$. Since $s=100$ is relatively small, we can compute the eigenvalues and eigenvectors for $Q$. Let $\lambda_1\geq \lambda_2\geq\cdots \geq\lambda_s$ denote the eigenvalues of $Q$ and $\phi_1,\dots \phi_s$ be the corresponding eigenvectors. Suppose the eigenvectors $\phi_i$ are normalized so that $\braket{\phi_i,\phi_j}_\pi = 1[i=j]$. Note we have $\lambda_1=1$ and $\phi_1 = \mathbf{1}_{s}$ since $Q\mathbf{1}_{s} = \mathbf{1}_{s}$. For each $i=1,\dots,s$, we can write $g_i$ and $\check{g}_i = g_i-E_\pi[g_i(X_0)]\mathbf{1}_{s}$ as
    $g_i(k) = \sum_{l=1}^s \braket{g_i,\phi_l}_\pi \phi_l(k)$ and
    $\check{g}_{i}(k) = \sum_{l=2}^s \braket{g_i,\phi_l}_\pi \phi_l(k)$
since $\braket{g_i,\phi_1}_\pi = E_\pi[g_i(X_0)]$. Then, 
     $\textrm{Cov}_\pi(g_i(X_0), g_j(X_k))=\braket{\check{g}_i, Q^{|k|}\check{g}_j}_\pi = \sum_{l=2}^{s}\braket{g_i,\phi_l}_\pi\braket{g_j,\phi_l}_\pi\lambda_l^{|k|}$
since $\langle \phi_l, \phi_m \rangle_\pi =1[l=m]$. 
In particular, we have
    $F_{g_i,g_j} = \sum_{l=2}^{s} \braket{g_i,\phi_l}_\pi \braket{g_j,\phi_l}_\pi \delta_{\{\lambda_l\}}$
 where $\delta_{\{a\}}$ denotes a unit point mass at $a$. Note $F_{g_i,g_j}$ is in general a signed measure on $(\R, \mathscr{B}_\R)$, since the weight $\braket{g_i,\phi_l}_\pi \braket{g_j,\phi_l}_\pi$ at each $\lambda_l$ is real, but not necessarily positive, except the case when $i=j$, where we have $\braket{g_i,\phi_l}_\pi \braket{g_i,\phi_l}_\pi = \langle g_i,\phi_l\rangle_\pi^2 \ge 0$.
Finally, we have the asymptotic variance $\Sigma$ where 
    $\Sigma_{ij} = \sum_{l=2}^{s} \braket{g_i,\phi_l}_\pi\braket{g_j,\phi_l}_\pi \frac{1+\lambda_l}{1-\lambda_l}.$

 \paragraph{Vector autoregressive process}
As a second example, we consider a vector autoregressive process of order $1$ (VAR(1)). Here we have a stationary Markov chain on a continuous state space $(\mathsf{X}, \mathscr{X}) = (\R^d,\mathscr{B}_{\R^d})$ where for $t=0,1,2, \ldots$, $X_t \in \R^{d}$, the process at time $t$ is defined by
$X_t= A X_{t-1}+\epsilon_t $
where $A$ is a $d \times d$ matrix, $\epsilon_t \stackrel{iid}{\sim} N(0, \Sigma_\epsilon)$, and $\Sigma_\epsilon$ is a $d \times d$ positive definite matrix. We let $A = D_\rho + (0.1)^2 (1_d 1_d^\top - I_d)$ where $D_\rho$ is a diagonal matrix, $1_d$ denotes the length $d$ column vector of 1's, and $\Sigma_\epsilon = I_d$. 
We consider two scenarios related to $D_\rho$: in the first case, we let $D_{\rho}={\rm diag}(0.9,0.9,-0.9,-0.9)$, resulting in each component of the chain being either strongly positively or negatively autocorrelated. In the second case, we let $D_{\rho}={\rm diag}(0.9,0.9,0.9,0.9)$, resulting in each component of the chain being all positively autocorrelated.

\citet{osawa1988reversibility} has showed that the VAR(1) process defined above is reversible if and only if $A\Sigma_\epsilon$ is symmetric. Moreover, if $\lambda_{\rm max}(A) <1$, the stationary distribution $\pi$ for the $X_t$ chain is $N(0_d, V)$, $0_d$ is the length $d$ column vector of $0$'s, $V = (I_d-A^2)^{-1} \Sigma_\epsilon$, and $I_d$ denotes the $n\times n$ identity matrix. The simulated process is reversible because $A$ is symmetric. Moreover, since $\operatorname{Cov}_\pi(X_k,X_0) =A\operatorname{Cov}_\pi(X_{k-1},X_0)=\dots= A^k V$, the asymptotic variance $\Sigma$ is 
    $\Sigma = \sum_{k\in\Z} A^k V = 2\sum_{k=0}^\infty A^kV - V = 2(I_d-A)^{-1}V- V.$

\subsubsection{Descriptions of estimators}

We compared the asymptotic variance estimation performance of the following estimators: 1. \textbf{(SV(Bartlett))} the spectral variance estimator with the modified Bartlett window $w_M(k)= (1-|k|/b_M) 1\{|k|<b_M\}$ \citep{Damerdji1991-oj,  flegal2010batch, vats2018strong}, 2. \textbf{(BM, OBM)} the batch means and overlapping batch means estimator \citep{vats2019multivariate}, 3. \textbf{(mtv-Init)} the multivariate initial sequence estimator from \cite{dai2017multivariate}, and 4. \textbf{(mtv-MLSE)} our multivariate moment least squares estimator $\hat{\Sigma}^{(\boldsymbol{\delta})}$  defined in \eqref{eq:momentLSavar}. 
To provide a clear exposition of the steps involved in computing $\hat{\Sigma}^{(\boldsymbol{\delta})}$, we have summarized the algorithm in Supplementary Material Section\ifshow~\ref{sec:alg_momentLS}. \else~S3. \fi All empirical experiments that we present in this section were performed using R software \citep{rlang}.  We used the \textbf{momentLS} R package available on GitHub 
\footnote{https://github.com/hsong1/momentLS} for computing univariate momentLS estimators. All other estimators were computed using the \textbf{mcmcse} R package \citep{mcmcse_R}.

Hyperparameters are required for the spectral variance estimator, BM, OBM, and momentLS estimators. A truncation point for SV(Bartlett) or a batch size for BM and OBM needs to be specified in advance. The vector $\boldsymbol{\delta}=[\delta_1,\dots,\delta_d]$ needs to be specified for momentLS estimators, which determines the moment spaces onto which each empirical autocovariance sequence $r_{gM,j}$, $j=1,\dots,d$ is projected. 
For SV(Bartlett), BM, and OBM, we used the truncation point (batch size) tuning method implemented in the R package {\bfseries mcmcse}~\citep{liu2021batch}. For the multivariate momentLS estimator, we utilized the tuning method proposed in \citet{berg2022efficient} to select $\delta_j$ for each univariate sequence $\{g_j(X_t)\}_{t\ge 0}$, defined as
    $\tilde{\delta}_{Mj} = 0.8 \frac{1}{L}\sum_{l=1}^L \hat{\delta}_{M/L,j}^{(l)}.$
Here, $\hat{\delta}_{M/L,j}^{(l)}$ is an estimator for $\Delta(F_{g_j})$ computed from the $l$th batch of empirical autocovariances. We used $L=5$. Subsequently we let $\boldsymbol{\delta} = [\tilde{\delta}_{M1},\dots, \tilde{\delta}_{Md}]$. The computational steps involved in computing $\tilde{\delta}_{Mj}$ are summarized in the \texttt{tune\_delta} subroutine in Supplementary Material Section\ifshow~\ref{sec:alg_momentLS}. \else~S3. \fi

\subsection{Results for simulated chains}
We compared the relative error and coverage rate from various methods (SV(Bartlett), BM, OBM, Mtv-Init, and Mtv-mLSE) with varying chain lengths $M\in \{5000, 10000, 20000, 40000, 80000\}$. In particular, at each sample size $M$, $B=1000$ independent chains were generated, and the various estimators were computed on each of these replicate chains. For each method and at each sample size $M$, we estimated the relative error and coverage rate as
   $\widehat{\textrm{relative error}}_{_M} = B^{-1}\sum_{b=0}^B\| \Sigma^{-1/2}(\hat{\Sigma}^{(b)}_{M} - \Sigma)\Sigma^{-1/2}\|_{F}$ and $
   \widehat{\textrm{coverage probability}}_{_M}  = B^{-1}\sum_{b=0}^B 1\{ \mu_{g} \in C_\alpha(X_M^{(b)}; \hat{\Sigma}^{(b)}_{M})\}$
where $X_M^{(b)}=(X_0^{(b)}, \dots,X_{M-1}^{(b)})$ is the $b$th simulated chain of length $M$ and $\hat{\Sigma}^{(b)}_{M}$ is the estimated asymptotic variance using $X_M^{(b)}$. We used $\alpha=0.05$ in $C_\alpha(X^{(b)}; \hat{\Sigma}^{(b)})$. 

 \begin{figure}[htb]
     \centering
    \includegraphics[trim=0 0 0 0]{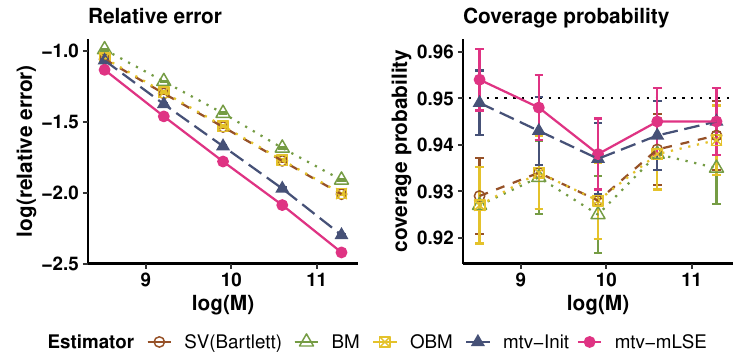}
     \caption{
     Estimated relative errors $\|\Sigma^{-1/2}(\hat{\Sigma} - \Sigma)\Sigma^{-1/2}\|_{F}$ and coverage probabilities $P_\pi(\mu_{gM} \in C_\alpha(X; \hat{\Sigma}))$ with $\alpha = 0.05$ for the Metropolis-Hastings example. Error bars indicate one standard error. The dotted horizontal line in the coverage probability plot indicates the $95$\% nominal coverage level. 
     }
     \label{fig:discrete}
 \end{figure}

 \begin{figure}[htb]
     \centering
     \begin{subfigure}[b]{\textwidth}
     \caption{Vector AR1 with $\bm{D_\rho = {\rm diag}(0.9,0.9,-0.9,-0.9)}$}
     \vspace{.5em}
     \centering
         \includegraphics[trim=0 0 0 0, clip]{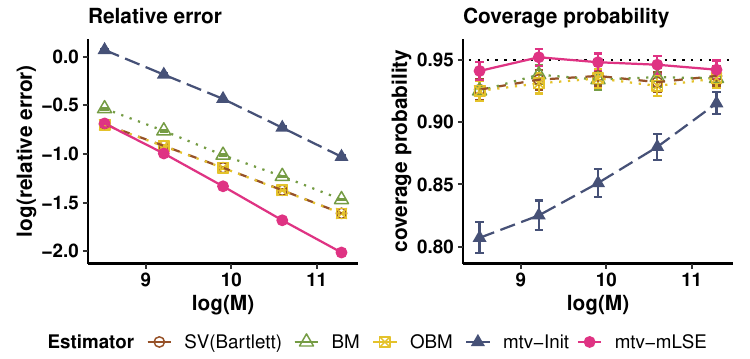}
     \end{subfigure}
    \begin{subfigure}[b]{\textwidth}
    \centering
    \caption{Vector AR1 with $\bm{D_\rho = {\rm diag}(0.9,0.9,0.9,0.9)}$}
    \vspace{.5em}
        \includegraphics[trim=0 0 0 0, clip]{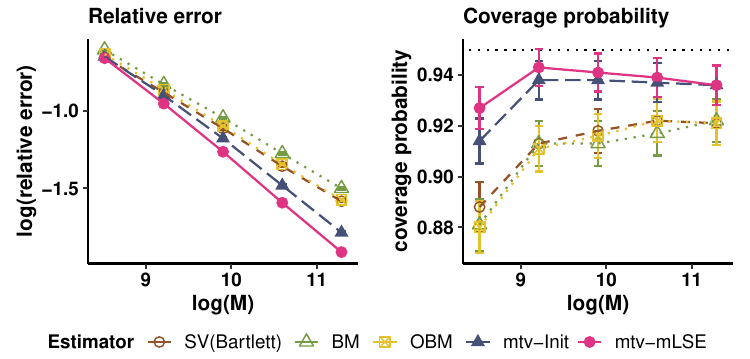}
    \end{subfigure}
    
     \caption{
     Estimated relative errors $\|\Sigma^{-1/2}(\hat{\Sigma} - \Sigma)\Sigma^{-1/2}\|_{F}$ and coverage probabilities $P_\pi(\mu_{gM} \in C_\alpha(X; \hat{\Sigma}))$ with $\alpha = 0.05$ for the two Vector AR1 examples. Error bars indicate one standard error. For coverage probability plots, the dotted horizontal lines indicate the $95$\% nominal coverage level. 
     }
     \label{fig:VAR1}
 \end{figure}
 
Figure \ref{fig:discrete} and \ref{fig:VAR1} display the estimated relative errors and coverage probabilities from the SV(Bartlett), BM, OBM, mtv-Init, and mtv-mLSE methods. The multivariate momentLS estimator (mtv-mLSE) appears to perform the best in terms of both relative error and coverage probability in all examples, demonstrating the smallest relative errors and empirical coverage rates closest to the 95\% nominal level across all sample sizes. Also, both mtv-mLSE and mtv-Init estimators seem to exhibit faster convergence rates when compared to the other methods, which was also the case for their counterparts in the univariate setting \citep{berg2022efficient}. 
While the multivariate initial sequence estimator (mtv-Init) performs second best in the Metropolis-Hastings example and the VAR1 example with all positively autocorrelated components, the estimator seems to struggle more in the VAR1 case with a mixture of positive and negatively autocorrelated components. In this case, the multivariate initial sequence estimator shows relatively large relative errors compared to the other methods and tending to provide under-coverage of the the true parameter. Both the spectral variance (SV(Bartlett)) and overlapping batch means (OBM) estimators outperform the batch means (BM) method, with SV and OBM performing very similarly in the Metropolis-Hastings and both VAR1 examples.

\subsection{Bayesian logistic regression } \label{sec:BayesianLogisticExample}
In this section, we fit a Bayesian logistic regression model to the Liver Disorders dataset sourced from the UCI machine learning repository.  The analysis focuses on assessing the relationship between blood test panel predictors (MCV, ALKPHOS, SGPT, SGOT, GAMMAGT) and an individual's drinking habits. The dataset consists of $n=341$ observations after removing duplicates. Each row represents an individual's blood test results and their average daily alcohol intake. The goal is to predict an individual's drinking status, which is classified as ``heavy" if their average daily drink count exceeds 3, based on the $5$ blood test panel results represented as $\mathbf{x}_i = (x_{i1},\dots,x_{i5})\in \R^5$. We denote the binary response variable as $Y_i$, where $Y_i=1$ if the average daily drink count of the $i$th individual exceeds 3 and 0 otherwise. We suppose 
    $Pr(Y_i=1) =  \frac{\exp(\alpha + \sum_{j=1}^5 \beta_j x_{ij}) }{1+\exp(\alpha + \sum_{j=1}^5 \beta_j x_{ij})}$
and assign a $N(0,5)$ prior on the intercept $\alpha$ and independent $N(0,1)$ priors on the coefficients $\boldsymbol{\beta} = (\beta_1,\dots,\beta_5)$. 

We sample $\{\boldsymbol{\beta}(t)\}_{t=0}^{M-1}$ from the posterior distribution $\boldsymbol{\beta}| \{Y_i\}_{i=1}^{n} \sim \pi(\cdot)$ using three popular samplers: 1. a random walk Metropolis sampler, 2. the No U-Turn sampler \citep{hoffman2014no}, and 3. the Polya-Gamma Gibbs sampler \citep{polson2013Bayesian}. We considered varying chain lengths $M \in \{2500, 5000, 10000, 20000, 40000\}$ for each sampler. Similar to previous examples, we generated $B=1000$ parallel chains from each sampler and estimated relative error and coverage probability. 

For the random walk Metropolis sampler (RW-metrop), we used the MCMClogit function in the \textbf{MCMCpack} R package \citep{JSSv042i09} with a burn-in period of $5000$ iterations. Regarding the proposal step, the proposal distribution at time $t$ is taken to be $N(\boldsymbol{\beta}(t), V_{prop})$ where $V_{prop}=T\left(B_0+C^{-1}\right)^{-1} T$, where $T = 1.1 I_d$ is a diagonal positive definite matrix, $B_0 = \textrm{diag}(0.2,1,1,1,1,1)$ is the prior precision matrix, and $C$ is the asymptotic variance-covariance matrix of the MLEs.  

The No U-Turn sampler (NUTS) is a dynamic variant of the Hamiltonian Monte Carlo (HMC) method, which itself is a Metropolis-Hastings algorithm where proposal states are generated by integrating a system of Hamiltonian equations using the leapfrog integrator. 
HMC methods require the specification of three algorithm parameters: 1. the discretization time (step size) $\epsilon$, 2. the mass matrix $M_{mass}$, and 3. the number of leapfrog steps $L$. \citet{hoffman2014no} introduced the No U-Turn sampler (NUTS), which automatically selects the value of $L$ during the sampling process.
The No U-Turn sampler implemented in STAN employs a warmup and sampling phases. During the warmup phase, the sampler optimizes $\epsilon$ and $M_{mass}$ to match an acceptance rate target. In the subsequent sampling phase, the sampler generates Markov chain samples based on the prescribed transition mechanism, which depends on the HMC parameters $(\epsilon, M_{mass})$. Since we want to generate parallel chains from the same transition mechanism, we first learned $\epsilon$ and $M_{mass}$ using $100000$ warm-up samples, and subsequently generated $B$ parallel chains of length $M+5000$ (with $0$ warmup iteration) using the same $(\epsilon, M_{mass})$. The first $5000$ iterations were treated as iterations from a burn-in period, and excluded in the final chains. All sampling was performed using the \textbf{CmdStan}, the command-line interface to the Stan statistical modeling language \citep{stan2019}, as the current R implementation of Stan does not allow direct specification of $\epsilon$ and $M_{mass}$ during the sampling phase. 

The Polya-Gamma Gibbs sampler (PG) alternates draws of independent Polya-Gamma random variables and draws from a multivariate normal distribution to update $\beta$. For drawing the required Polya-Gamma random variables, we used the Polya-Gamma sampler implemented in \textbf{BayesLogit}~\citep{polson2013Bayesian}. The Polya-Gamma Gibbs sampler has been shown to be uniformly ergodic~\citep{choi2013polya}, and therefore this chain satisfies our geometric ergodicity requirement~\ref{cond:geometric_ergodicity}.

\paragraph{Ground truth}
To compare estimated asymptotic variances and coverage probabilities from the competing methods for the Bayesian logistic regression example, we also need accurate reference estimates of the posterior mean and asymptotic variance for each coefficient.  
Since both quantities are unknown, for each sampler sp $\in \{\textrm{RW-metrop,NUTS,PG}\}$, we independently generated $B_2=10000$ independent chains $\{\boldsymbol{\beta}_{\rm idp,sp}^{(b)}(t)\}_{t=0}^{M_2-1}$ of length $M_2=40000$ (after burn-in periods of length $5000$) to estimate true mean and asymptotic variance. We let $\bar{\boldsymbol{\beta}}_{\textrm{idp,sp}}^{(b)}$ refer to the sample mean value of $\boldsymbol{\beta}$ from the $b$th chain for $b=1,\dots,B_2$, and we use $\boldsymbol{\beta}_{\textrm{orcl,sp}} = \frac{1}{B_2}\sum_{b=1}^{B_2}\bar{\boldsymbol{\beta}}_{\textrm{idp,sp}}^{(b)}$ to estimate the posterior mean, and $\Sigma_{\textrm{orcl,sp}}=\frac{M_2}{B_2-1}\sum_{b=1}^{B_2}(\bar{\boldsymbol{\beta}}_{\textrm{idp,sp}}^{(b)}-\boldsymbol{\beta}_{\rm orcl, sp})(\bar{\boldsymbol{\beta}}_{\textrm{idp,sp}}^{(b)}-\boldsymbol{\beta}_{\rm orcl, sp})^\top$ to estimate the asymptotic variance matrix from the sampler sp $\in \{\textrm{RW-metrop,NUTS,PG}\}$. 

We note that since each sampler targets the same posterior, the true mean value of $\beta$ with respect to the stationary distribution is the same for each sampler. In contrast, the asymptotic variance matrices for the MCMC sampling will generally be different for the different samplers, as the different sampling mechanisms typically result in different autocovariances at non-zero lags in~\eqref{eq:avar_sum}. We present the estimated ground truth mean vectors $\boldsymbol{\beta}_{\textrm{orcl,sp}}$ and asymptotic variance matrices $\Sigma_{\textrm{orcl,sp}}$ for the 3 samplers sp $\in \{\textrm{RW-metrop,NUTS,PG}\}$ in Tables\ifshow~\ref{tab:trueMean} \else~S1 \fi and\ifshow~\ref{tab:avarComp}\else~S2\fi, respectively, of the Supplementary Material Section\ifshow~\ref{sec:oracle_values}. \else~S4. \fi

Figure \ref{fig:bayesLogistic} shows the estimated relative errors and coverage probabilities for the three samplers. Similarly to the results in the simulated examples, the multivariate momentLS estimator (mtv-mLSE) continues to perform the best in terms of both relative error and coverage probability for all three samplers on average. It consistently achieves the smallest relative errors and empirical coverage rates closest to the 95\% nominal level across all sample sizes except when $M=2500$. 
As in the previous case, the SV(Bartlett) and OBM estimators performed quite similarly in both examples, consistently outperforming the batch means (BM) method.
While the multivariate initial sequence estimator (mtv-Init) shows the second best performance for all three samplers when the considered chain length is the largest ($M=40000$), it appears to show relatively worse relative error performance compared to SV(Bartlett)and OBM estimator for the No U-Turn and Polya-Gamma samplers across the considered chain lengths ranging from $2500$ to $40000$.

\paragraph{Computation time}
The computational times required to estimate the asymptotic variances from each sampler vary depending on the variance estimator used. Table~\ref{tab:computationTime} shows the average computation time in seconds for the Bayesian example, with chain lengths of $M=10000$ and $M=40000$, each averaged over $100$ executions on a personal laptop with an Apple M1 chip and 16GB RAM. 

The momentLS estimator, which involves point-wise element estimation, was the slowest, taking approximately 0.2 seconds to compute the asymptotic variance for a chain length of $40000$. Despite its slower computation time compared to other estimators, we believe the momentLS estimator's computation times are still reasonable and that the estimator will be practically applicable in realistic problems.

\begin{table}[ht]
\centering
\small
\begin{tabular}{lc|ccccc}
  \hline
Sampler & M & SV(Bartlett) & BM & OBM & mtv-Init & mtv-mLSE \\ 
  \hline
RW-metrop & 10000 & 0.007 (0.000) & 0.002 (0.000) & 0.009 (0.000) & 0.024 (0.000) & 0.118 (0.003) \\ 
  NUTS & 10000 & 0.012 (0.000) & 0.007 (0.000) & 0.008 (0.000) & 0.003 (0.000) & 0.063 (0.003) \\ 
  PG & 10000 & 0.009 (0.002) & 0.002 (0.000) & 0.004 (0.000) & 0.003 (0.000) & 0.052 (0.000) \\ \hline
  RW-metrop & 40000 & 0.037 (0.004) & 0.007 (0.000) & 0.050 (0.000) & 0.094 (0.000) & 0.246 (0.004) \\ 
  NUTS & 40000 & 0.054 (0.003) & 0.027 (0.000) & 0.036 (0.002) & 0.013 (0.000) & 0.227 (0.006) \\ 
  PG & 40000 & 0.027 (0.000) & 0.008 (0.002) & 0.014 (0.000) & 0.012 (0.000) & 0.202 (0.006) \\ 
   \hline
\end{tabular}
\caption{Average computation times and standard errors (in seconds) for the various estimators for the Bayesian logistic regression example. The times shown are the average timings from the variance estimation portion of the computation for $100$ executions, at chain length $M \in \{10000,40000\}$.\label{tab:computationTime}} 
\end{table}

In addition, to demonstrate how the computation time of our estimator scales with the dimensions of the chain, we further simulated chains from the Bayesian example above with $5$ predictors. Instead of using raw predictors, we used natural cubic splines with degrees of freedom = $k$ for each predictor for $k=1,\dots,10$. For $k>1$, the knots are placed at the $1/k,\dots,(k-1)/k$ quantiles of each predictor (for $k=1$, there are no knots). This results in a model with $d=5k+1$ parameters, leading to $d$-dimensional chains for $d$ ranging from $6$ to $51$. We simulated length $M=10000$ chains from the NUTS sampler and measured the average computation times for all estimators over $100$ executions.

\begin{table}[ht]
\centering
\small
\begin{tabular}{lcccc}
  \hline
 & $d=6$ & $d=16$ & $d=36$ & $d=51$ \\ 
  \hline
SV & 0.009 (0.002) & 0.021 (0.002) & 0.057 (0.003) & 0.089 (0.005) \\ 
  BM & 0.002 (0.000) & 0.009 (0.002) & 0.029 (0.003) & 0.042 (0.004) \\ 
  OBM & 0.003 (0.000) & 0.013 (0.002) & 0.046 (0.002) & 0.084 (0.004) \\ 
  mtv-Init & 0.004 (0.000) & 0.018 (0.000) & 0.090 (0.001) & 0.171 (0.000) \\ 
  mtv-mLSE & 0.056 (0.001) & 0.322 (0.006) & 1.682 (0.008) & 3.766 (0.014) \\ 
   \hline
\end{tabular}
\caption{Average computation times and standard errors (in seconds) for Bayesian logistic regression example with natural cubic splines predictors with varying degrees of freedom $k$ (the chain dimension $d=5k+1)$. The times shown are the average timings from the variance estimation portion of the computation for $100$ executions at chain length $M=10000$.\label{tab:computationTime2}}
\end{table}

\section{Conclusion}\label{sec:concl}
In this work, we have introduced novel multivariate momentLS sequence and asymptotic variance estimators based on univariate momentLS sequence and asymptotic variance estimators proposed in \citet{berg2022efficient}. Our approach is based on the observation that the cross-covariance sequence admits a similar mixture representation to the autocovariance sequence, with a signed measure as the mixing measure, and can be expressed as the difference between two moment sequences. We have shown that the supports of the representing measures of the cross-covariance sequences can be controlled by the supports of the representing measures of each autocovariance sequence. Therefore we can use the hyperparameter $\boldsymbol{\delta}$ based on estimating $\Delta(F_{g_j})$ from each univariate chain $g_j(X_t)$, $j=1,\dots,d$. 
Theoretically, we have established strong consistency results for both the sequence and asymptotic variance MomentLS estimators, provided that each $\boldsymbol{\delta}_i$ is chosen to be sufficiently small. Furthermore, through simulated and real-data Bayesian logistic regression examples, we have demonstrated that our estimator empirically outperforms the competing methods.

\begin{figure}[H]
     \centering
     \begin{subfigure}[b]{\textwidth}
     \centering
     \caption{Random Walk Metropolis sampler}
     \vspace{.5em}
         \includegraphics[trim=0 0 0 0, clip]{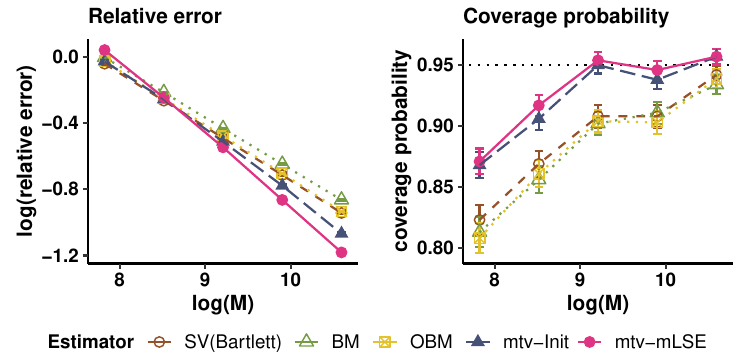}
     \end{subfigure}
\begin{subfigure}[b]{\textwidth}
\centering
\caption{No U-Turn sampler}
\vspace{.5em}
    \includegraphics[trim=0 0 0 0, clip]{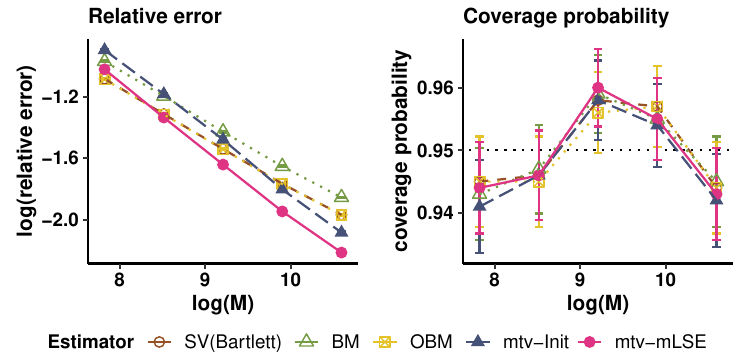}
\end{subfigure}
\begin{subfigure}[b]{\textwidth}
\centering
\caption{Polya-Gamma Gibbs sampler}
\vspace{.5em}
    \includegraphics[trim=0 0 0 0, clip]{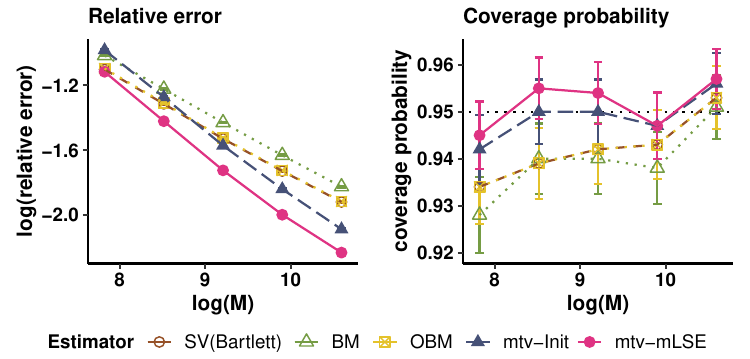}
\end{subfigure}
     \caption{
     Estimated relative errors $\|\Sigma^{-1/2}(\hat{\Sigma} - \Sigma)\Sigma^{-1/2}\|_{F}$ and coverage probabilities $P_\pi(\mu_{gM} \in C_\alpha(X; \hat{\Sigma}))$ with $\alpha = 0.05$  for the Bayesian logistic model example using the    
     random walk Metropolis sampler, the No U-Turn sampler, and the Polya-Gamma Gibbs sampler. Error bars indicate one standard error. For coverage probability plots, the dotted horizontal lines indicate the $95$\% nominal coverage level. 
     }
     \label{fig:bayesLogistic}
 \end{figure}

\section{Acknowledgements}
HS and SB gratefully acknowledge support from NSF DMS-2311141. The authors would like to thank the anonymous reviewers for their helpful comments and feedback. 

\section{Disclosure Statement}
The authors report there are no competing interests to declare.



\ifshow\else
\bibliographystyle{plainnat}
\bibliography{bib}
\fi
\iftwofiles\putbib
\end{bibunit}\fi

\ifshow
\newpage
\iftwofiles\begin{bibunit}\fi
\setcounter{page}{1}
\renewcommand{\thepage}{S\arabic{page}} 
\renewcommand{\thesection}{S\arabic{section}}  
\renewcommand{\thetable}{S\arabic{table}}  
\renewcommand{\thefigure}{S\arabic{figure}}
\renewcommand{\theequation}{S-\arabic{equation}}
\setcounter{equation}{0}
\setcounter{section}{0}
\setcounter{table}{0}
\spacingset{1.5}
\begin{center}
{\Large\bf Supplement to ``Multivariate moment least-squares variance estimators for reversible Markov chains"}\\
\vspace{1em}
{\large Hyebin Song and Stephen Berg}\\
{\large Department of Statistics, Pennsylvania State University}
\end{center}

\section{Proofs for Section \ref{sec:mtv_momentLS}}
\subsection{Proof for Proposition \ref{prop:rep_gamma_ij}}\label{sec:proof_of_prop1}
\begin{proof}
Let $\mathcal{H} = L^2(\mathsf{X},\mathscr{X},\pi) = L^2(\pi)$ be the space of square integrable functions with respect to $\pi$, with inner product $\langle \cdot, \cdot \rangle_\pi$. Let $\mathscr{B}(\mathcal{H})$ be the Banach algebra of all bounded linear operators $T$ on $\mathcal{H}$, with norm $\vertiii{T}_{L^2(\pi)} = \sup \{\|Tf\|_{L^2(\pi)}: f \in \mathcal{H}, \|f\|_{L^2(\pi)} \le 1\}$. We recall that $\gamma_{g_i,g_j}(k) = {\rm Cov}_\pi (g_i (X_0), g_j (X_k)) = \langle g_i, Q_0^k g_j \rangle_\pi =  \langle Q_0^{|k|} g_i,  g_j \rangle_\pi$ for $k\ge 0$ since $Q_0$ is self-adjoint. When $k<0$, $\gamma_{g_i,g_j}(k) = {\rm Cov}_\pi (g_i (X_{|k|}), g_j (X_0)) = \langle Q_0^{|k|} g_i,  g_j \rangle_\pi$.

Since $Q_0$ is self-adjoint and $\vertiii{Q_0}_{L^2(\pi)}\le 1$, there exists a spectral resolution $\{E(\lambda)\}$ of $Q_0$ (e.g., Theorem 6.1 in \citealt{stein2009real}), which is a function from $\R$ to $\mathscr{B}(\mathcal{H})$, such that 
\begin{align}\label{eq:spectral_decomp}
Q_0=\int_{[a,b]} \lambda dE(\lambda),
\end{align}
for $a = \min_{f\in \mathcal{H}; \|f\|_{L^2(\pi)}\le 1} \langle Q_0f,f \rangle_\pi$ and $b = \max_{f\in \mathcal{H}; \|f\|_{L^2(\pi)}\le 1} \langle Q_0f,f \rangle_\pi$ (note that both $a$ and $b$ are real values due to self-adjointness of $Q_0$), which means that
for any $f,g \in \mathcal{H}$,
\begin{align*}
    \langle Q_0 f,g \rangle_\pi = \int \lambda \mE{f}{g} (d\lambda)
\end{align*}
where $\mE{f}{g}$ is the complex-valued measure on $\mathscr{B}_\R$ uniquely defined by the function $\Psi_{f,g}:\lambda \to \langle E(\lambda)f,g\rangle_\pi$. 
We note that $\mE{f}{g}$ is well-defined since $\Psi_{f,g}$ is of bounded variation: by the polarization identity, for $f,g\in \mathcal{H}$, 
\begin{align*}
    \Psi_{f,g}(\lambda) = \langle E(\lambda)f,g\rangle_\pi = \langle E(\lambda)f,E(\lambda) g\rangle_\pi =Re(\langle E(\lambda)f,E(\lambda)g\rangle_\pi) + i Im( \langle E(\lambda)f,E(\lambda)g\rangle_\pi)
\end{align*}
where
\begin{align}
    &Re(\langle E(\lambda)f,E(\lambda)g\rangle_\pi) = \frac{1}{4} \{\|E(\lambda)(f+g)\|_{L^2(\pi)} - \|E(\lambda)(f-g)\|_{L^2(\pi)}\}\nonumber\\
    &Im(\langle E(\lambda)f,E(\lambda)g\rangle_\pi) = \frac{1}{4} \{\|E(\lambda)(f+ig)\|_{L^2(\pi)} - \|E(\lambda)(f-ig)\|_{L^2(\pi)}\},\label{eq:complex_polarization_identity}
\end{align}
and $\lambda \to \|E(\lambda)f\|_{L^2(\pi)}$ is an increasing function of $\lambda$ for any $f \in\mathcal{H}$ by the properties of spectral resolutions.
Moreover, the complex measure $\mE{f}{g}$ is finite, i.e., $|\mE{f}{g}|(\R) <\infty$ where we use $|\mu|$ denote the total variation of a complex measure $\mu$. Note that $|\mE{f}{g}|$ is the same as the positive measure defined by the total variation function $T_{\Psi_{f,g}}$ of $\Psi_{f,g}$ (e.g., Theorem 3.29 in \citealt{folland1999real}). By the polarization inequality \eqref{eq:complex_polarization_identity}, for any $x \in \R$,
\begin{align*}
|\mE{f}{g}|((-\infty,x]) &\le \frac{1}{4}\{\Psi_{f+g,f+g}(x) + \Psi_{f-g,f-g}(x)  + \Psi_{f+ig, f+ig} (x) + \Psi_{f-ig,f-ig}(x)\}\\
    &\le \frac{1}{4}\sum_{j=1}^4 \|f+a_j g\|_{L^2(\pi)}^2\,\,\textrm{where}\, a_j = \pm 1 \,\textrm{or}\, \pm i  \\
    &\le \{\|f\|_{L^2(\pi)}^2 + \|g\|_{L^2(\pi)}^2 + 2\|f\|_{L^2(\pi)}\|g\|_{L^2(\pi)}\} < \infty.
\end{align*}
In particular, $|\mE{f}{g}|(\R) \le \|f\|_{L^2(\pi)}^2 + \|g\|_{L^2(\pi)}^2 + 2\|f\|_{L^2(\pi)}\|g\|_{L^2(\pi)} < \infty$.

Let $f=g_i$ and $g=g_j$, which are real-valued functions by assumption. For a function $\Phi$ continuous on $[a,b]$, the operator $\Phi(Q_0)$ defined as the unique limit of approximating polynomials has the following spectral representation
\begin{align}\label{eq:functional_calculus}
    \langle \Phi(Q_0) f,g \rangle_\pi = \int \Phi(\lambda) \mE{f}{g} (d\lambda)
\end{align}
for any $f,g \in \mathcal{H}$ \citep{stein2009real}. Let $\Phi(\lambda) = \lambda^{|k|}$. Then 
\begin{align*}
    \gamma_{g_i,g_j}(k) =\langle Q_0^{|k|} g_i,  g_j \rangle_\pi = \int \lambda^{|k|} F_{g_i,g_j}(d\lambda)
\end{align*}
where we recall that $\mE{g_i}{g_j}$ is the complex-valued measure defined by $\Psi_{g_i,g_j}:\lambda \to \langle E(\lambda)g_i,g_j\rangle_\pi$.

We show that $\mE{g_i}{g_j}$ is a real-valued measure for real-valued $g_i,g_j\in L_2(\pi)$, and non-negative if $g_i = g_j$. For a real-valued function $f$, $Q_0f(x) = \int Q_0(x,dy) f(y)$ is also real-valued. This implies for the operator $\varphi(Q_0)$ which is defined based on a real-valued function $\varphi:[a,b] \to \R$, $\varphi(Q_0)f$ is also a real-valued function. In particular, for any $\lambda \in \R$, 
\begin{align*}
E(\lambda)g_i = \begin{cases}
    0 & \lambda < a\\
    \varphi^\lambda(Q_0)g_j  & a \le \lambda \le b\\
    g_j &\lambda>b
\end{cases}
\end{align*}
is real-valued where $\varphi^\lambda(t) = 1\{t\le \lambda\}$ (c.f. p310 in \citep{stein2009real}). Therefore, $\Psi_{g_i,g_j}(\lambda)=F_{g_i,g_j}((-\infty, \lambda]) = Re(F_{g_i,g_j}((-\infty, \lambda])) $, i.e., $F_{g_i,g_j}$ is a signed measure. When $g_i=g_j$, $\Psi_{g_i,g_i}(\lambda) = \langle E(\lambda)g_i, g_i\rangle_\pi =\langle E(\lambda)g_i, E(\lambda)g_i\rangle_\pi  = \| E(\lambda)g_i \|_{L^2(\pi)}^2  \ge 0$, and $F_{g_i,g_i}$ is a positive measure.

We show that the support of $\mE{g_i}{g_j}$ is contained in $[-1+\delta(Q), 1-\delta(Q)]$. Here, the support of $\mE{g_i}{g_j}$ is defined as ${\rm Supp}(\mE{g_i}{g_j}) = {\rm Supp}(\mE{g_i}{g_j}^{+}) \cup {\rm Supp}(\mE{g_i}{g_j}^{-})$ where $F_{g_i,g_j}^+$ and $F_{g_i,g_j}^{-}$ denote the positive and negative parts of $F_{g_i,g_j}$ in the Jordan-Hahn decomposition of $F_{g_i,g_j}$. First, we note that ${\rm Supp}(\mE{g_i}{g_j}) \subseteq [a,b]$. This is because the distribution functions of $\mE{g_i}{g_j}^{+}$ and $\mE{g_i}{g_j}^{-}$ are $\frac{1}{2}(T_{\Psi_{g_i,g_j}} + \Psi_{g_i,g_j})$ and $\frac{1}{2}(T_{\Psi_{g_i,g_j}} - \Psi_{g_i,g_j})$ respectively, where $T_{\Psi_{g_i,g_j}}$ is the total variation function of $\Psi_{g_i,g_j}$, and both $\Psi_{g_i,g_j}$ and $T_{\Psi_{g_i,g_j}}$ are constant on $(-\infty,a)$ and $(b,\infty)$. 
Now we show $[a,b] \subseteq [-1+\delta(Q), 1-\delta(Q)]$. Since $Q_0$ is self-adjoint, $\max\{|\lambda|; \lambda\in \sigma(Q_0)\} = \vertiii{Q_0}_{L^2(\pi)}$, and therefore the spectral gap $\delta(Q) = 1-\vertiii{Q_0}_{L^2(\pi)} = 1-\max\{|a|,|b|\}$ by Proposition 6.2 in \citet{stein2009real}. Then we have
$[a,b] \subseteq [-\max\{|a|,|b|\},\max\{|a|,|b|\}]=[-1+\delta(Q), 1-\delta(Q)]$.

Finally, we discuss the uniqueness of $\mE{g_i}{g_j}$. Suppose there exists another finite signed measure $\tilde{F}$ with $\operatorname{Supp}(\tilde{F})\subseteq [-1+\delta(Q), 1-\delta(Q)]$ such that for all $k \in \N$
\begin{align*}
    \int \lambda^k  F_{g_i,g_j}(d\lambda)  =\int  \lambda^k  \tilde{F} (d\lambda).
\end{align*}
This implies for any real-valued polynomial $p_N$, 
\begin{align}\label{eq:prop1:poly}
    \int p_N(\lambda)  F_{g_i,g_j}(d\lambda)  =\int  p_N(\lambda)  \tilde{F} (d\lambda).
\end{align}
We show that $\mE{g_i}{g_j}((-\infty,t]) = \tilde{F}((-\infty, t])$ for any $t \in \R$. This implies that $\mE{g_i}{g_j} = \tilde{F}$ (e.g., Theorem 3.29 in \cite{folland1999real}). 

Let $a_0 = -1+\delta(Q)$ and $b_0 = 1-\delta(Q)$ for notational simplicity. 
We have that $\mE{g_i}{g_j}((-\infty,t]) = \tilde{F}((-\infty, t]) =0$ for $t<a_0$ and $\mE{g_i}{g_j}((b_0,\infty)) = \tilde{F}((b_0,\infty)) =0$. Taking $p_N(\lambda)=1$ in \eqref{eq:prop1:poly}, we have $\mE{g_i}{g_j} ((-\infty, b_0]) = \tilde{F}((-\infty, b_0])$, and therefore, $\mE{g_i}{g_j}((-\infty,t]) = \tilde{F}((-\infty, t])$ for all $t \ge b_0$. Let $t \in [a_0,b_0)$ and $\epsilon>0$ be given.  
Choose $\epsilon_\lambda>0 $ such that both $|\mE{g_i}{g_j}|((t,t+\epsilon_\lambda]) \le \epsilon/4$  and $|\tilde{F}|((t,t+\epsilon_\lambda])\le \epsilon/4$  (such $\epsilon_\lambda>0$ exists from the continuity properties of measures). Define a continuous and bounded function $h_t$ such that
\begin{align*}
    h_t(\lambda) = \begin{cases}
        0 & \lambda \le a_0-\epsilon_\lambda\\
        \frac{1}{\epsilon_\lambda} (\lambda-(a_0-\epsilon_\lambda)) & a_0-\epsilon_\lambda <\lambda \le a_0\\
        1 & a_0 < \lambda \le t\\
        -\frac{1}{\epsilon_\lambda} (\lambda - (t+\epsilon_\lambda)) & t<\lambda \le t+\epsilon_\lambda\\
        0 & \lambda > t+\epsilon_\lambda.
    \end{cases}
\end{align*}
Note that $h_t$ and $1_{(-\infty,t])}$ only differ on $(-\infty, a_0)$ and $(t,t+\epsilon_\lambda)$ by definition of $h_t$, and therefore
\begin{align}
    \int |h_t(\lambda) - 1_{(-\infty,t]}(\lambda)| |\mE{g_i}{g_j}|(d\lambda) =\int_{(t,t+\epsilon_\lambda)}|h_t(\lambda) - 1_{(-\infty,t]}(\lambda)| |\mE{g_i}{g_j}|(d\lambda) \le |\mE{g_i}{g_j}|((t, t+\epsilon_\lambda]) \le \epsilon/4.\label{eq:prop1:ineq1}
\end{align}
Similarly, 
\begin{align}\label{eq:prop1:ineq2}
    \int |h_t(\lambda) - 1_{(-\infty,t])}| |\tilde{F}|(d\lambda)\le \epsilon/4
\end{align}
Let $C_0 = \max\{|\mE{g_i}{g_j}|(\R), |\tilde{F}|(\R)\}$. Note $C_0<\infty$ because $|\mE{g_i}{g_j}|(\R) \le \|g_i\|_{L^2(\pi)}^2 + \|g_j\|_{L^2(\pi)}^2 + 2\|g_i\|_{L^2(\pi)}\|g_j\|_{L^2(\pi)} < \infty$ and $\tilde{F}$ is assumed to be a finite measure.
By the Weierstrass approximation theorem, we can find a polynomial $p_N(\lambda) = \sum_{l=0}^k \alpha_l \lambda^k$ with $\alpha_1,\dots,\alpha_N \in \R$ on $[a,b]$ such that $|h_t(\lambda) - p_N(\lambda)|\le \epsilon/(2C_0)$ for all $\lambda\in[a,b]$.
Let $\Delta_1(\lambda) = |h_t(\lambda) - 1_{(-\infty,t]}(\lambda)|$ and $\Delta_2(\lambda) = |h_t(\lambda) - p_N(\lambda)|$.
We have,
\begin{align*}
    &|\mE{g_i}{g_j}((-\infty,t]) - \tilde{F}((-\infty,t])| \\
    &=|\int 1_{(-\infty,t]}(\lambda) \mE{g_i}{g_j}(d\lambda)-\int 1_{(-\infty,t]}(\lambda) \tilde{F}(d\lambda) | \\
    &\le |\int \{1_{(-\infty,t]}(\lambda)-h_t(\lambda) \} \mE{g_i}{g_j}(d\lambda) -\int \{1_{(-\infty,t]}(\lambda)-h_t(\lambda)\} \tilde{F}(d\lambda) | \\
    &\qquad + |\int \{h_t(\lambda)-p_N(\lambda) \} \mE{g_i}{g_j}(d\lambda) -\int \{h_t(\lambda)-p_N(\lambda)\} \tilde{F}(d\lambda) | \\
    & \qquad +  |\int p_N(\lambda) \mE{g_i}{g_j}(d\lambda)-\int p_N(\lambda) \tilde{F}(d\lambda)|\\
    &\le \int \Delta_1(\lambda) |\mE{g_i}{g_j}|(d\lambda) +\int \Delta_1(\lambda) |\tilde{F}|(d\lambda) +\int \Delta_2(\lambda) |\mE{g_i}{g_j}|(d\lambda) +\int \Delta_2(\lambda) |\tilde{F}|(d\lambda) 
    \end{align*}
We have $\int \Delta_1(\lambda) |\mE{g_i}{g_j}|(d\lambda) +\int \Delta_1(\lambda) |\tilde{F}|(d\lambda) \le \epsilon/2$ by \eqref{eq:prop1:ineq1} and \eqref{eq:prop1:ineq2}.    Moreover, we have,  \begin{align*}
         \int \Delta_2(\lambda) |\mE{g_i}{g_j}|(d\lambda) +\int \Delta_2(\lambda) |\tilde{F}|(d\lambda)  \le \epsilon/(2C_0) \{ |\mE{g_i}{g_j}|([a,b])+|\tilde{F}|([a,b])\} = \epsilon/2
    \end{align*}
since $\sup_{\lambda\in [a,b]} \Delta_2(\lambda) \le \epsilon/(2C_0)$. Therefore, since $\epsilon$ is arbitrary, we have $\mE{g_i}{g_j}((-\infty,t]) = \tilde{F}((-\infty,t])$ for all $t \in \R$. 
\end{proof}

\subsection{Proof for Proposition \ref{prop:delta_bound}}\label{sec:proof_of_prop2}
\begin{proof}
By definition $\Psi_{\bc^\top g,\bc^\top g}$, and using the fact that $E(\alpha)$ is a linear operator, we have,
 \begin{align}
        \Psi_{\bc^\top g,\bc^\top g}(\alpha) 
        = \langle E(\alpha) \bc^\top g,  \bc^\top g \rangle_{\pi}
        = \sum_{i=1}^d \sum_{j=1}^d \bc_i \bc_j \langle E(\alpha) g_i, g_j \rangle_\pi \nonumber.
    \end{align}
In particular, for any $a,b \in \R$ such that $a\le b$,
\begin{align*}
    \Psi_{\bc^\top g,\bc^\top g}(b) -\Psi_{\bc^\top g,\bc^\top g}(a) 
    &=\sum_{i=1}^d \sum_{j=1}^d \bc_i \bc_j \{\langle E(b) g_i, g_j \rangle_\pi-\langle E(a) g_i, g_j \rangle_\pi\}\\
    &=\sum_{i=1}^d \sum_{j=1}^d \bc_i \bc_j \langle (E(b)-E(a)) g_i, g_j \rangle_\pi\\
    &=\sum_{i=1}^d \sum_{j=1}^d \bc_i \bc_j \langle (E(b)-E(a)) g_i,  (E(b)-E(a)) g_j \rangle_\pi\\
    &\le \sum_{i=1}^d \sum_{j=1}^d |\bc_i \bc_j | \|  (E(b)-E(a)) g_i\|_{L^2(\pi)} \|(E(b)-E(a)) g_j \|_{L^2(\pi)}
\end{align*}
where for the third line we use the fact that $E(b)-E(a)$ is also a projection for $b\ge a$, and we use Cauchy-Schwartz inequality for the last inequality.

We note that for $f \in \mathcal{H}$ and $b\ge a$, $\|  (E(b)-E(a)) f\|_{L^2(\pi)}^2 = \Psi_{f,f}(b) - \Psi_{f,f}(a) $ since
\begin{align*}
    \|  (E(b)-E(a)) f\|_{L^2(\pi)}^2 = \langle (E(b)-E(a)) f, f\rangle_\pi = \langle E(b) f, f\rangle_\pi -\langle E(a) f, f\rangle_\pi .
\end{align*}
Also, since $\Psi_{\bc^\top g,\bc^\top g}$ is an increasing function, we have,
\begin{align}\label{lem:eq:Psi_cg_bound}
    0&\le\Psi_{\bc^\top g,\bc^\top g}(b) -\Psi_{\bc^\top g,\bc^\top g}(a) \nonumber\\
    &\le \sum_{i=1}^d \sum_{j=1}^d |\bc_i \bc_j | \sqrt{\Psi_{g_i,g_i}(b) - \Psi_{g_i,g_i}(a)}\sqrt{\Psi_{g_j,g_j}(b) - \Psi_{g_j,g_j}(a)}.
\end{align}
This implies that for any $(a,b] \subseteq \R$, if $F_{g_i}((a,b]) =\Psi_{g_i,g_i}(b) - \Psi_{g_i,g_i}(a) =0$ for all $i$ such that $\bc_i \ne 0$, then $F_{\bc^\top g} ((a,b]) = 0$.

Now suppose there exists $x$ such that $x \in {\rm Supp} (F_{\bc^\top g} ) \cap \{\cup_{i; \bc_i \ne 0} {\rm Supp}(F_{g_i})\}^c$. Since $x \in \{\cup_{i; \bc_i \ne 0} {\rm Supp}(F_{g_i})\}^c$ and $\{\cup_{i; \bc_i \ne 0} {\rm Supp}(F_{g_i})\}^c$ is open, we can find $\epsilon>0$ such that $(x-\epsilon, x+\epsilon) \subseteq \{\cup_{i; \bc_i \ne 0} {\rm Supp}(F_{g_i})\}^c$. Let $ A_\epsilon(x) = (x-\epsilon, x+\frac{\epsilon}{2}]$. We have $F_{g_i}(A_\epsilon(x)) =0$ for all $i$ such that $\bc_i\ne 0$ because for such $i$,
\begin{align*}
    F_{g_i}(A_\epsilon(x)) &= F_{g_i}(A_\epsilon(x) \cap  \{\cup_{i; \bc_i \ne 0} {\rm Supp}(F_{g_i})\}^c )\\
    &\le F_{g_i} (\cap_{i;\bc_i \ne 0} {\rm Supp} (F_{g_i})^c)\\
    &\le F_{g_i} ({\rm Supp} (F_{g_i})^c) = 0.
    \end{align*}
In other words, for any $i$ such that $\bc_i\ne 0$,
\begin{align*}
    F_{g_i}(A_\epsilon(x)) = \Psi_{g_i,g_i}(x+\frac{\epsilon}{2}) - \Psi_{g_i,g_i}(x-\epsilon) = 0.
\end{align*}
Now, since $x \in {\rm Supp}(F_{\bc^\top g})$, $F_{\bc^\top g} (A_\epsilon(x))\ge F_{\bc^\top g} ((x-\frac{\epsilon}{2}, x+\frac{\epsilon}{2}))>0$. 
On the other hand, using \eqref{lem:eq:Psi_cg_bound}, we obtain
    \begin{align*}
       F_{\bc^\top g}(A_\epsilon(x))
       &= \Psi_{\bc^\top g,\bc^\top g}(x+\frac{\epsilon}{2}) - \Psi_{\bc^\top g,\bc^\top g}(x-\epsilon)  \\
       &\le \left\lbrace \sum_{i=1}^d  |\bc_i  | \sqrt{\Psi_{g_i,g_i}(x+\frac{\epsilon}{2}) - \Psi_{g_i,g_i}(x-\epsilon)} \right\rbrace^2 =0
    \end{align*}
    which is a contradiction. Therefore, we conclude that ${\rm Supp} (F_{\bc^\top g} ) \cap \{\cup_{i; \bc_i \ne 0} {\rm Supp}(F_{g_i})\}^c = \emptyset$, i.e., ${\rm Supp} (F_{\bc^\top g} ) \subseteq  \cup_{i; \bc_i \ne 0} {\rm Supp}(F_{g_i})$.
\end{proof}

\section{Proofs for Section \ref{sec:statistical_guarantees}}

\subsection{Proof for Lemma~\ref{lem:stochastic_g_conv}\label{sec:proof_of_lemma1}}

\begin{proof}
First we note that for any $\bc \in \R^d$, $\bc^\top g = \sum_{i=1}^d \bc_i g_i$ is in $L_2(\pi)$ because
    \begin{align*}
        &\int \{\sum_{i=1}^d \bc_i g_i(x)\}^2 \pi (dx) \\
        &\le \sum_{i=1}^d \bc_i^2  \int g_i^2(x) \pi(dx) +2\sum_{1\le i<j\le d}|\bc_i \bc_j| \sqrt{\int  g_i^2(x) \pi(dx)  \int g_j^2 (x) \pi(dx) }
        <\infty
    \end{align*}
    where the last inequality follows from~\ref{cond:integrability}. Therefore, from Theorem \ref{thm:univariate_conv}, it suffices to show that $\rinitf{\bc_M^\top g}$ satisfies conditions \ref{cond:R1} - \ref{cond:R3}, for $\gamma=\gamma_{\bc^\top g}$ in~\ref{cond:R1}.
Obviously, $\rinitf{\bc_M^\top g}$ satisfies \ref{cond:R2} and \ref{cond:R3}.

In the below, we show that $\rinitf{\bc_M^\top g}$ satisfies \ref{cond:R1}, i.e., we show $\rinitf{\bc_M^\top g}(k) \to \gamma_{\bc^\top g}(k)$ $P_x$-almost surely for each $k\in \Z$, for each $x \in \mathsf{X}$. Let the initial condition $x \in \mathsf{X}$ be given.  
First, we have,
\begin{align*}
    \gamma_{\bc^\top g}(k) = {\rm Cov}_\pi (\bc^\top g(X_0), \bc^\top g(X_{k})) =\bc^\top {\rm Cov}_\pi ( g(X_0),  g(X_{k})) \bc = \sum_{i=1}^d \sum_{j=1}^d \bc_i \bc_j \gammaij(k)
\end{align*}
where we recall the definition $\gammaij(k) = {\rm Cov}_\pi ( g_i(X_t),  g_j(X_{t+k}))$.

We have,
\begin{align*}
    &\rinitf{\bc_M^\top g}(k) \\
    &=
    \frac{1}{M}\sum_{t=0}^{M-1-|k|} \sum_{i=1}^d\sum_{j=1}^d
    \bc_{Mi}\bc_{Mj} \tilde{g}_i(X_t) \tilde{g}_j(X_{t+|k|})\\
    &= \sum_{1\le i<j \le d}
    \bc_{Mi}\bc_{Mj}  \left \lbrace \frac{1}{M}\sum_{t=0}^{M-1-|k|}\{\tilde{g}_i(X_t) \tilde{g}_j(X_{t+|k|})+\tilde{g}_i(X_{t+|k|}) \tilde{g}_j(X_{t})\}\right\rbrace\\
    & \qquad + \sum_{i=1}^d \bc_{Mi}^2\left\lbrace  \frac{1}{M}\sum_{t=0}^{M-1-|k|}\tilde{g}_i(X_t) \tilde{g}_i(X_{t+|k|})\right\rbrace
\end{align*} 

By Lemma 8 in \cite{berg2022efficient}, for any $f\in L_2(\pi)$ we have
\[r_{f,M}(k) = \frac{1}{M}\sum_{t=0}^{M-1-|k|}\tilde{f}(X_t) \tilde{f}(X_{t+|k|}) \to \gamma_f(k)\]
$P_x$-almost surely.
Applying this Lemma to $f = g_i, f=g_j$, and $f=g_i+g_j$,
we have
\begin{align}\label{eq:modified_rij_conv}
    &r_{g_i+g_j,M}(k) - r_{g_i,M}(k) - r_{g_j,M}(k) \nonumber\\
    &=\frac{1}{M}\sum_{t=0}^{M-1-|k|}\{(\tilde{g}_i(X_t) +\tilde{g}_j(X_t))(\tilde{g}_i(X_{t+|k|})+\tilde{g}_j(X_{t+|k|})) - \tilde{g}_i(X_t) \tilde{g}_i(X_{t+|k|}) -\tilde{g}_j(X_t) \tilde{g}_j(X_{t+|k|}) \} \nonumber \\
    &=\frac{1}{M}\sum_{t=0}^{M-1-|k|}\{\tilde{g}_i(X_t) \tilde{g}_j(X_{t+|k|})+\tilde{g}_i(X_{t+|k|}) \tilde{g}_j(X_{t})\}\underset{M\to \infty}{\to} \gamma_{g_i +g_j}(k) -\gamma_{g_i}(k) - \gamma_{g_j}(k) 
\end{align}
$P_x$-almost surely. Note by definition of $\gammaij$, $\gamma_{g_i +g_j}(k) -\gamma_{g_i}(k) - \gamma_{g_j}(k)  = \gammaij(k) + \gammaji(k) $ for each $k\in \Z$. Therefore, w.p $1$,
\begin{align*}
    &\lim_{M\to\infty } \rinitf{\bc_M^\top g}(k)\\
    &= \sum_{1\le i<j \le d}
    \bc_{i}\bc_{j}  \left \lbrace  \lim_{M\to\infty }  \frac{1}{M}\sum_{t=0}^{M-1-|k|}\{\tilde{g}_i(X_t) \tilde{g}_j(X_{t+|k|})+\tilde{g}_i(X_{t+|k|}) \tilde{g}_j(X_{t})\}\right\rbrace\\
    & \qquad + \sum_{i=1}^d \bc_{i}^2\left\lbrace  \lim_{M\to\infty}\frac{1}{M}\sum_{t=0}^{M-1-|k|}\tilde{g}_i(X_t) \tilde{g}_i(X_{t+|k|})\right\rbrace\\
    &=\sum_{1\le i<j \le d}
    \bc_{i}\bc_{j} \{\gammaij(k) + \gammaji(k)\} +\sum_{i=1}^d \bc_i^2 \gamma_{g_i,g_i}(k)=\sum_{i=1}^d \sum_{j=1}^d \bc_i \bc_j \gammaij(k)=\gamma_{\bc^\top g}(k).
\end{align*} Thus $\rinitf{\bc_M^\top g}$ satisfies condition \ref{cond:R1} for $\gamma=\gamma_{\bc^\top g}$. Since~\ref{cond:R2}--\ref{cond:R3} also hold for $\rinitf{\bc_M^\top g}$, the Lemma follows from Theorem~\ref{thm:univariate_conv}. 
\end{proof}

\subsection{Proof for Theorem~\ref{thm:est1-conv}\label{sec:proof_of_Theorem2}}

\begin{proof}
        When $i=j$, this is a direct consequence of the $\ell_2$-convergence of the momentLS autocovariance sequence estimator for univariate functions of a chain. 
        Since each $w_{Mi} \to w_i \ne 0$, for a sufficiently large $M$, $w_{Mi} \ne 0$ for $i=1,\dots,d$. Therefore, without loss of generality, assume $w_{Mi}\ne 0$.
        When $i \ne j$, by \eqref{eq:gamma_ij_mseq}, $\gammaij(k) = \frac{1}{4w_iw_j}\{\gamma_{w_ig_i + w_jg_j}(k) - \gamma_{w_ig_i - w_jg_j}(k) \}$, and we have
  \begin{align*}
      &\|\hat{\gamma}^{(\boldsymbol{\delta})}_{M,ij}(\cdot; \mathbf{w}_M)- \gammaij(\cdot)\|_2  \\
      &=  \|\frac{1}{4w_{Mi} w_{Mj}}\{ \Pi_{\delta_{ij}}(r_{M,w_{Mi} g_i+w_{Mj} g_j})  - \Pi_{\delta_{ij}}(r_{M,w_{Mi} g_i+w_{Mj} g_j})\} - \frac{1}{4w_{i} w_{j}}(\gamma_{w_ig_i+w_jg_j} - \gamma_{w_ig_i-w_jg_j})\|_2\\
      &\leq I + II
  \end{align*}
  where 
  \begin{align*}
      I &= \|\frac{1}{4w_{Mi} w_{Mj}}\Pi_{\delta_{ij}}(r_{M,w_{Mi} g_i+w_{Mj} g_j}) - \frac{1}{4w_iw_j}\gamma_{w_ig_i+w_jg_j} \|_2\\
      II &= \|\frac{1}{4w_{Mi} w_{Mj}}\Pi_{\delta_{ij}}(r_{M,w_{Mi} g_i-w_{Mj} g_j}) - \frac{1}{4w_iw_j}\gamma_{w_ig_i-w_jg_j} \|_2
  \end{align*}
By the inequality
\begin{align}\label{eq:simple_ineq}
 \|a_M b_M - ab\|_2 
\le \|a_M(b_M-b)\|_2 + \|(a_M-a)b\|_2,
\end{align} for scalar $a_m,a \in \R$ and $b_M,b\in\ell_2(\mathbb{Z})$,
we have
\begin{align}
     &I= \|\frac{1}{4w_{Mi} w_{Mj}}\Pi_{\delta_{ij}}(r_{M,w_{Mi} g_i+w_{Mj} g_j}) - \frac{1}{4w_iw_j}\gamma_{w_ig_i+w_jg_j} \|_2\\
     &\le |\frac{1}{4w_{Mi} w_{Mj}} | \|\Pi_{\delta_{ij}}(r_{M,w_{Mi} g_i+w_{Mj} g_j}) -\gamma_{w_ig_i+w_jg_j} \|_2 + |\frac{1}{4w_{Mi} w_{Mj}}-\frac{1}{4w_iw_j}|\|\gamma_{w_ig_i+w_jg_j}\|_2\label{eq:termI}
\end{align}
Note $\delta_{ij} = \min\{\delta_i, \delta_j\} \le \Delta(F_{w_ig_i + w_jg_j})$ by Proposition \ref{prop:delta_bound}. Then, by Lemma \ref{lem:stochastic_g_conv} with $\bc_M = w_{Mi}  \mathbf{e}_i + w_{Mj} \mathbf{e}_j$ and $\bc= w_{i}  \mathbf{e}_i + w_{j} \mathbf{e}_j$ where $\mathbf{e}_j$ is the $j$th canonical basis vector, we have $\|\Pi_{\delta_{ij}}(r_{M,w_{Mi} g_i+w_{Mj} g_j}) -\gamma_{w_ig_i+w_jg_j} \|_2  \to 0$, $P_x$-a.s for any $x\in \mathsf{X}$. The second term in~\eqref{eq:termI} converges to 0 $P_x$-a.s. since $w_{Mi'}\overset{a.s.}{\to}w_{i'}\neq 0$ $P_x$ a.s. for each $i'=1,...,d$. Similarly, Term II also converges to $0$ $P_x$-a.s..

The second claim can be shown similarly as follows:
\begin{align*}
       &|\sum_{k\in\Z} \hat{\gamma}^{(\boldsymbol{\delta})}_{M,ij}(k;\mathbf{w}_M) -\sum_{k\in\Z} \gammaij(k)|\\
     &\le | \frac{1}{4w_{Mi}w_{Mj}} \sigma^2(\Pi_{\delta_{ij}}(r_{M,w_{Mi} g_i+w_{Mj} g_j})) -\frac{1}{4w_{i}w_{j}} \sigma^2(\gamma_{w_{i} g_i+w_{j} g_j}) |\\
     &\quad + | \frac{1}{4w_{Mi}w_{Mj}} \sigma^2(\Pi_{\delta_{ij}}(r_{M,w_{Mi} g_i-w_{Mj} g_j})) -\frac{1}{4w_{i}w_{j}} \sigma^2(\gamma_{w_{i} g_i-w_{j} g_j}) |
   \end{align*}
We see each term converges to $0$ by using a similar inequality as in \eqref{eq:simple_ineq} and applying the asymptotic variance convergence results in Theorem \ref{thm:univariate_conv} and Lemma \ref{lem:stochastic_g_conv}.
\end{proof}

\subsection{Proof for Lemma \ref{lem:sup_c_conv}}\label{sec:proof_of_Lemma2}
We prove the following lemma, which encompasses Lemma \ref{lem:sup_c_conv} as a part of its conclusion.
\begin{lem}
    Suppose the Markov chain $X$ satisfies \ref{cond:harris_ergodicity} - \ref{cond:geometric_ergodicity}. Let $g:\mathsf{X} \to \R^d$ be a function which satisfies \ref{cond:integrability}.
    Suppose $\delta$ is chosen so that $0<\delta\le  \dubar = \min_{1\le i \le d} \Delta(F_{g_i})$.
    We have
    \begin{enumerate}
        \item $\sup_{\bc; \|\bc\|_2 =1} \|\Pi_\delta(\rinitf{\mathbf{c}^\top g}) - \gamma_{\bc^\top g}\|_2 \to 0$, and
        \item $\sup_{\bc; \|\bc\|_2 =1}|\sigma^2(\Pi_\delta(\rinitf{\mathbf{c}^\top g}))- \sigma^2(\gamma_{\bc^\top g})|\to 0$
    \end{enumerate}
    $P_x$-almost surely for any $x \in \mathsf{X}$.    
\end{lem}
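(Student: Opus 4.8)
The plan is to deduce both assertions from the fixed-$\bc$ results already available (Theorem~\ref{thm:univariate_conv} and Lemma~\ref{lem:stochastic_g_conv}) by promoting almost-sure convergence \emph{for each} $\bc$ to almost-sure convergence that is \emph{uniform} over the compact sphere $\mathcal{S}=\{\bc\in\R^d:\|\bc\|_2=1\}$. First I would record that $\gamma_{\bc^\top g}$ is always feasible for the program defining $\Pi_\delta(\rinitf{\bc^\top g})$: Proposition~\ref{prop:delta_bound} gives ${\rm Supp}(F_{\bc^\top g})\subseteq\bigcup_{i:\bc_i\ne0}{\rm Supp}(F_{g_i})\subseteq[-1+\dubar,1-\dubar]$, hence $\Delta(F_{\bc^\top g})\ge\dubar\ge\delta$ and $\gamma_{\bc^\top g}\in\Mld{\delta}$. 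Since $\Pi_\delta$ is the ($1$-Lipschitz) metric projection onto the closed convex set $\Mld{\delta}$, writing $\Delta_\bc:=\Pi_\delta(\rinitf{\bc^\top g})-\gamma_{\bc^\top g}$ we obtain, for all $\bc,\bc'\in\mathcal{S}$,
\[
\|\Delta_\bc-\Delta_{\bc'}\|_2\ \le\ \|\rinitf{\bc^\top g}-\rinitf{\bc'^\top g}\|_2+\|\gamma_{\bc^\top g}-\gamma_{\bc'^\top g}\|_2 .
\]

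\textbf{A modulus of continuity in $\bc$, uniform in $M$.} Next I would expand, exactly as in the proof of Lemma~\ref{lem:stochastic_g_conv}, $\rinitf{\bc^\top g}$ as a $\bc$-bilinear combination $\sum_{i,j}\bc_i\bc_j R_M^{(ij)}$ of the (symmetrized) empirical auto/cross-covariance sequences built from $\rinitf{g_i}$, $\rinitf{g_j}$, $\rinitf{g_i+g_j}$, so that $\|R_M^{(ij)}\|_2\le\|\rinitf{g_i}\|_2+\|\rinitf{g_j}\|_2+\|\rinitf{g_i+g_j}\|_2$, and likewise $\gamma_{\bc^\top g}=\sum_{i,j}\bc_i\bc_j\,\gamma_{g_i,g_j}$. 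Using $\sum_{i,j}|\bc_i\bc_j-\bc'_i\bc'_j|\le 2d\,\|\bc-\bc'\|_2$ on $\mathcal{S}$ this gives $\|\rinitf{\bc^\top g}-\rinitf{\bc'^\top g}\|_2\le 2d\,(\max_{i,j}\|R_M^{(ij)}\|_2)\,\|\bc-\bc'\|_2$ and $\|\gamma_{\bc^\top g}-\gamma_{\bc'^\top g}\|_2\le 2d\,(\max_{i,j}\|\gamma_{g_i,g_j}\|_2)\,\|\bc-\bc'\|_2$, where $\|\gamma_{g_i,g_j}\|_2<\infty$ because $|\gamma_{g_i,g_j}(k)|\le\|\check g_i\|_{L^2(\pi)}\|\check g_j\|_{L^2(\pi)}(1-\delta(Q))^{|k|}$ by Cauchy--Schwarz and self-adjointness of $Q_0$; this domination of cross terms by diagonal quantities with $\bc$-free constants is the ``additional argument'' anticipated in the text. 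It then remains to use that $\max_{i,j}\|R_M^{(ij)}\|_2$ is almost surely bounded in $M$ --- this holds under \ref{cond:harris_ergodicity}--\ref{cond:geometric_ergodicity} by arguments parallel to the univariate analysis in \citet{berg2022efficient}. Consequently, on a probability-one event, $\bc\mapsto\|\Delta_\bc\|_2$ has a modulus of continuity on $\mathcal{S}$ that does not depend on $M$.

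\textbf{Uniform $\ell_2$-convergence (first claim).} I would then fix a countable dense $D\subseteq\mathcal{S}$. For each $\bc\in D$, $\rinitf{\bc^\top g}$ satisfies \ref{cond:R1}--\ref{cond:R3} with $\gamma=\gamma_{\bc^\top g}$ --- the verification of \ref{cond:R1} being precisely equation~\eqref{eq:modified_rij_conv} of the proof of Lemma~\ref{lem:stochastic_g_conv} specialized to constant weights --- and $0<\delta\le\Delta(F_{\bc^\top g})$, so Theorem~\ref{thm:univariate_conv} gives $\|\Delta_\bc\|_2\to0$, $P_x$-a.s. On the probability-one event where this holds simultaneously for all $\bc\in D$ and the modulus of continuity above is $M$-uniform, a standard $\epsilon/3$ argument (choose a sufficiently fine finite net inside $D$, use per-$\bc$ convergence there, interpolate by equicontinuity) delivers $\sup_{\bc\in\mathcal{S}}\|\Delta_\bc\|_2\to0$, $P_x$-a.s., which is the first claim.

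\textbf{Uniform convergence of $\sigma^2$ (second claim), and the main obstacle.} For the second claim I would split $\sigma^2(m)=\sum_{|k|\le N}m(k)+\tau_N(m)$ and bound the tail of any $m\in\Mld{\delta}$ with representing measure $\mu$ by $|\tau_N(m)|\le\frac{2(1-\delta)^{N+1}}{\delta}\,|\mu|(\R)$. The relevant total masses are uniformly controlled: the representing measure of $\gamma_{\bc^\top g}$ has $|F_{\bc^\top g}|(\R)\le\sum_{i,j}|F_{g_i,g_j}|(\R)$, while the positive representing measure $\hat\mu_{\bc,M}$ of $\Pi_\delta(\rinitf{\bc^\top g})$ has mass $\hat\mu_{\bc,M}(\R)=\Pi_\delta(\rinitf{\bc^\top g})(0)\le\|\rinitf{\bc^\top g}\|_2\le 2d\max_{i,j}\|R_M^{(ij)}\|_2$, almost surely bounded in $M$. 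Hence
\[
\sup_{\bc\in\mathcal{S}}\bigl|\sigma^2(\Pi_\delta(\rinitf{\bc^\top g}))-\sigma^2(\gamma_{\bc^\top g})\bigr|\ \le\ \sqrt{2N+1}\,\sup_{\bc\in\mathcal{S}}\|\Delta_\bc\|_2\ +\ C(1-\delta)^{N},
\]
with $C$ independent of $\bc$ and (eventually) of $M$; letting $M\to\infty$ via the first claim and then $N\to\infty$ finishes the proof. The hard part is exactly the passage from ``for each $\bc$'' to ``uniformly in $\bc$'', since one may not intersect uncountably many null sets; everything hinges on the $M$-uniform modulus of continuity, which in turn rests on the almost-sure $\ell_2$-boundedness of the univariate empirical autocovariance sequences together with the Cauchy--Schwarz domination of cross-covariances by diagonal terms.
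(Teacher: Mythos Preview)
Your route is genuinely different from the paper's: instead of invoking the variational inequality for the momentLS projection (Lemma~1 of \citet{berg2022efficient}) and then bounding $\sup_{\bc}\sup_{\alpha}|\langle x_\alpha, r_{\bc^\top g,M}-\gamma_{\bc^\top g}\rangle|$ by $d\max_{i,j}\sup_{\alpha}|\langle x_\alpha,\tilde r_{ij,M}-\gamma_{g_i,g_j}\rangle|$ via the bilinear expansion and H\"older, you try to lift pointwise-in-$\bc$ convergence to uniform convergence through an equicontinuity/compactness argument based on the $1$-Lipschitzness of $\Pi_\delta$.

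The difficulty is the linchpin of your equicontinuity step: you assert that $\max_{i,j}\|R_M^{(ij)}\|_2$ is almost surely bounded in $M$, ``by arguments parallel to the univariate analysis''. But nowhere in \citet{berg2022efficient} (nor in the present paper) is the $\ell_2$ norm of the raw empirical autocovariance sequence shown to be a.s.\ bounded; the only a.s.\ control available there is the crude $|\rinitf{g}(k)|\le \rinitf{g}(0)$, which gives a bound of order $M$ on $\|\rinitf{g}\|_2^2$ and is useless for your purpose. The univariate proof deliberately sidesteps any $\ell_2$ control of $r_M$ by pairing it with the geometrically decaying test sequences $x_\alpha$, so that only $\sup_\alpha|\langle x_\alpha, r_M-\gamma\rangle|$ enters---a quantity that \emph{is} handled by the pointwise convergence $r_M(k)\to\gamma(k)$ together with $|r_M(k)|\le r_M(0)$. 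Without an honest proof that $\sup_M\|\rinitf{g_i}\|_2<\infty$ a.s.\ (which is a nontrivial statement about empirical autocovariances of Harris-ergodic chains and does not follow from the cited results), your $M$-uniform Lipschitz constant in $\bc$ is not established, and the $\epsilon/3$ argument over the dense set collapses. The same unproved bound reappears in your second-claim argument via $\hat\mu_{\bc,M}(\R)\le\|r_{\bc^\top g,M}\|_2$; the paper instead proves uniform control of $\hat F_{\bc^\top g,\delta M}([-1,1])$ separately (its Lemma~\ref{lem:rep_Fhat_bounded}), again using only $\sup_\alpha|\langle x_\alpha,\cdot\rangle|$-type quantities. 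In short, the approach is attractive, but the ``$\ell_2$-boundedness of $R_M^{(ij)}$'' step is a genuine gap that the paper's variational-inequality route is designed to avoid.
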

\begin{proof}
Let $\bc \in \R^d$ be given such that $\|\bc\|_2 = 1$. 
Let $\hat{F}_{\bc^\top g,\delta M}$ and $F_{\bc^\top g}$ be the representing measures for $\Pi_\delta(\rinitf{\bc^\top g})$ and $\gamma_{\bc^\top g}$ respectively. Note $\gamma_{\bc^\top g} \in \Mld{\delta}$ since ${\rm Supp}(F_{\bc^\top g}) \subseteq \cup_{i=1}^d {\rm Supp}(F_{g_i}) \subseteq [-1+\dubar, 1-\dubar] \subseteq [-1+\delta,1-\delta]$ by Proposition \ref{prop:delta_bound}.
Using Lemma 1 in \citet{berg2022efficient}, 
    \begin{align}
          &\sup_{\bc; \|\bc\|_2 =1}\|\Pi_\delta(\rinitf{\mathbf{c}^\top g}) - \gamma_{\bc^\top g}\|_2^2 \nonumber\\
          &\le \sup_{\bc; \|\bc\|_2 =1} \left \lbrace -\int\left\langle x_\alpha, \rinitf{\bc^\top g}-\gamma_{\bc^\top g}\right\rangle  F_{\bc^\top g}(d \alpha)+\int\left\langle x_\alpha,  \rinitf{\bc^\top g}-\gamma_{\bc^\top g}\right\rangle \hat{F}_{\bc^\top g,\delta M}(d \alpha)\right\rbrace\nonumber\\
          &\le\sup_{\bc; \|\bc\|_2 =1} \left\lbrace \sup_{\alpha \in [-1+\delta,1-\delta]} | \langle x_\alpha,  \rinitf{\bc^\top g}-\gamma_{\bc^\top g}\rangle|\right\rbrace \sup_{\bc; \|\bc\|_2 =1}  \{F_{\bc^\top g}([-1,1]) + \hat{F}_{\bc^\top g,\delta M}([-1,1])\}  \label{eq:lem4:bound1}
    \end{align}

We first show that 
\begin{align*}
\limsup_{M\to\infty} \sup_{\bc; \|\bc\|_2 =1} \left\lbrace \sup_{\alpha \in [-1+\delta,1-\delta]} | \langle x_\alpha,  \rinitf{\bc^\top g}-\gamma_{\bc^\top g}\rangle|\right\rbrace  = 0,\,\,\mbox{$P_x$- almost surely}    
\end{align*}

By definition of $r$,
\begin{align}
\rinitf{\bc^\top g}(k) 
&=\frac{1}{M}\sum_{t=0}^{M-1-|k|} \left(\sum_{i=1}^d \bc_i \tilde{g}_i(X_t )\right)\left(\sum_{j=1}^d \bc_j \tilde{g}_j(X_{t+|k|} )\right)  \nonumber   \\
&= \sum_{i=1}^d  \sum_{j=1}^d \bc_i \bc_j \frac{1}{M}\sum_{t=0}^{M-1-|k|}  \tilde{g}_i(X_t )  \tilde{g}_j(X_{t+|k|} )\nonumber \\
&= \sum_{i=1}^d \sum_{j=1}^d \bc_i \bc_j \tilde{r}_{ij,M}(k)\label{pf:lem2:eq:rij}
\end{align}
where we define
\begin{align*}
    \tilde{r}_{ij,M}(k) = 
    \begin{cases}
        \frac{1}{M}\sum_{t=0}^{M-1-|k|}\tilde{g}_i(X_t)  \tilde{g}_i(X_{t+|k|}) & i=j\\
        \frac{1}{2M}\sum_{t=0}^{M-1-|k|} \{\tilde{g}_i(X_t)  \tilde{g}_j(X_{t+|k|}) +\tilde{g}_j(X_t)  \tilde{g}_i(X_{t+|k|})\} & i\ne j.
    \end{cases}
\end{align*}

We also note from \eqref{eq:modified_rij_conv} that for $i\ne j$,
\[\tilde{r}_{ij,M}(k) = \{r_{g_i+g_j,M}(k) - r_{g_i,M}(k) - r_{g_j,M}(k)\}/2 \]
and in particular, $|\tilde{r}_{ij,M}(k)| \le 2^{-1} \{r_{g_i+g_j,M}(0) + r_{g_i,M}(0) + r_{g_j,M}(0)\}$ $P_x$-almost surely for all $1 \le i,j\le d$, since $\rinitf{g}(0)\geq |\rinitf{g}(k)|$ a.s., for all $k\geq 0$ for empirical autocovariance sequences of the form in~\eqref{eq:autocov} \citep[see, e.g.,][Proposition 7]{berg2022efficient}. We note \begin{align*}
    \gamma_{\bc^\top g}(k) = \operatorname{Cov}_\pi( (\bc^\top g)(X_0), (\bc^\top g)(X_k)) = \bc^\top  \operatorname{Cov}_\pi( g(X_0),  g(X_k)) \bc.\end{align*} Additionally, for $\gammaij(k) =  \operatorname{Cov}_\pi( g_i(X_0), g_j(X_k))$, we have \begin{align*}
    |\gammaij(k)| \le \sqrt{ \operatorname{Var}_\pi( g_i(X_0)) \operatorname{Var}_\pi (g_j(X_k)) }  = \sqrt{\gamma_{g_i}(0) \gamma_{g_j}(0)}.
    \end{align*}
For future reference to these bounds, we define
\begin{align*}
\check{r}_{ij,M} &= 0.5\{r_{g_i+g_j,M}(0) + r_{g_i,M}(0) + r_{g_j,M}(0)\}\\
\check{\gamma}_{ij} &= \sqrt{\gamma_{g_i}(0) \gamma_{g_j}(0)}.
\end{align*}
We note $\underset{M\to\infty}{\lim}\,\check{r}_{ij,M}=0.5\{\gamma_{g_i+g_j}(0)+\gamma_{g_i}(0)+\gamma_{g_j}(0)\}$ a.s. due to Assumption \ref{cond:integrability} for all $1\le i,j\le d$. In particular, $\underset{M}{\lim\sup}\,|\check{r}_{ij,M}|<\infty$ a.s. for $1\leq i,j\leq d$.

Coming back to bounding $\left\langle x_{\alpha},  \rinitf{\bc^\top g} -\gamma_{\bc^\top g} \right\rangle$, we have by \eqref{pf:lem2:eq:rij}
\begin{align*}
|\left\langle x_{\alpha},  \rinitf{\bc^\top g} -\gamma_{\bc^\top g} \right\rangle|
=|\sum_{k \in \Z}  \sum_{1 \le i,j \le d}  \bc_i \bc_j (\tilde{r}_{ij,M}(k)-\gammaij(k)) \alpha^{|k|}|
\end{align*}
We first show that the sequence $\{ \bc_i \bc_j (\tilde{r}_{ij,M}(k)-\gammaij(k)) \alpha^{|k|}\}$ is absolutely summable. Then we can exchange the order of summation by Fubini. We have
\begin{align*}
    &\sum_{k \in \Z}  \sum_{1 \le i,j \le d}  |\bc_i \bc_j (\tilde{r}_{ij,M}(k)-\gammaij(k)) \alpha^{|k|}|\\
    &\le \sum_{k \in \Z}  \sum_{1 \le i,j \le d}  |\bc_i \bc_j| |\tilde{r}_{ij,M}(k)-\gammaij(k)| |\alpha|^{|k|}\\
    &\le \sum_{k \in \Z}  \sum_{1 \le i,j \le d}  |\bc_i \bc_j|  |\alpha|^{|k|} \{ \check{r}_{ij,M}+\check{\gamma}_{ij}\}\\
    &=  \sum_{1 \le i,j \le d}  |\bc_i \bc_j|\{ \check{r}_{ij,M}+\check{\gamma}_{ij}\}  \sum_{k \in \Z}  |\alpha|^{|k|}\\
    &\le d \frac{1+|\alpha|}{1-|\alpha|}\max_{1\le i,j\le d}\{ \check{r}_{ij,M}+\check{\gamma}_{ij}\}<\infty.
\end{align*}
where for the final inequality we use the fact that $\|\bc\|_1^2 \le d \|\bc\|_2^2 = d$.
Therefore, by exchanging the summation order, we obtain
\begin{align*}
|\left\langle x_{\alpha},  \rinitf{\bc^\top g} -\gamma_{\bc^\top g} \right\rangle|
&=| \sum_{1 \le i,j \le d}  \bc_i \bc_j \sum_{k \in \Z} (\tilde{r}_{ij,M}(k)-\gammaij(k)) \alpha^{|k|}|\\
&\le \sum_{1 \le i,j \le d}  |\bc_i \bc_j| |\sum_{k \in \Z} (\tilde{r}_{ij,M}(k)-\gammaij(k)) \alpha^{|k|}|\\
&\le \left(\sum_{1 \le i,j \le d}  |\bc_i \bc_j|^2 \right)^{1/2}  \left(\sum_{1 \le i,j \le d}  |\langle \tilde{r}_{ij,M}-\gammaij,x_\alpha\rangle|^2\right)^{1/2}\\
& =  \left(\sum_{1 \le i,j \le d}  |\langle \tilde{r}_{ij,M}-\gammaij,x_\alpha\rangle|^2\right)^{1/2}
\end{align*}   
where for the second to last line we use Holder's inequality and for the last equality we use the fact that $\|\bc\|_2 = 1$.
Now
\begin{align*}
     &\sup_{\bc; \|\bc\|_2 =1}\left\lbrace\sup_{\alpha \in [-1+\delta,1-\delta]}|\left\langle x_{\alpha},  \rinitf{\bc^\top g} -\gamma_{\bc^\top g} \right\rangle| \right\rbrace\\
    &=\sup_{\alpha \in [-1+\delta,1-\delta]}\left\lbrace\sup_{\bc; \|\bc\|_2 =1}|\left\langle x_{\alpha},  \rinitf{\bc^\top g} -\gamma_{\bc^\top g} \right\rangle| \right\rbrace\\
     &\le \sup_{\alpha \in [-1+\delta,1-\delta]}\left(\sum_{1 \le i,j \le d}  |\langle \tilde{r}_{ij,M}-\gammaij,x_\alpha\rangle|^2\right)^{1/2}\\
     &\le \sup_{\alpha \in [-1+\delta,1-\delta]} \left( d^2 \max_{1\le i,j \le d} |\langle \tilde{r}_{ij,M}-\gammaij,x_\alpha\rangle|^2 \right)^{1/2}\\
      &= d \sup_{\alpha \in [-1+\delta,1-\delta]}  \max_{1\le i,j \le d} |\langle \tilde{r}_{ij,M}-\gammaij,x_\alpha\rangle|
\end{align*}
From \eqref{eq:modified_rij_conv}, we have that $\tilde{r}_{ij,M}(k) \to \gammaij(k)$ almost surely, for each $k\in \Z$. Following the same lines as in the proof of proposition 8 in \citet{berg2022efficient} except using the bounds $|\tilde{r}_{ij}(k)| \le \check{r}_{ij,M}$ and $|\gammaij(k)| \le \check{\gamma}_{ij}$, we obtain
\begin{align}
    \lim_{M\to \infty} \sup_{\alpha \in [-1+\delta,1-\delta]}   |\langle \tilde{r}_{ij,M}-\gammaij,x_\alpha\rangle| =0
\end{align} 
almost surely for each $1\le i,j\le d$, and thus 
\begin{align}\label{eq:lem4:bound2}
    \sup_{\bc; \|\bc\|_2 =1}\left\lbrace\sup_{\alpha \in [-1+\delta,1-\delta]}|\left\langle x_{\alpha},  \rinitf{\bc^\top g} -\gamma_{\bc^\top g} \right\rangle| \right\rbrace \le  d   \max_{1\le i,j \le d} \left(\sup_{\alpha \in [-1+\delta,1-\delta]}|\langle \tilde{r}_{ij,M}-\gammaij,x_\alpha\rangle|\right)\to 0
\end{align}
almost surely, since $d$ is finite and the order of supremums can be exchanged.

Now we show that the term
\[
 \sup_{\bc; \|\bc\|_2 =1} \left\lbrace F_{\bc^\top g}([-1,1]) + \hat{F}_{\bc^\top g,\delta M}([-1,1])\right\rbrace
\]
stays bounded.
By definition of the representing measure, 
\[F_{\bc^\top g}([-1,1]) = \gamma_{\bc^\top g}(0) = \bc^\top \operatorname{Cov}_\pi (g(X_0), g(X_0)) \bc=\sum_{1\le i,j \le d}  \bc_i\bc_j \gammaij(0)\]
and
\[\sum_{1\le i,j \le d}  \bc_i\bc_j \gammaij(0) \le \max_{i,j} \check{\gamma}_{ij}\sum_{1\le i,j \le d}  \bc_i\bc_j  \le \max_{i,j}\check{\gamma}_{ij} \|\bc\|_1^2 \le d \max_{i,j}\check{\gamma}_{ij}.\]
In particular,
$ \sup_{\bc; \|\bc\|_2 =1}   F_{\bc^\top g}([-1,1]) \le d \max_{i,j}\check{\gamma}_{ij} <\infty$. Thus, using the conclusion of Lemma \ref{lem:rep_Fhat_bounded} below, we have
\begin{align}\label{eq:lem4:bound3}
\limsup_{M\to \infty}\sup_{\bc ; \|\bc\|_2 =1} \{F_{\bc^\top g}([-1,1])+\hat{F}_{\bc^\top g,\delta M}([-1,1])\}
\le d( \max_{i,j}\check{\gamma}_{ij}) + C< \infty .
\end{align}
where $C$ is a constant defined in Lemma \ref{lem:rep_Fhat_bounded}.

Combining \eqref{eq:lem4:bound1}, \eqref{eq:lem4:bound2}, and \eqref{eq:lem4:bound3}, we conclude $\sup_{\bc; \|\bc\|_2 =1} \|\Pi_\delta(\rinitf{\mathbf{c}^\top g}) - \gamma_{\bc^\top g}\|_2 \to 0$, $P_x$-almost surely.

Having established the sequence convergence result, the asymptotic variance consistency result can be established following similar lines in the proof of Lemma 7 in \citet{berg2022efficient}, except we using the supremum of the total size of the representing measures
\[\sup_{\bc; \|\bc\|_2 =1} \left\lbrace F_{\bc^\top g}([-1,1]) + \hat{F}_{\bc^\top g,\delta M}([-1,1])\right\rbrace\]
for $B$ in their proof and using the supremum $\ell_2$ sequence convergence result, i.e.,
\[\sup_{\bc; \|\bc\|_2 =1} \|\Pi_\delta(\rinitf{\mathbf{c}^\top g}) - \gamma_{\bc^\top g}\|_2 \to 0.\]
\end{proof}

\begin{lem}\label{lem:rep_Fhat_bounded}
Suppose $\delta$ is chosen so that $0<\delta \le \underline{\delta} = \min_{1\le i \le d} \Delta(F_{g_i})$.
There exists a constant $C =  \frac{d(2-\delta)/\delta}{\inf_{\alpha, \alpha' \in [-1+\delta,1-\delta]} \left\langle x_{\alpha},  x_{\alpha'}\right\rangle}  \max_{1\le i,j\le d} \check{\gamma}_{ij} <\infty$, depending only on $\delta$ and $g$, such that 
\[\limsup_{M\to \infty}\sup_{\bc ; \|\bc\|_2 =1} \hat{F}_{\bc^\top g,\delta M}([-1,1])\le C,\] $P_x$- almost surely, where  $\hat{F}_{\bc^\top g,\delta M}$ is the representing measure for $\Pi_\delta(\rinitf{\bc^\top g})$ and $\check{\gamma}_{ij} = \sqrt{\gamma_{g_i}(0) \gamma_{g_j}(0)}$ for $1\le i,j \le d$.
\end{lem}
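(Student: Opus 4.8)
The plan is to combine three ingredients: the first‑order optimality inequality for the momentLS projection (Lemma~1 of \citealt{berg2022efficient}), specialized to the trivially feasible sequence $0\in\Mld{\delta}$; the ``peak'' property $|\rinitf{f}(k)|\le \rinitf{f}(0)$ of empirical autocovariance sequences (Proposition~7 of \citealt{berg2022efficient}) applied to the univariate process $\{\bc^\top g(X_t)\}$; and the strict positivity and boundedness on $[-1+\delta,1-\delta]^2$ of the kernel $\langle x_\alpha, x_{\alpha'}\rangle = (1+\alpha\alpha')/(1-\alpha\alpha')$.

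First I would fix $\bc\in\R^d$ with $\|\bc\|_2=1$ and abbreviate $r=\rinitf{\bc^\top g}$ and $\hat{F}=\hat{F}_{\bc^\top g,\delta M}$, recalling that $\hat{F}$ is a \emph{positive} measure with ${\rm Supp}(\hat{F})\subseteq[-1+\delta,1-\delta]$ since it represents $\Pi_\delta(r)$. If $\hat{F}([-1,1])=0$ the bound is trivial, so assume otherwise. Applying Lemma~1 of \citet{berg2022efficient} with the feasible point $\gamma=0$ (whose representing measure is the zero measure) gives $\|\Pi_\delta(r)\|_2^2 \le \int\langle x_\alpha, r\rangle\,\hat{F}(d\alpha)$. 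For the left‑hand side, Fubini (legitimate because the kernel is bounded on $[-1+\delta,1-\delta]^2$ and $\hat{F}$ is finite) gives $\|\Pi_\delta(r)\|_2^2 = \int\int\langle x_\alpha, x_{\alpha'}\rangle\,\hat{F}(d\alpha)\hat{F}(d\alpha') \ge c_\delta\,\hat{F}([-1,1])^2$, where $c_\delta := \inf_{\alpha,\alpha'\in[-1+\delta,1-\delta]}\langle x_\alpha, x_{\alpha'}\rangle = (1-(1-\delta)^2)/(1+(1-\delta)^2)>0$ is precisely the infimum appearing in the definition of $C$. For the right‑hand side, the peak property gives $|\langle x_\alpha, r\rangle| \le r(0)\sum_{k\in\Z}|\alpha|^{|k|} = r(0)\frac{1+|\alpha|}{1-|\alpha|} \le r(0)\frac{2-\delta}{\delta}$ for every $\alpha\in{\rm Supp}(\hat{F})$, hence $\int\langle x_\alpha, r\rangle\,\hat{F}(d\alpha) \le \frac{2-\delta}{\delta}\,r(0)\,\hat{F}([-1,1])$. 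Dividing through by $\hat{F}([-1,1])>0$ yields the pointwise bound
\[
\hat{F}_{\bc^\top g,\delta M}([-1,1]) \le \frac{(2-\delta)/\delta}{c_\delta}\,\rinitf{\bc^\top g}(0).
\]

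Next I would take $\sup_{\|\bc\|_2=1}$ and then $\limsup_{M\to\infty}$ of the right‑hand side. Writing $R_M := \frac1M\sum_{t=0}^{M-1}\tilde g(X_t)\tilde g(X_t)^\top\in\R^{d\times d}$ for the lag‑$0$ matrix empirical autocovariance, one has $\rinitf{\bc^\top g}(0)=\bc^\top R_M\bc$, so $\sup_{\|\bc\|_2=1}\rinitf{\bc^\top g}(0)=\lambda_{\rm max}(R_M)$. Each entry of $R_M$ converges $P_x$-a.s.\ to the corresponding entry of $\Gamma(0):={\rm Cov}_\pi(g(X_0),g(X_0))$ — the diagonal entries by Lemma~8 of \citet{berg2022efficient} and the off‑diagonal entries by~\eqref{eq:modified_rij_conv} evaluated at lag $0$ — so $R_M\to\Gamma(0)$ a.s.\ in Frobenius norm, hence $\lambda_{\rm max}(R_M)\to\lambda_{\rm max}(\Gamma(0))$ a.s.\ by continuity of eigenvalues. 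Finally, since $\Gamma(0)\succeq 0$, $\lambda_{\rm max}(\Gamma(0))\le {\rm tr}(\Gamma(0)) = \sum_{i=1}^d\gamma_{g_i}(0) \le d\max_{1\le i\le d}\gamma_{g_i}(0) = d\max_{1\le i,j\le d}\check{\gamma}_{ij}$, the last equality because $\max_{i,j}\sqrt{\gamma_{g_i}(0)\gamma_{g_j}(0)}$ is attained on the diagonal. Assembling these gives $\limsup_{M\to\infty}\sup_{\|\bc\|_2=1}\hat{F}_{\bc^\top g,\delta M}([-1,1]) \le \frac{d(2-\delta)/\delta}{c_\delta}\max_{i,j}\check{\gamma}_{ij} = C$, with $C<\infty$ since $c_\delta>0$ and each $\gamma_{g_i}(0)<\infty$ by~\ref{cond:integrability}.

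The step I expect to require the most care is the optimality argument: invoking Lemma~1 of \citet{berg2022efficient} with an admissible comparison sequence ($0\in\Mld{\delta}$) and then converting the resulting inequality into the \emph{quadratic} lower bound $c_\delta\,\hat{F}([-1,1])^2$ on $\|\Pi_\delta(r)\|_2^2$ — this is exactly where positivity of $\hat{F}$ and of the kernel on $[-1+\delta,1-\delta]^2$ are essential, and it is what produces the stated constant without, say, a spurious factor of $2$. One minor point to handle cleanly is the Fubini interchange in $\|\Pi_\delta(r)\|_2^2 = \int\int\langle x_\alpha, x_{\alpha'}\rangle\,\hat{F}(d\alpha)\hat{F}(d\alpha')$, justified by the uniform bound on the kernel over the compact support of $\hat{F}$. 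The remaining items — the peak bound, exchanging $\sup_{\bc}$ with the entrywise SLLN for $R_M$, and the estimate $\lambda_{\rm max}(\Gamma(0))\le d\max_i\gamma_{g_i}(0)$ — are routine.
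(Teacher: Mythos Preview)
Your proof is correct and, in its first step, matches the paper: both arguments pass through the inequality $\hat{F}_{\bc^\top g,\delta M}([-1,1])\le \sup_{\alpha}|\langle x_\alpha,\rinitf{\bc^\top g}\rangle|/c_\delta$ (you rederive it from Lemma~1 of \citealt{berg2022efficient} with $\gamma=0$; the paper cites Proposition~9 of that reference, which is the same computation).

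Where you diverge is in controlling $\sup_{\|\bc\|_2=1}\sup_{\alpha}|\langle x_\alpha,\rinitf{\bc^\top g}\rangle|$. The paper splits $\langle x_\alpha,\rinitf{\bc^\top g}\rangle$ into $\langle x_\alpha,\rinitf{\bc^\top g}-\gamma_{\bc^\top g}\rangle$ plus $\langle x_\alpha,\gamma_{\bc^\top g}\rangle$, invokes the already–established uniform convergence~\eqref{eq:lem4:bound2} for the first piece, and bounds the second via $|\gammaij(k)|\le\check\gamma_{ij}$ together with $\|\bc\|_1^2\le d$. You instead apply the peak property directly to get $|\langle x_\alpha,\rinitf{\bc^\top g}\rangle|\le \rinitf{\bc^\top g}(0)(2-\delta)/\delta$, identify $\sup_{\|\bc\|_2=1}\rinitf{\bc^\top g}(0)$ with $\lambda_{\max}(R_M)$, and pass to the limit via entrywise SLLN plus $\lambda_{\max}(\Gamma(0))\le\operatorname{tr}(\Gamma(0))\le d\max_i\gamma_{g_i}(0)=d\max_{i,j}\check\gamma_{ij}$. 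Your route is more self-contained---it needs only lag-$0$ convergence rather than the full uniform-in-$(\bc,\alpha)$ result~\eqref{eq:lem4:bound2}---while the paper's route is natural in context because~\eqref{eq:lem4:bound2} has just been proved in the surrounding lemma. Both land on exactly the stated constant~$C$.
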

\begin{proof}
Let $\bc \in \R^d$ such that $\|\bc\|_2=1$ be given. We have $\operatorname{Supp}(\hat{F}_{\bc^\top g, \delta M})\subseteq [-1+\delta,1-\delta]  $ by the definition of the momentLS estimator $\Pi_\delta(\cdot)$.
Following the same lines as in the proof of Proposition 9 in \cite{berg2022efficient}, we obtain
\begin{align}
    \hat{F}_{\bc^\top g, \delta M}([-1,1])=\sum_{\alpha \in \operatorname{Supp}(\hat{F}_{\bc^\top g, \delta M})} \hat{F}_{\bc^\top g, \delta M}(\{\alpha\}) \le \frac{\sup_{\alpha \in [-1+\delta,1-\delta]} |\left\langle x_{\alpha},  \rinitf{\bc^\top g} \right\rangle|}{ \inf_{\alpha, \alpha' \in [-1+\delta,1-\delta]} \left\langle x_{\alpha},  x_{\alpha'} \right\rangle }\label{eq:sizeBound}
\end{align}
The denominator is deterministic and does not depend on $\bc$. Let $C_0:= \inf_{\alpha, \alpha' \in [-1+\delta,1-\delta]} \left\langle x_{\alpha},  x_{\alpha'} \right\rangle$. 
Now we show that for $\tilde{C}=\frac{d(2-\delta)}{\delta}\max_{1\leq i,j\leq d}\check{\gamma}_{ij}$, we have
\begin{align}\label{eq:lem4:innerproduct_bound}
    \limsup_{M\to\infty}\sup_{\bc ; \|\bc\|_2 =1} \sup_{\alpha \in [-1+\delta,1-\delta]} |\left\langle x_{\alpha},  \rinitf{\bc^\top g} \right\rangle| \le \tilde{C}
\end{align}
almost surely.
We have
\begin{align*} 
&\sup_{\bc ; \|\bc\|_2 =1} \sup_{\alpha \in [-1+\delta,1-\delta]} |\left\langle x_{\alpha},  \rinitf{\bc^\top g} \right\rangle| \\
 &\leq  \sup_{\bc ; \|\bc\|_2 =1}\sup_{\alpha \in [-1+\delta,1-\delta]} |\left\langle x_{\alpha},  \rinitf{\bc^\top g} -\gamma_{\bc^\top g} \right\rangle|+\sup_{\bc ; \|\bc\|_2 =1}\sup_{\alpha \in [-1+\delta,1-\delta]} |\left\langle x_{\alpha},  \gamma_{\bc^\top g} \right\rangle|.
\end{align*}
By \eqref{eq:lem4:bound2}, 
\[
\underset{M\to\infty}{\lim\sup}\; \sup_{\bc ; \|\bc\|_2 =1}\sup_{\alpha \in [-1+\delta,1-\delta]} |\left\langle x_{\alpha},  \rinitf{\bc^\top g} -\gamma_{\bc^\top g} \right\rangle| = 0
\]
almost surely. 
For the second term 
\begin{align*}
    |\left\langle x_{\alpha},  \gamma_{\bc^\top g} \right\rangle| 
    &=|\sum_{k \in \Z} \alpha^{|k|} \bc^\top \operatorname{Cov}_\pi(g(X_0),g(X_k)) \bc |\\
    &= |\sum_{k \in \Z}\sum_{1 \le i,j \le d} \alpha^{|k|} \bc_i \bc_j   \gammaij(k) |\\
    &\le \sum_{1 \le i,j \le d}|\bc_i \bc_j| \sum_{k \in \Z} |\alpha^{|k|}   \gammaij(k)  |\\
    &\le d  (\max_{1\le i,j\le d} \check{\gamma}_{ij})   \sum_{k \in \Z} |\alpha|^{|k|} 
\end{align*}
where for the second last inequality we use the Fubini-Tonelli Theorem and for the last inequality we use  $\|\bc\|_1 \le \sqrt{d} \|\bc\|_2 = \sqrt{d}$. 
Therefore,
\begin{align*}
    \sup_{\bc ; \|\bc\|_2 =1} \sup_{\alpha \in [-1+\delta,1-\delta]}  |\left\langle x_{\alpha},  \gamma_{\bc^\top g} \right\rangle| 
      \le  \frac{d(2-\delta)}{\delta}  \max_{1\le i,j\le d} \check{\gamma}_{ij}
\end{align*}
Therefore the inequality \eqref{eq:lem4:innerproduct_bound} holds with $\tilde{C} =  \frac{d(2-\delta)}{\delta}  \max_{1\le i,j\le d} \check{\gamma}_{ij}$. Thus, from~\eqref{eq:sizeBound}, we have \begin{align*}
    \underset{M\to\infty}{\lim\sup}\;\sup_{\bc ; \|\bc\|_2 =1}\;\hat{F}_{\bc^\top g, \delta M}([-1,1])\leq \tilde{C}/C_0=\frac{d(2-\delta)/\delta}{\inf_{\alpha, \alpha' \in [-1+\delta,1-\delta]} \left\langle x_{\alpha},  x_{\alpha'} \right\rangle}  \max_{1\le i,j\le d} \check{\gamma}_{ij}
\end{align*} almost surely.
\end{proof}

\subsection{Proof for Theorem~\ref{thm:mtv2}\label{sec:proof_of_theorem3}}
\begin{proof}
Let $\lambda_1\ge \dots\ge \lambda_d$ be eigenvalues of $\Sigma$ with corresponding eigenvectors $[\U_1,\dots,\U_d]$ such that $\Sigma \U_j = \lambda_j\U_j$ for $j=1,\dots,d$. 
Also, recall the definition of $\hat{\Sigma}^{(\boldsymbol{\delta})}_{psd} = \hat{\U}_M \hat{\bLambda}_{\boldsymbol{\delta}} \hat{\U}_M^\top$, where  $\hat{\bLambda}_{\boldsymbol{\delta}}$ is a diagonal matrix with $j$th diagonal entry $\sigma^2(\Pi_{
\underline{\delta}}(\rinitf{\hat{\U}_{Mj}^\top g}))$ and $\hat{\U}_M = [\hat{\U}_{M1},\dots,\hat{\U}_{Md}]$ is the collection of eigenvectors for the original estimator $\hat{\Sigma}^{(\boldsymbol{\delta})}_{pw}$, such that $\hat{\Sigma}^{(\boldsymbol{\delta})}_{pw} \hat{\U}_{Mj} = \hat{\lambda}^{pw}_j\hat{\U}_{Mj}$ for $j=1,\dots,d$ and $\hat{\lambda}^{pw}_1 \ge \dots \geq \hat{\lambda}^{pw}_d$.
For notational simplicity, we denote $\hat{\lambda}_j = \sigma^2(\Pi_{
\dubar}(\rinitf{\hat{\U}_{Mj}^\top g}))$, $j=1,\dots,d$ and $\tilde{\Sigma} = \hat{\Sigma}^{(\boldsymbol{\delta})}_{psd}$. Since $\tilde{\Sigma} \hat{\U}_M = \hat{\U}_M \hat{\bLambda}_{\boldsymbol{\delta}}$, each $(\hat{\lambda}_j, \hat{\U}_{Mj})$ is an eigenpair of $\tilde{\Sigma}$ for $j=1,\dots,d$. 

Let $r$ be the number of distinct eigenvalues in $(\lambda_1,\dots,\lambda_d)$ and $\lambda_{(1)}> \dots > \lambda_{(r)}\ge 0$ be the ordered distinct eigenvalues. Define $S_j= \{k; \lambda_k = \lambda_{(j)} \}$ and let $\U_{S_j}$ and $\hat{\U}_{MS_j}$ be sub-matrices consisting of $\{\U_k\}_{k\in S_j}$ and $\{\hat{\U}_{Mk}\}_{k\in S_j}$, respectively, for $j=1,\dots,r$.

We have,
\begin{align}
    \| \tilde{\Sigma}- \Sigma \|_{F}
     &= \| \sum_{j=1}^d \hat{\lambda}_j \hat{\U}_{Mj}\hat{\U}_{Mj}^\top - \sum_{j=1}^d \lambda_j \U_j \U_j^\top\|_F \nonumber\\
     &= \| \sum_{j=1}^d (\hat{\lambda}_j -\lambda_j) \hat{\U}_{Mj} \hat{\U}_{Mj}^\top + \sum_{j=1}^d \lambda_j (\hat{\U}_{Mj} \hat{\U}_{Mj}^\top -  \U_j \U_j^\top) \|_{F}\nonumber\\
     &\le \sum_{j=1}^d |\hat{\lambda}_j -\lambda_j| +\sum_{j=1}^r \lambda_{(j)}  \| \hat{\U}_{MS_j} \hat{\U}_{MS_j}^\top -  \U_{S_j} \U_{S_j}^\top\|_F \label{eq:thm2:sigma_bound}
\end{align}
where for the last inequality we use the fact $\|\hat{\U}_{Mj} \hat{\U}_{Mj}^\top \|_F = \sqrt{\textrm{tr}(\hat{\U}_{Mj} \hat{\U}_{Mj}^\top\hat{\U}_{Mj} \hat{\U}_{Mj}^\top)} = 1$ for $j=1,\dots,d$.

For the second term in \eqref{eq:thm2:sigma_bound},
\begin{align*}
    \| \hat{\U}_{MS_j} \hat{\U}_{MS_j}^\top -  \U_{S_j} \U_{S_j}^\top\|_F 
    =\frac{1}{\sqrt{2}}\|\textrm{Sin}  \,\Theta (\U_{S_j}, \hat{\U}_{MS_j} )\|_F^2 .
\end{align*}
where $\Theta(\U_{S_j}, \hat{\U}_{MS_j} )$ denotes the $|S_j|$ by $|S_j|$ diagonal matrix whose diagonal entries are the principal angles of the column spaces generated by $\U_{S_j}$ and $\hat{\U}_{MS_j}$ respectively, and $\sin \Theta(\U_{S_j}, \hat{\U}_{MS_j} )$ is defined entry-wise. More specifically, \[\sin \Theta (\U_{S_j}, \hat{\U}_{MS_j} ) = {\rm diag}\{\sin (\cos^{-1}(\sigma_1)),\dots,\sin(\cos^{-1}(\sigma_{|S_j|}))\}\] where $\sigma_1\ge \dots\ge\sigma_{|S_j|}$ are the singular values of $\hat{\U}_{MS_j}^\top \U_{S_j}$.

By a variant of the Davis-Kahan theorem (Theorem 2 in \cite{yu2015useful}), for each $j=1,\dots,r$, 
$$
\| \sin\Theta (\U_{S_j}, \hat{\U}_{MS_j} )\|_F
\leq \frac{2 \min \left(|S_j|^{1 / 2}\|\hat{\Sigma}^{(\boldsymbol{\delta})}_{pw}-\Sigma\|_{\mathrm{op}},\|\hat{\Sigma}^{(\boldsymbol{\delta})}_{pw}-\Sigma\|_{\mathrm{F}}\right)}{\min \left(\lambda_{(j-1)}-\lambda_{(j)}, \lambda_{(j)}-\lambda_{(j+1)}\right)} .
$$
where we define $\lambda_{(0)} = \infty$ and $\lambda_{(r+1)} = -\infty$. In particular, since $d$ is fixed and $\hat{\Sigma}^{(\boldsymbol{\delta})}_{pw, ij} \to \Sigma_{ij}$ for each $i,j$, $P_x$-a.s., we have $\| \sin\Theta (\U_{S_j}, \hat{\U}_{MS_j} )\|_F \to 0$ for each $j$, $P_x$-a.s.

For the first term in \eqref{eq:thm2:sigma_bound}, we have for each $i=1,\dots,d$,
\begin{align}
    |\hat{\lambda}_i - \lambda_i| 
    &\le |\sigma^2(\Pi_{
\dubar}(\rinitf{\hat{\U}_{Mi}^\top g})) - \sigma^2 (\gamma_{\hat{\U}_{Mi}^\top g}) | + |\sigma^2 (\gamma_{\hat{\U}_{Mi}^\top g}) - \sigma^2 (\gamma_{\U_i^\top g})| \nonumber\\
&\le \sup_{\bc; \|\bc\|_2 =1} |\sigma^2(\Pi_{
\dubar}(\rinitf{\bc^\top g})) - \sigma^2 (\gamma_{\bc^\top g}) | + |\sigma^2 (\gamma_{\hat{\U}_{Mi}^\top g}) - \sigma^2 (\gamma_{\U_i^\top g})| \label{eq:thm2:eigen_bound}
\end{align}
The first term in \eqref{eq:thm2:eigen_bound} converges to $0$ almost surely due to Lemma \ref{lem:sup_c_conv} since $\dubar = \min_{1\le i\le d} \delta_i \le \min_{1 \le i \le d} \Delta(F_{g_i})$ by the condition of $\boldsymbol{\delta}=[\delta_1,\dots,\delta_d]$ in Theorem \ref{thm:est1-conv}. For the second term in \eqref{eq:thm2:eigen_bound}, for each $\bc \in \R^d$, we have $\sigma^2 (\gamma_{\bc^\top g}) = 
 \bc^\top \Sigma \bc$ and
\begin{align*}
    |\sigma^2 (\gamma_{\hat{\U}_{Mi}^\top g}) - \sigma^2 (\gamma_{\U_i^\top g})| = | \hat{\U}_{Mi}^\top \Sigma \hat{\U}_{Mi} - \U_i^\top \Sigma \U_i|
\end{align*}
Let $j$ be the index such that $\lambda_i = \lambda_{(j)}$. That is $i \in S_j = \{k; \lambda_k = \lambda_{(j)}\}$. We note that for any orthogonal matrix $\mathbf{O} \in \R^{|S_j| \times |S_j|}$, a column of the matrix $\U_{S_j} \mathbf{O} \in \R^{d \times |S_j|}$ is also an eigenvector corresponding to $\lambda_{(j)}$ since for $l = 1,\dots, |S_j|$
\begin{align*}
    \Sigma (\U_{S_j} \mathbf{O})_l = \Sigma \left (\sum_{k=1}^{|S_j|} \U_{S_j,k} \mathbf{O}_{kl}\right) =   \sum_{k=1}^{|S_j|} \Sigma\U_{S_j,k} \mathbf{O}_{kl} = \lambda_{(j)} \sum_{k=1}^{|S_j|} \U_{S_j,k} \mathbf{O}_{kl} = \lambda_{(j)} (\U_{S_j} \mathbf{O})_l
\end{align*}
Moreover, by Theorem 2 in \cite{yu2015useful}, for each $j=1,\dots,r$ there exists an orthogonal matrix $\hat{\mathbf{O}}_{Mj} \in \R^{|S_j| \times |S_j|}$ such that
\begin{align}\label{eq:DK-2}
\|\U_{S_j} - \hat{\U}_{MS_j}\hat{\mathbf{O}}_{Mj} \|_F
\leq \frac{2^{3/2} \min \left(|S_j|^{1 / 2}\|\hat{\Sigma}^{(\boldsymbol{\delta})}_{pw}-\Sigma\|_{\mathrm{op}},\|\hat{\Sigma}^{(\boldsymbol{\delta})}_{pw}-\Sigma\|_{\mathrm{F}}\right)}{\min \left(\lambda_{(j-1)}-\lambda_{(j)}, \lambda_{(j)}-\lambda_{(j+1)}\right)} .
\end{align}
Let us define $\tilde{\mathbf{O}}_{Mj} = \hat{\mathbf{O}}_{Mj}^{-1}$, and let $l$ be the column index for $\hat{\U}_{Mi}$ in $\hat{\U}_{MS_j}$, i.e., $\hat{\U}_{MS_j} \mathbf{e}_l = \hat{\U}_{Mi}$. Note that $\tilde{\mathbf{O}}_{Mj}$ is also an orthogonal matrix, and therefore $ (\U_{S_j}\tilde{\mathbf{O}}_{Mj})\mathbf{e}_l$ is an eigenvector of $\Sigma$ whose corresponding eigenvalue is $\lambda_{(j)} = \lambda_i$. In particular, $\lambda_i = \U_i^\top \Sigma \U_i = \mathbf{e}_l^\top (\U_{S_j}\tilde{\mathbf{O}}_{Mj})^\top\Sigma (\U_{S_j}\tilde{\mathbf{O}}_{Mj})\mathbf{e}_l$. 
Therefore, we have,
\begin{align*}
    |\sigma^2 (\gamma_{\tilde{\U}_{i}^\top g}) - \sigma^2 (\gamma_{\U_i^\top g})| &= | \hat{\U}_{Mi}^\top \Sigma \hat{\U}_{Mi} - \U_i^\top \Sigma \U_i| \\
   &=| \mathbf{e}_l^\top \hat{\U}_{MS_j}^\top \Sigma \hat{\U}_{MS_j}\mathbf{e}_l - \mathbf{e}_l^\top (\U_{S_j}\tilde{\mathbf{O}}_{Mj}) ^\top \Sigma (\U_{S_j}\tilde{\mathbf{O}}_{Mj})\mathbf{e}_l|\\
   &\le \|\hat{\U}_{MS_j}\mathbf{e}_l  - (\U_{S_j}\tilde{\mathbf{O}}_{Mj})\mathbf{e}_l\|_2\{\|\Sigma \hat{\U}_{MS_j}\mathbf{e}_l\|_2+ \|\Sigma (\U_{S_j}\tilde{\mathbf{O}}_{Mj})\mathbf{e}_l\|_2\}\\
   &\le 2 \|\hat{\U}_{MS_j}  - (\U_{S_j}\tilde{\mathbf{O}}_{Mj})\|_F \|\Sigma\|_F
\end{align*}
where the first inequality uses the fact that 
\begin{align*}
    |\mathbf{a}^\top \Sigma \mathbf{a} - \mathbf{b}^\top \Sigma \mathbf{b}| \le \|\mathbf{a}- \mathbf{b}\|_2 \{ \|\Sigma \mathbf{a}\|_2 + \|\Sigma \mathbf{b}\|_2\}
\end{align*}
and the second inequality uses the fact that for any $\mathbf{A} \in \R^{d \times r}$ and $\mathbf{b} \in \R^{r}$,
$\|\mathbf{A}\mathbf{b} \|_2 \le \|\mathbf{A}\|_F \|\mathbf{b}\|_2$.

Now, 
\begin{align*}
    \|\hat{\U}_{MS_j}  - (\U_{S_j}\tilde{\mathbf{O}}_{Mj})\|_F = \|\hat{\U}_{MS_j}\hat{\mathbf{O}}_{Mj}  - (\U_{S_j})\|_F \underset{M\to \infty}{\to} 0
\end{align*}
$P_x$-almost surely by \eqref{eq:DK-2}, which shows that $|\hat{\lambda}_j - \lambda_j| \underset{M\to \infty}{\to} 0$ $P_x$-almost surely for each $j=1,\dots,d$. This implies $\sum_{j=1}^d |\hat{\lambda}_j - \lambda_j| \underset{M\to \infty}{\to} 0$, $P_x$-a.s., since $d$ is finite.
\end{proof}

\newpage
\section{Algorithm for MomentLS estimators}\label{sec:alg_momentLS}

We summarize the steps to compute the multivariate moment LS asymptotic variance estimator (mtv-mLSE) $\hat{\Sigma}^{(\boldsymbol{\delta})}$ in Algorithm \ref{alg:mtvMomentLS}. Algorithm \ref{alg:mtvMomentLS} invokes functions related to asymptotic variance estimation via the univariate momentLS method, which we summarize in Algorithm \ref{alg:uMomentLS}.

\begin{algorithm}[H]
\setstretch{1}
\SetAlgoLined
\SetAlgoNoEnd
\SetKwProg{Prog}{Function}{}{}
\SetKwFunction{fmtvpw}{mtvMomentLS}
\Prog{\fmtvpw{chain $\mathbf{Y} = [\mathbf{Y}_1,\dots, \mathbf{Y}_d]\in \R^{M\times d}$, batch number for delta tuning $L$, initial grid size $s$}}{
$\Theta_0  = \{\alpha_1,\dots,\alpha_s\}\subseteq [-1,1]$ \tcp*{initial length-$s$ grid in $[-1,1]$}
\tcp{estimate diagonal elements of $\Sigma$}
\For{$i=1,\dots,d$}{
$r_i \gets$ \texttt{emp\_autocov}($\mathbf{Y}_i$)\tcp*{empirical autocov sequence of each $\mathbf{Y}_i$}
\tcp{momentLS asymptotic variance for $\mathbf{Y}_i$}
$\delta_i \gets$ \texttt{tune\_delta}($\mathbf{Y}_i,L$)\;
$\Sigma_{ii} \gets$ \texttt{asympVar\_u\_momentLS}($r_i, \delta_i,\Theta_0$)\;
}
\tcp{estimate off-diagonal elements of $\Sigma$}
\For{$i=1,\dots,d-1$}{
\For{$j=(i+1),\dots,d$}
{
$\delta_{ij} \gets \min\{\delta_i,\delta_j\}$\; 
$a \gets 1/\sqrt{r_i(0)}$; $b \gets 1/\sqrt{r_j(0)}$\ \tcp*{scale of $\mathbf{Y}_i,\mathbf{Y}_j$}
\tcp{empirical autocovariance for $a\mathbf{Y}_i+b\mathbf{Y}_j$, $a\mathbf{Y}_i-b\mathbf{Y}_j$}
$r_{ij,1} \gets$  \texttt{emp\_autocov}($a\mathbf{Y}_i+b\mathbf{Y}_j$)\; $r_{ij,2} \gets$  \texttt{emp\_autocov}($a\mathbf{Y}_i-b\mathbf{Y}_j$)\; 
\tcp{momentLS asymptotic variance for $a\mathbf{Y}_i+b\mathbf{Y}_j$, $a\mathbf{Y}_i-b\mathbf{Y}_j$}
$v_{ij,1} \gets$\texttt{asympVar\_u\_momentLS}($r_{ij,1}, \delta_{ij},\Theta_0$)\;
$v_{ij,2} \gets$ \texttt{asympVar\_u\_momentLS}($r_{ij,2}, \delta_{ij},\Theta_0$)\;
$\Sigma_{ij} \gets (v_{ij,1} -v_{ij,2})/(4ab)$; $\Sigma_{ji} \gets \Sigma_{ij}$ 
}}
\tcp{if $\Sigma$ is not psd, compute psd refinement of $\Sigma$}
$(\mathbf{U},\mathbf{\Lambda}) \gets$ \texttt{eigen}($\Sigma$)\tcp*{eigenvalue decomposition to obtain $\mathbf{U}\in \R^{d\times d}$= eigenvectors, $\mathbf{\Lambda} \in \R^d$= eigenvalues of $\Sigma$}
\If{$\min(\Lambda) <0$ }{
$\mathbf{Y}_D \gets \mathbf{Y}\mathbf{U}$\;
\For{$i = 1,\dots,d$}{
$r_{Di} \gets$ \texttt{emp\_autocov}($(\mathbf{Y}_D)_i$)\;
$\lambda_i \gets$ \texttt{asympVar\_u\_momentLS}($r_{Di},\min_{1\le i \le d}\delta_i,\Theta_0$)
}
$\Sigma  \gets$ $\mathbf{U}\mbox{diag}(\lambda_1,\dots,\lambda_d) \mathbf{U}^\top$
}
\KwRet  $\Sigma$
}
\caption{multivariate MomentLS estimator}\label{alg:mtvMomentLS}
\end{algorithm}

\begin{algorithm}
\SetAlgoNoEnd
  \setstretch{1}
  \SetKwProg{Prog}{Function}{}{}
  \SetKwFunction{fempautocov}{emp\_autocov}
  \Prog{\fempautocov{chain $Y \in \R^M$}}{

  $\tilde{Y} = Y - ( M^{-1}\sum_{t=0}^{M-1} Y_{t})\mathbf{1}_M$ \tcp{centered $Y_t$}
  
  \For{$k=0,\dots,M-1$}{$r(k) \gets M^{-1}\sum_{t=0}^{M-1} \tilde{Y}_t\tilde{Y}_{t+k}$ \tcp{lag $k$ empirical autocov}}
  
  \KwRet empirical autocovariance sequence $r$\;
  }
  \SetKwFunction{ftuned}{tune\_delta}
  \Prog{\ftuned{chain $Y \in \R^M$,batch number $L$} }{
   $\tilde{Y} = Y - ( M^{-1}\sum_{t=0}^{M-1} Y_{t})\mathbf{1}_M$ \;
  $B\gets \lfloor M / L\rfloor$\;
  \tcp{compute autocov sequence for each split of $Y$}
  \For{l = $1,...,L$}{
  \For{$k = 1,\dots,B-1$}{
 \If{$l=1$}{$\tilde{r}^{(l)}(k) \gets \frac{1}{B} \sum_{t=0}^{B-1-k} \tilde{Y}_t\tilde{Y}_{t+k}$}
 \Else{
 $\tilde{r}^{(l)}(k) \gets\frac{1}{B} \sum_{t=(l-1) B-k}^{l B-1-k} \tilde{Y}_t\tilde{Y}_{t+k}$
 }
  }}
  \tcp{estimate delta value for each split}
\For{$l=1,\dots,L$}{
$m^{(l)}\gets \min \left\{t \in 2 \mathbb{N} ; \tilde{r}^{(l)}(t+2) \leq 0\right\}$\;
$\delta^{(l)}=\max \{1-\exp \{-\log (B) /(2 m^{(l)})\}, 1 / B\}$
}
  \KwRet 0.8$\times$(average delta): $0.8(L^{-1}\sum_{l=1}^L \delta^{(l)})$
  }
  \SetKwFunction{fumomentLS}{asympVar\_u\_momentLS}
  \Prog{\fumomentLS{input autocov sequence $r$, tuning param $\delta$, initial grid $\Theta_0$, grid length $s_0=1001$}}{
  \If{initial grid $\Theta_0$ is not \texttt{NULL}}{
  obtain $\Theta \subseteq \Theta_0$ such that $\Theta \subseteq [-1+\delta, 1-\delta]$
  }
  \Else{create a length $s_0$ grid $\Theta$ in $[-1+\delta,1-\delta]$}
  $s \gets |\Theta|$ \tcp*{size of grid $\Theta$}
  \tcp{solve a constrained optimization problem}
  \For{$i=1,\dots,s$}{
  $\alpha_i \gets \Theta[i]$ \tcp*{$i$th element of grid $\Theta$}
  $\mathbf{a}_i \gets \sum_{k ; r(k) \neq 0} \alpha_i^{|k|} r(k)$\;
  \For{$j=1,\dots,s$}{
  $\mathbf{B}_{i j}\gets \frac{1+\alpha_i \alpha_j}{1-\alpha_i \alpha_j}$
  }
  }

  $\hat{\mathbf{w}} \gets \arg\min _{\mathbf{w};  \mathbf{w} \geq 0} r^{\top} r-2 \mathbf{a}^{\top} \mathbf{w}+\mathbf{w}^{\top} \mathbf{B} \mathbf{w}$\;
  \tcp{support and weight of representing measure of moment LS estimator}
  $S \gets \{\alpha_i\in \Theta; \hat{\mathbf{w}}_i>0\}$; $\hat{\mathbf{w}} \gets \{\hat{\mathbf{w}}_i; \hat{\mathbf{w}}_i>0\}$ \;
  \tcp{compute asymptotic variance from $(S,\hat{\mathbf{w}})$}
  $\sigma^2 \gets \sum_{i=1 }^{|S|} \hat{\mathbf{w}}_i\frac{1+\alpha_i}{1-\alpha_i}$\;
  \KwRet $\sigma^2$
  }
  
  \caption{functions related to univariate moment LS estimator}\label{alg:uMomentLS}
\end{algorithm}

\newpage

\section{Ground Truth Values}\label{sec:oracle_values}


\paragraph{Estimated posterior means:}
\,
\begin{table}[H]
\centering
\small
\begin{tabular}{lcccccc}
  \hline
Sampler & Intercept & mcv & alkphos & sgpt & sgot & gammagt \\ 
  \hline
RW-metrop & -0.200446 & 0.497093 & -0.062117 & -0.375257 & 0.567321 & 0.418176 \\ 
  NUTS & -0.200386 & 0.497100 & -0.062089 & -0.375296 & 0.567382 & 0.418184 \\ 
  PG & -0.200385 & 0.497096 & -0.062095 & -0.375287 & 0.567365 & 0.418167 \\ 
   \hline
\end{tabular}
\caption{Estimated ground truth posterior mean of each coefficient for the Bayesian logistic regression example for the three samplers from $10000$ independent chains of length $40000$ (after burn-in periods of length $5000$). Each sampler targets the same posterior, so the true posterior means are identical.}\label{tab:trueMean}
\end{table}

\paragraph{Estimated ground truth asymptotic variance matrices:}
\,
\begin{table}[H]
\centering
\small
\setlength{\tabcolsep}{2pt}
\begin{subtable}{.475\textwidth}
\centering
\caption{RW-metrop}
\begin{tabular}{c|cccccc}
  \hline
 & 1 & 2 & 3 & 4 & 5 & 6 \\ 
   \hline
1 & 0.274 & -0.004 & -0.004 & -0.009 & 0.016 & 0.037 \\ 
  2 & -0.004 & 0.341 & -0.007 & -0.028 & 0.015 & -0.025 \\ 
  3 & -0.004 & -0.007 & 0.282 & 0.016 & -0.048 & -0.036 \\ 
  4 & -0.009 & -0.028 & 0.016 & 0.773 & -0.497 & -0.205 \\ 
  5 & 0.016 & 0.015 & -0.048 & -0.497 & 0.797 & -0.098 \\ 
  6 & 0.037 & -0.025 & -0.036 & -0.205 & -0.098 & 0.598 \\ 
   \hline
\end{tabular}
\end{subtable}
\begin{subtable}{.475\textwidth}\caption{RW-metrop (one chain)}
\centering
\begin{tabular}{c|cccccc}
  \hline
  & 1 & 2 & 3 & 4 & 5 & 6 \\ 
  \hline
1 & 0.278 & -0.020 & -0.002 & 0.027 & -0.002 & 0.043 \\ 
  2 & -0.020 & 0.390 & -0.007 & -0.030 & 0.061 & -0.047 \\ 
  3 & -0.002 & -0.007 & 0.267 & 0.005 & -0.048 & -0.025 \\ 
  4 & 0.027 & -0.030 & 0.005 & 0.725 & -0.501 & -0.145 \\ 
  5 & -0.002 & 0.061 & -0.048 & -0.501 & 0.805 & -0.084 \\ 
  6 & 0.043 & -0.047 & -0.025 & -0.145 & -0.084 & 0.544 \\ 
   \hline
\end{tabular}
\end{subtable}

\begin{subtable}{.475\textwidth}
\centering
\caption{NUTS}
\begin{tabular}{c|cccccc}
  \hline
 & 1 & 2 & 3 & 4 & 5 & 6 \\ 
  \hline
1 & 0.012 & -0.001 & -0.000 & -0.003 & 0.002 & 0.003 \\ 
  2 & -0.001 & 0.014 & -0.000 & -0.001 & 0.002 & -0.002 \\ 
  3 & -0.000 & -0.000 & 0.011 & 0.004 & -0.005 & -0.003 \\ 
  4 & -0.003 & -0.001 & 0.004 & 0.056 & -0.045 & -0.014 \\ 
  5 & 0.002 & 0.002 & -0.005 & -0.045 & 0.056 & 0.001 \\ 
  6 & 0.003 & -0.002 & -0.003 & -0.014 & 0.001 & 0.028 \\ 
   \hline
\end{tabular}
\end{subtable}
\begin{subtable}{.475\textwidth}\caption{NUTS (one chain)}
\centering
\begin{tabular}{c|cccccc}
  \hline
 & 1 & 2 & 3 & 4 & 5 & 6 \\ 
  \hline
1 & 0.011 & -0.001 & -0.000 & -0.003 & 0.002 & 0.003 \\ 
  2 & -0.001 & 0.014 & -0.000 & -0.001 & 0.002 & -0.003 \\ 
  3 & -0.000 & -0.000 & 0.012 & 0.004 & -0.005 & -0.003 \\ 
  4 & -0.003 & -0.001 & 0.004 & 0.055 & -0.044 & -0.015 \\ 
  5 & 0.002 & 0.002 & -0.005 & -0.044 & 0.055 & 0.001 \\ 
  6 & 0.003 & -0.003 & -0.003 & -0.015 & 0.001 & 0.029 \\ 
   \hline
\end{tabular}
\end{subtable}
\begin{subtable}{.475\textwidth}
\centering
\caption{PG}
\begin{tabular}{c|cccccc}
  \hline
 & 1 & 2 & 3 & 4 & 5 & 6 \\ 
  \hline
1 & 0.017 & -0.001 & -0.000 & -0.002 & 0.002 & 0.006 \\ 
  2 & -0.001 & 0.024 & 0.000 & -0.003 & 0.005 & 0.000 \\ 
  3 & -0.000 & 0.000 & 0.016 & 0.002 & -0.003 & -0.003 \\ 
  4 & -0.002 & -0.003 & 0.002 & 0.055 & -0.037 & -0.019 \\ 
  5 & 0.002 & 0.005 & -0.003 & -0.037 & 0.058 & -0.001 \\ 
  6 & 0.006 & 0.000 & -0.003 & -0.019 & -0.001 & 0.050 \\ 
   \hline
\end{tabular}
\end{subtable}
\begin{subtable}{.475\textwidth}\caption{PG (one chain)}
\centering
\begin{tabular}{c|cccccc}
  \hline
 & 1 & 2 & 3 & 4 & 5 & 6 \\ 
   \hline
1 & 0.017 & -0.001 & -0.000 & -0.001 & 0.002 & 0.006 \\ 
  2 & -0.001 & 0.024 & 0.000 & -0.003 & 0.005 & 0.000 \\ 
  3 & -0.000 & 0.000 & 0.016 & 0.001 & -0.003 & -0.003 \\ 
  4 & -0.001 & -0.003 & 0.001 & 0.055 & -0.037 & -0.019 \\ 
  5 & 0.002 & 0.005 & -0.003 & -0.037 & 0.058 & -0.002 \\ 
  6 & 0.006 & 0.000 & -0.003 & -0.019 & -0.002 & 0.050 \\ 
   \hline
\end{tabular}
\end{subtable}
\caption{Tables (a), (c), and (e) show estimated ground truth asymptotic variance matrices for the three samplers from $10000$ independent chains of length $40000$ (after $5000$ burn-in). Tables (b), (d), and (f) show a single realization of the momentLS estimate from a length $M=40000$ sample. Each chain targets the same posterior. However, the different sampling mechanisms lead to different autocovariance sequences, and hence different values of $\Sigma$.}\label{tab:avarComp}

\end{table}

\iftwofiles\putbib
\end{bibunit}\else
\bibliographystyle{plainnat}
\bibliography{bib}
\fi
\fi
\end{document}